\def\@fnsymbol#1{\ensuremath{\ifcase#1\or * \or \ddagger\or
   ~ \or \mathparagraph\or \|\or **\or \dagger\dagger
   \or \ddagger\ddagger \else\@ctrerr\fi}}
\newcommand{\N}{\mathbb{N}}
\newcommand{\R}{\mathbb{R}}
\newcommand{\fL}{\mathcal{L}} %
\newcommand{\fO}{\mathcal{O}} %
\newcommand{\fP}{\mathcal{P}} %
\newcommand{\fQ}{\mathcal{Q}} %
\newcommand{\fR}{\mathcal{R}} %
\newcommand{\fS}{\mathcal{S}}
\newcommand{\rD}{\mathrm{D}}
\newcommand{\rS}{\mathrm{S}}
\newcommand{\interior}{\mathrm{interior}}
\newcommand{\rect}{\mathrm{rect}}
\newcommand{\ceil}[1]{\lceil #1 \rceil}
\newtheorem{lemma}{Lemma} %
\newtheorem{proposition}{Proposition} %
\newtheorem{corollary}{Corollary}
\theoremstyle{definition}
\newtheorem{openQuestion}{Open question}
\title{{Geometric group testing}}
\author{Benjamin Aram Berendsohn\thanks{Institut f\"ur Informatik, Freie Universit\"at Berlin, \texttt{beab@zedat.fu-berlin.de}} \footnotemark[3] \and L\'aszl\'o Kozma\thanks{Institut f\"ur Informatik, Freie Universit\"at Berlin, \texttt{laszlo.kozma@fu-berlin.de}} \footnote{Work supported by DFG grant KO 6140/1-1.}}%
\begin{document}
\date{}
	\maketitle
	
	\begin{abstract}
		Group testing is concerned with identifying $t$ defective items in a set of $m$ items, where each test reports whether a specific subset of items contains at least one defective. In \emph{non-adaptive} group testing, the subsets to be tested are fixed in advance. By testing multiple items at once, the required number of tests can be significantly smaller than $m$. In fact, for $t \in \fO(1)$, the optimal number of (non-adaptive) tests is known to be $\Theta(\log{m})$.
		
		In this paper, we consider the problem of non-adaptive group testing in a \emph{geometric setting}, where the items are points in $d$-dimensional Euclidean space and the tests are axis-parallel boxes (hyperrectangles).  
		We present upper and lower bounds on the required number of tests under this geometric constraint. In contrast to the general, combinatorial case, the bounds in our geometric setting are polynomial in $m$. For instance, our results imply that identifying a defective pair in a set of $m$ points in the plane always requires $\Omega(m^{3/5})$ tests, and there exist configurations of $m$ points for which $\fO(m^{2/3})$ tests are sufficient, whereas to identify a single defective point in the plane, $\Theta(m^{1/2})$ tests are always necessary and sometimes sufficient. 
	\end{abstract}
	
	\section{Introduction}\label{sec:intro}

Group testing is often introduced through the following familiar example. Given a collection of $m$ lightbulbs, one of which is defective, the task is to identify, with as few tests as possible, the defective lightbulb. A test consists of connecting a number of lightbulbs in series with a power source, thereby detecting whether the defective lightbulb is in the tested group. The na\"ive method of separately testing each lightbulb requires $m$ tests. By testing multiple items at once, we may be able to identify the defective with significantly fewer tests.

More generally, consider a collection of $m$ objects, of which $t$ are defective, where $t$ is known. The goal is to identify the $t$ defective objects with as few tests as possible. A test is performed on a selected subset of objects. The outcome of a test is \emph{positive} if at least one of the selected objects is defective, and \emph{negative} otherwise. In \emph{non-adaptive} group testing, the tests to be performed are chosen in advance, %
without access to the results of other tests. %
In this paper we only concern ourselves with non-adaptive testing.\footnote{A natural, closely related variant of the problem is where the number $t$ of defectives is not known, and only an upper bound $t' \geq t$ is given. We discuss this setting in Appendix~\ref{appb}.}

Group testing was introduced by Dorfman~\cite{dorfman1943} in the 1940s with medical applications in sight (e.g.\ pooling together multiple blood samples when testing for infections). Several aspects of group testing have since been thoroughly studied, and group testing has been applied in various fields (we refer to the comprehensive textbook of Du and Hwang~\cite{DuHwang1993} and the recent survey of Aldridge, Johnson, and Scarlett~\cite{AJS}). Non-adaptive group testing has close connections with error-correcting-codes~\cite{Indyk10, Porat08}, combinatorial designs~\cite[\S\,11]{Design_book}, compressive sensing~\cite{gilbert2008group, Gilbert10, Gilbert12, Ngo12, emad14}, streaming algorithms~\cite[\S\,6.1]{Muthu05}, \cite{Cormode05}, inference and learning~\cite{Ubaru, Zhou, malioutov13,simsearch}.

It is well-known that to identify $t \in \fO(1)$ defective items in a set of $m$ items, $\Theta(\log{m})$ non-adaptive tests are necessary and sufficient (see~\cite[Theorems 7.2.12 and 7.2.15]{DuHwang1993} for bounds with explicit dependence on $t$). For this result to hold, it is assumed that arbitrary subsets of the items can be selected as tests. In some applications this assumption may not be realistic. For instance, in the example mentioned in the beginning, it may be the case that all lightbulbs are initially connected in series and during testing we can not change the wiring, apart from connecting a power source at two arbitrary points. In this case, tests are in effect restricted to groups of items contained in contiguous intervals.

In this paper we study non-adaptive group testing under more general geometric constraints. Our items to be tested are assumed to be points in $d$-dimensional Euclidean space, and our tests are subsets induced by $d$-dimensional axis-parallel boxes (i.e.\ hyperrectangles). Point sets in high-dimensional space are commonly used to model databases with multivariate numerical data, and axis-parallel boxes naturally map to \emph{orthogonal range queries} (i.e.\ queries where each numerical parameter is restricted to some interval).

We give upper and lower bounds on the number of tests necessary to detect defectives in this geometric setting, and observe that these bounds are polynomial in $m$, in stark contrast to the general case.
For convenience, we state and prove our results in an equivalent setting, where the number of tests ($n$) is fixed  and the number of items ($m$) is to be \emph{maximized}. Upper bounds on $m$ hold for arbitrary configurations of points and tests, and thus show the inherent limitations of group-testing with orthogonal range queries. 

Lower bounds on $m$ are concrete feasible configurations of points and tests in which group testing can be performed efficiently. %
In general, even improving upon the na\"ive ``one test per item'' turns out to be a non-trivial task, and we consider finding optimal configurations an interesting extremal geometric question. Note that if the placement of points is \emph{adversarial}, then only the trivial lower bound $m = \Omega(n)$ is possible, even in the case of a single defective point. (A possible worst-case construction is to place all $m$ points on an axis-parallel line in $\R^d$, in which case $n$ tests can handle only $O(n)$ points.)

With more benign point placements, however, asymptotic improvements over the worst-case are possible. %
We illustrate this with a simple example: consider $n^2$ points in the $2$-dimensional plane, with a single defective, i.e.\ $t=1$. Place the points on the $n \times n$ grid, and place $2n$ rectangle-tests, such as to cover each row and column of the grid by a unique rectangle (see \Cref{fig:long_rect_step}~(a)). Observe that exactly two rectangle-tests will evaluate positive, identifying the defective item by its two coordinates. We argue later (Proposition~\ref{p:rect-trivial-upper-bound}) that this configuration is essentially optimal. Observe however, that the same configuration would not be able to identify \emph{two} defective items, unless they were on the same row or column of the grid.

Configurations of items and tests can be described as set systems.\footnote{In the group testing literature an equivalent matrix notation is often used.} Two properties of set systems are central in non-adaptive group testing: \emph{$t$-separability} and \emph{$t$-disjunctness}. %

	\subsection{\texorpdfstring{$t$-separable and $t$-disjunct}{t-separable and t-disjunct} set systems}
	\label{sec11}
	Let $(X,\fS)$ be a set system, where $X$ is a finite set and $\fS \subseteq 2^X$. We sometimes refer to elements of $X$ as \emph{items} and to elements of $\fS$ as \emph{tests}. For $x \in X$, let $\fS[x] = \{ S \subseteq \fS \mid {x \in S} \}$ denote the set of tests that contain $x$. For $Y \subseteq X$, let $\fS[Y] = \bigcup_{y \in Y} \fS[y]$ denote the set of tests that contain at least one element of $Y$. Let $t \in \N_+$ and assume $|X| > t$. Then
	\begin{enumerate}[(a)]
		\itemsep0em
		\item $(X, \fS)$ is called \emph{$t$-separable} if there are no two distinct $Y, Z \subseteq X$ such that $|Y| = |Z| = t$ and $\fS[Y] = \fS[Z]$, and
		\item $(X, \fS)$ is called \emph{$t$-disjunct} if there is no $Y \subseteq X$ and $x \in X \setminus Y$ such that $|Y| = t$ and $\fS[x] \subseteq \fS[Y]$.
	\end{enumerate}

Intuitively, $\fS[Y]$ and $\fS[Z]$ are the collections of tests that are positive if $Y$, respectively $Z$, is the set of defective items. The set system is a valid configuration of tests if and only if it can distinguish the events ``$Y$ is the defective set'' and ``$Z$ is the defective set'' for all distinct size-$t$ sets $Y,Z \subseteq X$, as captured by the first definition. Thus, $t$-separability is \emph{necessary and sufficient} for non-adaptive group testing with $t$ defectives.

Observe that $t$-separability only guarantees that the defective set \emph{can be inferred} from the test results, but does not imply an efficient algorithm to do so. 
A na\"ive approach is to check, for all $\binom{|X|}{t}$ size-$t$ subsets of items, whether they are consistent with the test outcomes. We are not aware of significantly faster algorithms to determine the defective set with an arbitrary $t$-separable set system.

The stronger property of $t$-disjunctness is \emph{sufficient} for an efficient algorithm. Intuitively, $t$-disjunctness guarantees that every non-defective item appears in some test that contains none of the $t$ defectives. We can therefore simply discard all items that appear in at least one negative test, so that only the $t$ defectives remain~\cite[\S\,7.1]{DuHwang1993}. 

In the following, we ignore algorithmic aspects of group testing and focus on studying $t$-separable and $t$-disjunct set systems induced by geometric ranges. We first state the following simple facts:

	\begin{lemma}[Du and Hwang \cite{DuHwang1993}] \label{p:sep-basic-props}
		For each set system $(X, \fS)$ and each $t \ge 1$,
		\begin{align*}
		& (t+1)\text{-separable} \implies t\text{-separable}, \text{ and}\\
		& (t+1)\text{-disjunct} \implies t\text{-disjunct}  \implies t\text{-separable}.
		\end{align*}
	\end{lemma}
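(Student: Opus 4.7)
The plan is to handle each of the three implications separately, in all cases by contrapositive, leaning throughout on the monotonicity $Y \subseteq Y' \implies \fS[Y] \subseteq \fS[Y']$ that follows immediately from the defining identity $\fS[Y] = \bigcup_{y \in Y} \fS[y]$.

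First, for $t$-disjunct $\implies$ $t$-separable, I would assume that $t$-separability fails and fix distinct size-$t$ sets $Y,Z$ with $\fS[Y] = \fS[Z]$. Picking any $y \in Y \setminus Z$ (nonempty because $|Y|=|Z|$ but $Y \neq Z$), the inclusion $\fS[y] \subseteq \fS[Y] = \fS[Z]$ together with $y \notin Z$ is already a witness that $t$-disjunctness fails.

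Next, for $(t+1)$-disjunct $\implies$ $t$-disjunct the strategy is symmetric: if $t$-disjunctness fails, a size-$t$ set $Y$ and an item $x \notin Y$ satisfy $\fS[x] \subseteq \fS[Y]$. I would pad $Y$ with an arbitrary $w \in X \setminus (Y \cup \{x\})$, which exists because $|X| > t+1$; then $Y' = Y \cup \{w\}$ has size $t+1$, still excludes $x$, and still satisfies $\fS[x] \subseteq \fS[Y']$ by monotonicity, contradicting $(t+1)$-disjunctness.

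The main obstacle is the remaining implication $(t+1)$-separable $\implies$ $t$-separable, where I have to manufacture two distinct size-$(t+1)$ witnesses rather than just one. Starting from distinct size-$t$ sets $Y \neq Z$ with $\fS[Y] = \fS[Z]$, I would split on $|Y \cap Z|$. When $|Y \setminus Z| \geq 2$ I take $Y' = Y \cup \{z\}$ and $Z' = Z \cup \{y\}$ for some $z \in Z \setminus Y$ and $y \in Y \setminus Z$; by monotonicity both $\fS$-images collapse to $\fS[Y] = \fS[Z]$, and the two sets remain distinct because $Y' \cap (Y \setminus Z) = Y \setminus Z$ has size at least two while $Z' \cap (Y \setminus Z) = \{y\}$ has size exactly one. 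In the degenerate case $|Y \cap Z| = t-1$, when $Y$ and $Z$ differ in a single element, no such internal swap works, so I would instead adjoin a common fresh element $w \in X \setminus (Y \cup Z)$; its existence is exactly where the hypothesis $|X| > t+1$ is used, since $|Y \cup Z| = t+1$. The two size-$(t+1)$ sets $Y \cup \{w\}$ and $Z \cup \{w\}$ are then distinct and have equal $\fS$-images via $\fS[Y] \cup \fS[w] = \fS[Z] \cup \fS[w]$, closing the contrapositive. The subtlety that forces the case split is that a uniform construction padding both sides by the same fresh $w$ need not work in general, because $|Y \cup Z|$ can be as large as $2t$ and may leave no room outside it.
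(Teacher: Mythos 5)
Your proposal is correct. The paper itself gives no proof of this lemma --- it is stated as a citation to Du and Hwang --- so there is no in-paper argument to compare against; your write-up supplies the standard argument in full. All three implications check out: the monotonicity $\fS[Y] \subseteq \fS[Y']$ for $Y \subseteq Y'$ is indeed the only tool needed, and the one genuinely delicate point, namely that $(t+1)$-separability concerns sets of size \emph{exactly} $t+1$ so that a failure of $t$-separability must be inflated to two \emph{distinct} $(t+1)$-sets with equal images, is handled properly by your case split on $|Y \setminus Z|$. Your observation that the uniform ``add a fresh common element $w$'' trick can fail when $|Y \cup Z| = 2t$ exhausts $X$ is exactly right, as is the use of the standing assumption $|X| > t+1$ (implicit in $(t+1)$-separability and $(t+1)$-disjunctness even being defined) in the cases where a fresh element is required.
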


		Let $X \subseteq \R^d$ be a finite set of points and let $\fS$ %
be a finite set of geometric shapes in $\R^d$. We call $(X, \fS')$ the set system \emph{induced by $(X, \fS)$}, where $\fS' = \{ R \cap X \mid R \in \fS \}$. %
We use $(X, \fS)$ and $(X, \fS')$ interchangeably, observing that the definitions of $t$-separability and $t$-disjunctness can be equivalently applied to $(X, \fS)$ instead of $(X, \fS')$. %
	Given $d \ge 1$, an \emph{(axis-parallel) $d$-rectangle} (also referred to as box or hyperrectangle) is the cartesian product of $d$ closed intervals. %
We denote the set of all $d$-rectangles in $\R^d$ by $\fR_d$. %

	We wish to determine, given $n$ and $t$, the maximum $m$, such that there exists a $t$-separable (resp.\ $t$-disjunct) set system $(X,\fS)$ with $|X| = m$ and $|\fS| = n$.
Define $\rS^d_t(n)$ to be the maximum $m$ for which there exist a set $X \subseteq \R^d$ of size $m$ and a set $\fS \subseteq \fR_d$ of size $n$ such that $(X, \fS)$ is $t$-separable. In other words, $\rS^d_t(n)$ is the maximum number of points in $\R^d$, among which we can identify $t$ defectives, using $n$ rectangle-tests. Define $\rD_t^d(n)$ accordingly for $t$-disjunctness. Our goal is to understand the asymptotic behaviour of $\rS^d_t(n)$ and $\rD^d_t(n)$. 

An easy observation is that both $\rS^d_t(n)$ and $\rD^d_t(n)$ are monotone in both $t$ and $d$. 

	\begin{proposition}\label{p:monotone}

For all $t,d \in \N_{+}$,
\begin{eqnarray*}
\rS^d_t(n) \geq \rS^d_{t+1}(n),&\mbox{~and~~} & \rS^d_t(n) \leq \rS^{d+1}_t(n),\\
\rD^d_t(n) \geq \rD^d_{t+1}(n),&\mbox{~and~~} & \rD^d_t(n) \leq \rD^{d+1}_t(n).
\end{eqnarray*}
\end{proposition}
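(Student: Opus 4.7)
The plan is to handle the two types of monotonicity separately. The inequalities in the first column (monotonicity in $t$) follow immediately from Lemma~\ref{p:sep-basic-props}: if $(X, \fS)$ is a witness configuration for $\rS^d_{t+1}(n)$ (so $|X| = \rS^d_{t+1}(n)$, $|\fS| = n$, and $(X, \fS)$ is $(t+1)$-separable), then by Lemma~\ref{p:sep-basic-props} it is also $t$-separable, so the same configuration is a valid witness for $\rS^d_{t}(n)$. The identical argument applied with ``disjunct'' in place of ``separable'' yields the corresponding inequality for $\rD$.

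For the inequalities in the second column (monotonicity in $d$), I would use a trivial embedding of $\R^d$ into $\R^{d+1}$. Fix an optimal configuration $(X, \fS)$ with $X \subseteq \R^d$, $\fS \subseteq \fR_d$, $|X| = \rS^d_t(n)$, $|\fS| = n$. Define $\phi\colon \R^d \to \R^{d+1}$ by appending a $0$-coordinate, i.e.\ $\phi(x_1,\ldots,x_d) = (x_1,\ldots,x_d,0)$, and let $X' = \phi(X)$. For each rectangle $R = I_1 \times \cdots \times I_d \in \fS$, let $R' = I_1 \times \cdots \times I_d \times [0,0] \in \fR_{d+1}$, and set $\fS' = \{R' : R \in \fS\}$. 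The map $\phi$ is injective (so $|X'| = |X|$), the map $R \mapsto R'$ is injective (so $|\fS'| = |\fS| = n$), and crucially $\phi(x) \in R'$ iff $x \in R$. Hence the set system induced by $(X', \fS')$ in $\R^{d+1}$ is isomorphic (as an abstract set system) to the one induced by $(X, \fS)$ in $\R^d$. Since $t$-separability and $t$-disjunctness depend only on the abstract set system, $(X', \fS')$ inherits whichever property $(X, \fS)$ had, establishing $\rS^{d+1}_t(n) \geq \rS^d_t(n)$ and $\rD^{d+1}_t(n) \geq \rD^d_t(n)$.

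There is no real obstacle here: the proposition is a formal sanity check. The only point requiring a moment's care is verifying that the $(d+1)$-dimensional lift really does preserve the induced incidence structure (and in particular both $|X|$ and $|\fS|$), which is why I make the embedding completely explicit with the degenerate interval $[0,0]$ rather than, say, $\R$; either choice works, but the explicit form makes it immediate that the rectangle $R'$ lies in $\fR_{d+1}$ and contains exactly the lifted points of $R \cap X$.
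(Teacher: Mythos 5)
Your proof is correct and follows the same route as the paper, which simply notes that monotonicity in $t$ is immediate from Lemma~\ref{p:sep-basic-props} and that monotonicity in $d$ follows by embedding a $d$-dimensional configuration into $\R^{d+1}$. Your explicit lift with the degenerate interval $[0,0]$ is a valid instantiation of that embedding (the paper elsewhere explicitly allows degenerate rectangles), so there is nothing to add.
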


Monotonicity in $t$ follows from Lemma~\ref{p:sep-basic-props}, and monotonicity in $d$ follows from the fact that a $d$-dimensional configuration can always be embedded in $\R^{d+1}$.

Recall that in the general, combinatorial case, $n$ tests can handle $2^{\Theta(n)}$ items, assuming that the number of defectives $t$ is constant~\cite[\S\,7]{DuHwang1993}. %
The next upper bound shows that restricting the tests to $d$-rectangles changes the situation considerably, even when $t=1$. %

	\begin{proposition}\label{p:rect-trivial-upper-bound}
		For all $d \in \N_{+}$, ~~~$\rS_1^d(n) \le (2n-1)^d$.

	\end{proposition}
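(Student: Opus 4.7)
The plan is a pigeonhole argument over an axis-aligned grid induced by the boxes in $\fS$. For each axis $j \in \{1, \dots, d\}$, projecting the $n$ boxes onto the $j$-th axis yields $n$ closed intervals whose endpoints contribute at most $2n$ distinct coordinate values. These split the $j$-th axis into at most $2n-1$ bounded open intervals between consecutive endpoints, plus two unbounded intervals at the extremes. Taking the product over the $d$ axes partitions $\R^d$ into a grid containing at most $(2n-1)^d$ cells that are bounded in every dimension, plus additional unbounded cells.

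The crucial observation is that the signature $\fS[x]$ is constant within each grid cell: membership of $x$ in a box $R_i = \prod_{j=1}^d [a_{i,j},b_{i,j}]$ is decided coordinate-by-coordinate, and this decision is fixed across any single cell of the grid. By $1$-separability, two distinct points of $X$ cannot lie in the same cell, so each bounded cell contains at most one point of $X$. Furthermore, any point $x$ with some coordinate $x_j$ in an unbounded axis interval lies outside every box and so carries the empty signature; all such exterior points share a single signature, and thus at most one of them can belong to $X$.

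Combining the two facts bounds $|X|$ by the number of fully bounded cells, at most $(2n-1)^d$, plus at most one exterior point. Reconciling this with the stated bound $(2n-1)^d$ requires only a minor accounting argument for the empty signature: if some bounded cell already carries it, the exterior point cannot be added without duplication, while otherwise the empty-signature ``outside'' can be folded into the bounded cell count directly. I expect this endpoint bookkeeping to be the only mildly delicate part; the grid pigeonhole itself is routine and already exposes why axis-parallel-box tests admit only polynomial rather than exponential scaling in $n$.
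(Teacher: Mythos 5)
You take the same route as the paper's proof: the at most $2n$ boundary coordinates per axis induce a hypergrid, the signature $\fS[x]$ is constant on each open cell, and $1$-separability forces at most one point of $X$ per cell. (Like the paper, you should invoke the general-position perturbation so that no point of $X$ lies on a cell boundary; otherwise the open cells do not cover $X$.) Where you genuinely go beyond the paper is in noticing the unbounded cells, which the paper's proof silently ignores by treating every point of $X$ as lying in the bounded part of the grid.

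Unfortunately, the ``minor accounting argument'' you defer does not go through, and in fact cannot: your argument correctly yields $|X| \le (2n-1)^d + 1$, and that $+1$ is real. In the case where no bounded cell carries the empty signature, there is nothing to ``fold the outside into'': all $(2n-1)^d$ bounded cells may be occupied by points with pairwise distinct nonempty signatures, and one further exterior point with signature $\emptyset$ is still admissible. Concretely, for $d=1$ and $n=2$, take the tests $R_1=[0,2]$ and $R_2=[1,3]$ and the items $\{-1,\, 0.5,\, 1.5,\, 2.5\}$, whose signatures $\emptyset, \{R_1\}, \{R_1,R_2\}, \{R_2\}$ are pairwise distinct; this is $1$-separable with $4 > (2n-1)^d = 3$ items. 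So the literal bound $(2n-1)^d$ requires the (standard, but here unstated) convention that every item occurs in at least one test, which is exactly what confines every point to a bounded cell and is what the paper's own proof tacitly assumes. The discrepancy is an additive $1$ and is immaterial for the $\fO(n^d)$ conclusion the proposition is used for, but as a proof of the stated inequality your final bookkeeping step is the gap --- your instinct that it is the only delicate part was exactly right.
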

	\begin{proof}
		Let $(X,\fS)$ be a $1$-separable set system induced by $d$-rectangles, where $|\fS| = n$. In each dimension, the boundaries of the $d$-rectangles are defined by at most $2n$ distinct coordinates. These coordinates define a hypergrid with at most $(2n-1)^d$ cells. Assume $(X, \fS)$ to be in general position, so no point is on a cell boundary. Two arbitrary points (items) $x,y \in X$ cannot be in the same cell, as otherwise they would be contained in the same set of $d$-rectangles (tests), and $\fS[Y \cup \{x\}] = \fS[Y \cup \{y\}]$ for all $Y \subseteq X$. Thus, $|X| \le (2n-1)^d$.
	\end{proof}

	For arbitrary $d$, given a set $X$ of $n$ points in $\R^d$, we can choose $n$ disjoint $d$-rectangles so that each rectangle contains a single point of $X$ in its interior. This structure is clearly $t$-disjunct for all $t$. Together with \Cref{p:sep-basic-props} and Propositions~\ref{p:monotone} and \ref{p:rect-trivial-upper-bound}, we obtain:
	\begin{align*}
		n \le \rD_t^d(n) \le \rS_t^d(n) \in \fO(n^d).
	\end{align*}

	\subsection{Main results}
	
	Note that in the one-dimensional case ($d=1$), the trivial upper and lower bounds match. In the case of a single defective item ($t=1$), we show that the upper bound is tight for all $d$. 
	For all other cases, i.e.\ when $t,d \ge 2$, we show that both the trivial lower bound and the trivial upper bound can be improved. %
	
\paragraph{Lower bounds.}	In \Cref{sec:grid-lines}, we analyze a $d$-dimensional grid-line construction. Transforming it to a combinatorially equivalent set system that is induced by two-dimensional rectangles, we obtain the following results:
	\begin{restatable}{theorem}{restateGridLineLowerBounds}
		\label{p:grid-line-lower-bounds}
		$\rS^2_3(n) \in \Omega( n^{\sfrac{3}{2}} )$, and for even $t \ge 6$, $\rS^2_t(n) \in \Omega( n^{1 + \sfrac{2}{t}} )$.
	\end{restatable}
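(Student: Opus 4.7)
My approach is to construct a $t$-separable configuration in $d$-dimensional space using axis-parallel grid lines as tests, and then convert it into a combinatorially equivalent 2D system of rectangle tests. Let $X = [k]^d$ and let $\fG$ be the collection of all $n = d k^{d-1}$ axis-parallel grid lines. This yields $|X| = k^d = m$ and hence $m \in \Omega(n^{d/(d-1)})$. I choose $d = 3$ for $t = 3$ to get $m \in \Omega(n^{3/2})$, and $d = t/2 + 1$ for even $t \geq 6$ to get $m \in \Omega(n^{1+2/t})$.

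\textbf{Separability.} The set of grid-line tests containing a subset $Y \subseteq X$ is determined entirely by the $d$ axis-projections $\pi_i(Y) \subseteq [k]^{d-1}$ (for $i = 1, \ldots, d$), so $(X, \fG)$ is $t$-separable iff distinct $t$-subsets yield distinct projection tuples. For $t = 3, d = 3$, I case-analyze on the sizes $|\pi_i(Y)|$: when some $|\pi_i(Y)| = 3$, matching projections force $Y = Z$, after accounting for coordinate collisions in subsidiary cases; when all $|\pi_i(Y)| \leq 2$, a short counting argument on pairwise coordinate agreements in $Y$ rules this out entirely. For even $t \geq 6, d = t/2 + 1$, the chief potential obstruction is the parity-class configuration in a sub-grid $[2]^d$ (two parity classes of size $2^{d-1}$ sharing identical projections on every coordinate face); since $2^{d-1} > t$ for $d \geq 4$, these cannot fit within a $t$-subset, and other potential obstructions succumb to analogous case analysis.

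\textbf{Transformation to 2D.} For $d = 3$, map $(a, b, c) \in [k]^3$ to the planar point $(a + (b-1)k^2, a + (c-1)k^2)$. The direction-1 grid lines (varying $a$) become $k$-point diagonal clusters, each capturable by an axis-aligned bounding box; direction-2 and direction-3 grid lines become, respectively, horizontal and vertical segments, each capturable by a thin rectangle. A routine verification, exploiting the large scale $k^2$ so that different coordinate classes do not overlap in $u$ or $v$, confirms each rectangle contains exactly the intended $k$ points with no extras. For $d \geq 4$, extend this scheme via nested multi-scale perturbations on each planar axis so that the additional $d$-dim coordinates can be encoded without collision.

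\textbf{Main obstacle.} The main technical difficulty is the 2D embedding for $d \geq 4$. In the $d = 3$ case, one $d$-dim coordinate is naturally ``shared'' across both planar axes while the others each dedicate to one axis; for $d \geq 4$, several coordinates must share each planar axis, and the scales must be balanced across multiple nesting levels so that every grid-line orientation continues to correspond to a unique rectangle without extraneous incidences. Verifying $t$-separability in higher dimensions will also be delicate, requiring one to rule out subtler obstructions than the parity-class configurations.
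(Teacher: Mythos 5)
Your overall architecture matches the paper's exactly: the grid $[k]^d$ with all axis-parallel grid lines as tests, the parameter choices $d=3$ for $t=3$ and $d=t/2+1$ for even $t\ge 6$, and a radix-style embedding of the $d$-dimensional configuration into the plane so that each grid line becomes a bounding rectangle capturing no extraneous points. Your explicit $d=3$ map $(a,b,c)\mapsto(a+(b-1)k^2,\,a+(c-1)k^2)$ is correct and is a concrete instance of the paper's general encoding $f(x)=(\sum_i \varepsilon^{i-1}x_i,\sum_i\varepsilon^{d-i}x_i)$; the ``nested multi-scale'' extension you gesture at for $d\ge 4$ is exactly what that formula provides, so that part of your plan is sound even if unfinished.

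The genuine gap is the separability claim, which is the heart of the theorem. You assert that for $d=t/2+1$ the ``chief potential obstruction'' is the parity-class configuration in a subgrid $[2]^d$ (two classes of size $2^{d-1}$ with identical projections) and that, since $2^{d-1}>t$, this and ``other potential obstructions'' are handled by ``analogous case analysis.'' But the parity classes are not the binding obstruction. The grid-line system on $[k]^d$ is already \emph{not} $(2d-1)$-separable: take two points $x,y$ on a common grid line, place one extra point on each of the $2d-2$ remaining lines through $x$ or $y$ to form a set $Z$ of size $2d-2$; then $Z\cup\{x\}$ and $Z\cup\{y\}$ hit the same lines. This failure occurs at size $2d-1=t+1$, i.e.\ immediately above the value of $t$ you need, and is vastly smaller than $2^{d-1}$. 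Consequently, proving $(2d-2)$-separability requires an argument that is tight against this configuration; it cannot be dismissed as analogous to ruling out parity classes. The paper's proof of this step is a careful counting argument: for distinct $A,B$ with $\fL_d[A]=\fL_d[B]$ and a point $a_0\in A\setminus B$, it builds the $d(d-1)$ lines through the witnesses $b_1,\dots,b_d\in B$, classifies the points of $A$ by how many of these lines they can hit (at most $d-1$, $2$, or $1$), shows that if $|A|\le 2d-2$ then $A$ must contain all of $b_1,\dots,b_d$, and then derives a contradiction by symmetry from $|A'\cap B'|\ge 3$. None of this is present in, or implied by, your sketch; until you supply an argument of comparable strength, the theorem is not proved. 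The $t=3$, $d=3$ case you outline is closer to workable (the paper reduces it to hitting a $6$-cycle of lines with three points), but it too needs to be carried out rather than asserted.
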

	
	By monotonicity (\Cref{p:monotone}), it follows e.g.\ that $\rS^2_2(n) \in \Omega( n^{\sfrac{3}{2}} )$ and $\rS^2_4(n) \in \Omega( n^{\sfrac{4}{3}} )$.
 Bounds for odd $t$ also follow by monotonicity.
 	
	In \Cref{sec:hyperplanes}, we consider an arrangement of \emph{hyperplanes} on grid points, and again construct equivalent rectangle-induced set systems, obtaining:
	
	\begin{restatable}{theorem}{restateHyperplaneRectLowerBounds}
		\label{p:hyperplane-rect-lower-bounds}
		For constants $d, t \in \N_+$, $\rD_t^d(n) \in \Omega \left( n^{1 + \left\lfloor\frac{d-1}{t}\right\rfloor} \right).$
	\end{restatable}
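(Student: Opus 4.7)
The plan is to adapt a classical Reed--Solomon / Kautz--Singleton construction of disjunct matrices to the geometric setting by choosing a Euclidean embedding under which the natural ``fix one evaluation'' tests become axis-parallel hyperplanes and hence (degenerate) elements of $\fR_d$. The key algebraic fact to exploit is that a nonzero polynomial of degree at most $s$ over a field has at most $s$ roots.

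Concretely, I would set $s := \lfloor (d-1)/t \rfloor$ so that $ts \le d - 1$, pick a prime power $q = \Theta(n/d)$ (invoking Bertrand's postulate to ensure existence), and fix $d$ distinct evaluation points $\alpha_1, \dots, \alpha_d \in \mathbb{F}_q$. The $m := q^{s+1}$ items are the points $\phi(p) := (p(\alpha_1), \dots, p(\alpha_d)) \in \R^d$, where $p$ ranges over polynomials in $\mathbb{F}_q[X]$ of degree at most $s$; injectivity of $\phi$ is immediate from the root bound (two distinct such polynomials can agree on at most $s < d$ of the $\alpha_i$). The $n = dq$ tests are the axis-parallel hyperplanes $\{X_i = c\}$ for $i \in \{1,\dots,d\}$ and $c \in \mathbb{F}_q$, each realized as a (possibly infinitesimally thin) element $\R^{i-1} \times [c,c] \times \R^{d-i}$ of $\fR_d$ that contains $\phi(p)$ precisely when $p(\alpha_i) = c$.

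To verify $t$-disjunctness, I would pick an arbitrary item $\phi(p)$ and any $t$ distinct other items $\phi(p_1), \dots, \phi(p_t)$ and exhibit a test containing $\phi(p)$ but avoiding all $\phi(p_j)$. For each $j$, the polynomial $p - p_j$ is nonzero of degree at most $s$, so it vanishes on at most $s$ of the $\alpha_i$. Consequently, $\phi(p)$ agrees in coordinate with at least one of the $\phi(p_j)$ on at most $ts \le d-1$ of the indices, leaving some index $i^*$ on which $\phi(p)$ disagrees with every $\phi(p_j)$; the test $\{X_{i^*} = p(\alpha_{i^*})\}$ then does the job. A direct count yields $m = q^{s+1} = \Theta((n/d)^{s+1}) \in \Omega(n^{1 + \lfloor(d-1)/t\rfloor})$ for constants $d$ and $t$.

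I expect the argument to be largely mechanical once the embedding is in place, with the only substance sitting in the one-line root-counting step above. The minor bookkeeping obstacles I anticipate are (i) selecting a prime power $q$ within a constant factor of $n/d$, and (ii) checking that axis-parallel hyperplanes in $\R^d$ are legitimate elements of $\fR_d$ for the purpose of inducing the set system, which is accommodated by the paper's convention of interpreting shapes only through their intersection with the point set $X$.
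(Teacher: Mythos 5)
Your proposal is correct and is essentially the paper's own construction: the paper's items are the coefficient vectors $x \in [m]^k$ of integer polynomials of degree at most $k-1$, and its hyperplanes $c_i \cdot x = j$ with Vandermonde normals $c_i = (1, (i-1), \dots, (i-1)^{k-1})$ are exactly the conditions ``$p$ evaluates to $j$ at $i-1$,'' which the paper then converts coordinate-by-coordinate into thin axis-parallel boxes just as you do in step (ii). The only difference is that you work over $\mathbb{F}_q$ (hence the detour through Bertrand's postulate) while the paper works over the integers, replacing the root-counting bound by the full rank of the Vandermonde matrix; both give $q^{s+1}$ items from $\Theta(q)$ tests with $s = \left\lfloor \frac{d-1}{t} \right\rfloor$.
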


In the special case $t=1$, i.e.\ with a single defective item, the result matches the upper bound of \Cref{p:rect-trivial-upper-bound}. It follows that $\rD^d_1(n) \in \Theta( n^d )$ and $\rS^d_1(n) \in \Theta( n^d )$ for all $d \in \N_+$, and our constructions are asymptotically optimal for this case. 
	
	Note that \Cref{p:hyperplane-rect-lower-bounds} only implies meaningful lower bounds when $t < d$. In \Cref{sec:long-rectangles}, we present superlinear lower bounds for the other cases.
	
	\begin{restatable}{theorem}{restateLongRectsLowerBounds}
		\label{p:long-rects-lower-bounds}
		For $d, t \in \N_{+}$, with $t \ge d$, $\rD^d_t(n) \in \Omega \left( n^{1 + \frac{1}{2+t-d}} \right)$.
	\end{restatable}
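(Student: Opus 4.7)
I plan to prove the bound constructively. Set $s := 2+t-d$; by the assumption $t \ge d$, we have $s \ge 2$. For a positive integer parameter $k$, I aim to construct a set $X \subset \R^d$ of $\Theta(k^{s+1})$ points and a family $\fS$ of $n = \Theta(k^s)$ axis-parallel $d$-rectangles such that $(X,\fS)$ is $t$-disjunct. Substituting $k \asymp n^{1/s}$ then yields $|X| = \Omega(n^{(s+1)/s}) = \Omega(n^{1 + 1/(2+t-d)})$, as required.

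\paragraph{Construction.}
As suggested by the section title, the construction uses \emph{long rectangles}: axis-parallel $d$-rectangles that span the full extent of $X$ in $d-1$ of the $d$ coordinate directions, and thus act as one-dimensional interval tests in the remaining direction. The plan is to distribute the $n$ tests across the $d$ directions so that, in each direction, the induced 1D family is itself $t$-disjunct on the 1D projections of $X$. The points of $X$ are placed on a structured $d$-dimensional grid chosen so that these one-dimensional substructures combine into a $t$-disjunct system in $\R^d$, and so that the total point count reaches $\Theta(k^{s+1})$.

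\paragraph{Verification and main obstacle.}
To verify $t$-disjunctness, I would take any $p \in X$ and any $Y \subseteq X \setminus \{p\}$ with $|Y| = t$ and exhibit a long rectangle $R \in \fS$ with $p \in R$ and $R \cap Y = \emptyset$, using that in some coordinate direction $i$ the 1D coordinate of $p$ can be separated from those of $Y$ by the 1D $t$-disjunct substructure in that direction. The hard part is the regime $t \ge d$: a $t$-subset $Y$ can, in principle, cover a $(t+1)$-th point $p$ in every coordinate direction simultaneously (this generalises the 3-corner phenomenon already present for $d=t=2$, where points $p=(r_1,c_1)$, $y_1=(r_1,c_2)$, $y_2=(r_2,c_1)$ are pairwise coordinate-wise indistinguishable via 1D tests alone), so that no single 1D substructure can separate $p$ from $Y$. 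The construction must therefore place the points to either exclude such coverings or supplement the long rectangles with additional shorter rectangles that explicitly resolve the remaining violations, all within the budget of $n = \Theta(k^s)$ rectangles. The precise exponent $1+1/(2+t-d)$ reflects the balance between the density of $X$ and the combinatorial cost of enforcing $t$-disjunctness in this regime.
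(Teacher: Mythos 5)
Your parameter bookkeeping is correct (with $s=2+t-d$, a $t$-disjunct system of $\Theta(k^{s+1})$ points against $\Theta(k^{s})$ tests gives the claimed exponent), and you have put your finger on the right obstruction, but the proposal stops exactly where the proof has to begin: no point placement is given, no test family is given, and the disjunctness verification is conditional on a construction that never materializes. Worse, the primary mechanism you propose --- making every test a long rectangle that acts as a one-dimensional interval test in one of the $d$ directions, with each direction's induced 1D family $t$-disjunct on the projections --- provably cannot work even in the first case $d=t=2$. If all tests are full-extent strips, then $2$-disjunctness forces every point $p$ to be ``uniquely maximal'' in at least one of its two projection signatures (otherwise some $y_1$ dominates $p$ vertically and some $y_2$ dominates $p$ horizontally, and $\{y_1,y_2\}$ covers every test containing $p$); since distinct points sharing a vertical (resp.\ horizontal) cell dominate each other, this caps $|X|$ at $O(n)$, far below the claimed $\Omega(n^{3/2})$. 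Your fallback (``supplement with additional shorter rectangles'') is where all of the content lies, and it is left entirely open.

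The paper resolves the obstruction by induction on $t$ rather than by a single direct construction. The base case is $t=d-1$, where \Cref{p:hyperplane-rect-lower-bounds} supplies a genuinely $d$-dimensional $(d-1)$-disjunct system with $\Omega(n^2)$ points --- these are precisely the ``shorter rectangles'' your plan is missing. The inductive step (\Cref{p:long_rects_constr}) takes any $t$-disjunct $(X,\fS)$ with $m$ points and $n$ rectangles, lays $k$ translated copies side by side along one axis, and adds one long rectangle per original point covering all of its copies; the result is $(t+1)$-disjunct with $km$ points and $kn+m$ rectangles. The key point, and the answer to your ``covered in every direction'' worry, is that the long rectangles are never asked to separate $p$ from all of $Y$ on their own: at most one element of $Y$ can lie in $p$'s long rectangle outside $p$'s copy, and once that element is peeled off, the remaining at most $t$ elements of $Y$ inside $p$'s copy are handled by the inherited $t$-disjunct structure. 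Optimizing $k$ turns an exponent $c$ into $2-\sfrac{1}{c}$ (\Cref{p:long_rects_step}), and iterating from $c=2$ yields $1+\frac{1}{2+t-d}$. If you want to salvage your approach, this one-defective-per-added-direction peeling, anchored on a strong low-$t$ base construction, is the missing ingredient.
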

	
In particular, in the two-dimensional case $\rD^2_t(n) \in \Omega{(n^{1+\sfrac{1}{t}})}$.
	
		Finally, in \Cref{sec:grid-subspaces}, we consider a generalization of the grid construction of \Cref{sec:grid-lines} that replaces axis-parallel lines with axis-parallel affine subspaces, yielding a slight improvement over \Cref{p:hyperplane-rect-lower-bounds} if $d$ is a multiple of $t$.
	
	\begin{restatable}{theorem}{restateGridSubspacesLowerBound}
		\label{p:grid-subspaces-lower-bound}
For $d, t \ge 2$, $\rD_t^d(n) \in \Omega\left( n^{\frac{d+1}{t}} \right) = \Omega\left( n^{1 + \frac{d+1-t}{t}} \right) $.
	\end{restatable}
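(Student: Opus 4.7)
The plan is to generalize the grid construction from Theorem~\ref{p:grid-line-lower-bounds} (Section~\ref{sec:grid-lines}) by replacing axis-parallel lines in $\R^2$ with axis-parallel affine subspaces of dimension $s = d-t$ in $\R^d$. Fix a parameter $k$, work on the grid $[k]^d \subseteq \R^d$, and take $\fS$ to be all axis-parallel $s$-dimensional affine subspaces of $[k]^d$; each such subspace can be realized as an axis-parallel rectangular slab (narrow in each fixed direction, full extent in each free direction), hence is a valid rectangle test, and $|\fS| = \binom{d}{t} k^t = \Theta(k^t)$, so $n = \Theta(k^t)$.

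The central combinatorial step translates $t$-disjunctness into a hitting-set criterion. Writing $D(x, y) = \{i \in [d] : x_i \ne y_i\}$ for the coordinate-difference set, an $s$-subspace through $x$ with free coordinates $F \in \binom{[d]}{s}$ contains $y$ iff $D(x, y) \subseteq F$; equivalently, it \emph{avoids} $y$ iff the fixed-coordinate set $T = [d] \setminus F$ (of size $t$) satisfies $T \cap D(x, y) \ne \emptyset$. Thus $(X, \fS)$ is $t$-disjunct iff for every $x \in X$ and every $Y \subseteq X \setminus \{x\}$ with $|Y| = t$, some $T \in \binom{[d]}{t}$ meets $D(x, y)$ for every $y \in Y$. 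A vertex-cover argument on the complete $t$-uniform hypergraph on $[d]$ shows that this condition is automatically satisfied when $X$ is the full grid $[k]^d$: any family of $t$ ``obstructions'' coming from at most $t$ single-coordinate directions leaves at least one $T \in \binom{[d]}{t}$ uncovered, since the minimum cover has size $t + 1$. This already yields $m = k^d = \Theta(n^{d/t})$, matching Theorem~\ref{p:hyperplane-rect-lower-bounds}.

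To reach the improved bound $m = \Omega(n^{(d+1)/t})$, I would enlarge the point set by an extra factor of $\Theta(k) = \Theta(n^{1/t})$ via a Zarankiewicz-style extension in one coordinate direction, mimicking how the lower bound of Section~\ref{sec:grid-lines} exceeds the plain $k \times k$ grid by exploiting $K_{2,2}$-free configurations in $[k]^2$. The key observation is that the hitting-set criterion depends only on coordinate-identity patterns $D(x, y)$ and not on the metric along each axis, so the enlarged point set remains $t$-disjunct provided the added ``layer'' introduces no new bad $t$-tuples. The main obstacle is the explicit construction and verification of this extension: a higher-dimensional analog of the 2D extremal problem addressed in Section~\ref{sec:grid-lines}, where the vertex-cover argument must be reconciled with the multiplicities produced by the thickened coordinate.
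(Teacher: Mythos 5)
Your first step --- the grid $[k]^d$ with all axis-parallel $(d-t)$-dimensional affine subspaces as tests, and the hitting-set reformulation of $t$-disjunctness (pick one index from each difference set $D(x,y)$, $y\in Y$, to get a fixed-coordinate set $T$ of size at most $t$, then pad) --- is sound and matches the paper's construction and its disjunctness lemma. But this only yields $m=k^d=\Theta(n^{d/t})$, which you correctly note is no better than \Cref{p:hyperplane-rect-lower-bounds}. The entire content of \Cref{p:grid-subspaces-lower-bound} is the extra factor $n^{1/t}$, and that is exactly the step you leave open: you propose to gain it by \emph{enlarging the point set} by a factor $\Theta(k)$ via a ``Zarankiewicz-style extension,'' explicitly flag its construction and verification as the main obstacle, and do not supply it. As written, the proposal therefore does not prove the stated bound.

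The paper obtains the extra factor by an entirely different mechanism: it keeps the point set and test set unchanged and instead \emph{reduces the ambient dimension by one}. \Cref{p:subspace-proj} shows that for $k\le d-2$ the system $(P_d,\fP_{k,d})$ is combinatorially equivalent to a $(d-1)$-rectangle-induced system, via the projection $f(x_1,\dots,x_d)=(x_1+\varepsilon x_d,\dots,x_{d-1}+\varepsilon x_d)$ with $\varepsilon=\frac{1}{n+1}$, taking each subspace to the bounding box of its image. Reading the resulting $\Omega(n^{d/t})$ bound as a statement about $\R^{d-1}$ and then renaming $d-1$ to $d$ gives $\rD_t^d(n)\in\Omega(n^{(d+1)/t})$. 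This mirrors how \Cref{sec:grid-lines} gains over the plain grid --- by mapping $(P_d,\fL_d)$ into $\R^2$ --- rather than by any extremal $K_{2,2}$-free densification; the K\H{o}v\'ari--S\'os--Tur\'an machinery in the paper is used only for \emph{upper} bounds. Your proposed densification also faces a concrete difficulty you should be aware of: thickening one coordinate of the grid either adds points that lie in exactly the same set of subspaces as existing ones (destroying even $1$-disjunctness) or forces you to add proportionally many new tests, which cancels the gain. So the missing idea is the dimension-reduction projection, and without it (or a worked-out substitute) the proof does not go through.
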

	
	\paragraph{Upper bounds.}
	In \Cref{sec:rect-upper-bounds}, we develop a technique based on avoided patterns (variants and extensions of the K\H{o}v\'ari-S\'os-Tur\'an theorem~\cite{KovariSosEtAl1954}), obtaining the following result for 2-separability:
	
	\begin{restatable}{theorem}{restateDecompSepUpperBoundTwo}\label{p:decomp-sep-upper-bound-two-dim}
		For $d \ge 2$, $\rS_2^d(n) \in \fO \left( n^{d-\sfrac{1}{3} } \right)$.
	\end{restatable}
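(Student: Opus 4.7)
The strategy is to reduce the $d$-dimensional problem to the planar case $d = 2$ by slicing, and then to attack the base case via an avoided-pattern argument of K\H{o}v\'ari--S\'os--Tur\'an (KST) type.

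For the inductive step ($d \ge 3$), I fix one coordinate axis. The boundaries of the $n$ rectangles along that axis partition $\R^d$ into at most $2n-1$ slabs, each corresponding to an elementary interval in that coordinate. Points in a single slab share this elementary interval, so the only rectangles that can contain them are those whose range in this coordinate contains the slab; restricting such rectangles yields a family of $(d-1)$-dimensional axis-parallel boxes. A key observation is that this slab configuration is itself $2$-separable in $\R^{d-1}$: any rectangle in $\fS$ that distinguishes two within-slab pairs must reach the slab, and its restriction is a distinguishing $(d-1)$-box. Consequently, each slab contains at most $\rS_2^{d-1}(n)$ points, yielding the recurrence $\rS_2^d(n) \le (2n-1)\cdot \rS_2^{d-1}(n)$, which by induction gives $\fO(n^{d-1/3})$ once the base case is settled.

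For the base case $d=2$, let $(X, \fS)$ be $2$-separable with $|X| = m$ and $|\fS| = n$. The rectangle boundaries induce an $N \times N$ grid with $N = O(n)$, and $1$-separability gives at most one point per cell, so $X$ corresponds to a set of $m$ edges in the bipartite ``incidence'' graph $G$ on the grid rows and columns. The plan is to show that $G$ avoids some fixed bipartite subgraph -- namely a $K_{3,3}$ (or a $K_{3,C}$ for some constant $C$) -- and invoke the K\H{o}v\'ari--S\'os--Tur\'an theorem to conclude $m = \fO(N^{5/3}) = \fO(n^{5/3})$.

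The crux, and the main obstacle, is identifying the right forbidden pattern and showing that $2$-separability prohibits it. The basic local observation is that for any $2 \times 2$ filled sub-grid with corners $(r_i, c_j)$, $i,j \in \{1,2\}$, the two diagonal pairs of corners have identical union-signatures unless $\fS$ contains a \emph{corner-isolating} rectangle -- one whose $x$-range contains exactly one of $\{c_1, c_2\}$ and whose $y$-range contains exactly one of $\{r_1, r_2\}$. Upgrading this local condition to a tight global bound is nontrivial, because a single rectangle may serve as a corner-isolator for many $2 \times 2$ sub-grids simultaneously, so a direct charging argument is too lossy. I expect the argument to proceed by a dyadic decomposition over rectangle sizes combined with a refined KST applied to a derived bipartite graph encoding witness assignments from $2 \times 2$ sub-grids to corner-isolating rectangles, ultimately yielding the $\fO(n^{5/3})$ bound.
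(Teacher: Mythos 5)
Your slicing step for $d \ge 3$ is sound and matches the paper's Lemma~\ref{p:upper-bound-gen} (the paper fixes a coordinate to each of $\fO(n)$ values rather than using slabs, but the recurrence and the $(2n-1)$ vs.\ $4n$ constant are immaterial). Your local observation for $d=2$ is also exactly the paper's: a $2$-separable system forces every axis-parallel rectangle induced by four points of $P$ to contain a corner of some test rectangle in its interior (\Cref{fig:intersect-induced-rect}). The problem is that you stop there: the global step, which is the actual content of the proof, is missing, and you say so yourself. Moreover, the two directions you sketch for closing it do not work. The incidence graph $G$ need not avoid $K_{3,3}$ or any fixed $K_{3,C}$ globally --- a $2$-separable configuration can contain arbitrarily many induced rectangles (hence large complete bipartite incidence patterns), provided each one is stabbed by a corner --- so a direct forbidden-subgraph application of K\H{o}v\'ari--S\'os--Tur\'an to $G$ is a dead end. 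And a charging argument from $2\times 2$ sub-grids to corner-isolating rectangles is, as you note, too lossy.

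The paper's resolution is different and worth internalizing: let $V$ be the set of $\fO(n)$ test-rectangle corners, and cover the $\fO(n)\times\fO(n)$ grid by axis-parallel boxes whose \emph{interiors} avoid $V$. Inside any such box, $P$ can contain no induced rectangle at all (it would be unstabbed), so KST applies \emph{locally}: a $p\times q$ box with $p\ge q$ holds at most $p^{1/2}q+p$ points. Defining the weight of a covering as the sum of these per-box KST bounds, one starts from the regular tiling by $n$ boxes of size $n^{2/3}\times n^{1/3}$ (total weight $\fO(n\cdot n^{2/3})=\fO(n^{5/3})$) and splits each box through any corner of $V$ in its interior, which adds at most $|V|$ further boxes of no greater weight. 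Pigeonhole over the covering then gives $|P|\le\sum_Q z(Q)\in\fO(n^{5/3})$ (Lemmas~\ref{p:rect-cover-Zar} and~\ref{p:stabbing}). The choice of the $n^{2/3}\times n^{1/3}$ aspect ratio, which balances the two terms of the KST bound, is the quantitative heart of the argument and is absent from your proposal.
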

	
	In particular, in the two-dimensional case $\rS_2^2(n) \in \fO(n^{\sfrac{5}{3}})$. For 3-separability and higher, we obtain the stronger bound:
	
	\begin{restatable}{theorem}{restateDecompSepUpperBoundHigher}
		\label{p:decomp-sep-upper-bound}
		For $d \ge 2$ and $t \ge 3$, $\rS_t^d(n) \in \fO( n^{d - \sfrac{1}{2}})$.
	\end{restatable}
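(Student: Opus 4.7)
I would prove this by induction on $d$, with the main work (the K\H{o}v\'ari--S\'os--Tur\'an-style forbidden-pattern analysis) concentrated in the base case $d=2$.

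\emph{Inductive step ($d\ge 3$).} Suppose by induction that $\rS_t^{d-1}(n)=\fO(n^{d-3/2})$. Given a $t$-separable system with $|\fS|=n$ rectangles in $\R^d$ and $|X|=m$, partition $\R^d$ along the hyperplanes bounding the rectangles' first-axis projections into at most $\alpha\le 2n+1$ slabs $A_1,\ldots,A_\alpha$. For each slab set $X_i := X\cap A_i$ and $S_i := \{r\in\fS : \pi_1(r)\supseteq A_i\}$, so $|S_i|\le n$ and every $p\in X_i$ has $\fS[p]\subseteq S_i$. The key observation is that the subsystem obtained by projecting $X_i$ onto the last $d-1$ coordinates and replacing each $r\in S_i$ by its $(d-1)$-dimensional projection $\pi_{2:d}(r)$ is itself $t$-separable: for distinct $t$-subsets $P,Q\subseteq X_i$, $t$-separability of the original system gives $\fS[P]\ne\fS[Q]$, and since both $\fS[P]$ and $\fS[Q]$ already lie inside $S_i$, they remain distinct when restricted to the subsystem's tests. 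Hence $|X_i|\le\rS_t^{d-1}(|S_i|)\le\rS_t^{d-1}(n)$, and summing over slabs yields $m\le(2n+1)\cdot\rS_t^{d-1}(n)=\fO(n^{d-1/2})$.

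\emph{Base case ($d=2$, $t\ge 3$).} I need to show $\rS_t^2(n)=\fO(n^{3/2})$. After cell decomposition, encode the $m$ occupied cells as the edges of a bipartite graph $H$ on $[\alpha]\cup[\beta]$ with $\alpha,\beta\le 2n+1$; then $|E(H)|=m$ and the signature of a point at position $(i,j)$ factors as $S_i\cap T_j$, where $S_i,T_j$ depend only on the row/column index. The plan is to show that $t\ge 3$-separability, combined with the interval structure of the sets $\{i:r\in S_i\}$ (each rectangle covers a contiguous range of slabs on each axis), forbids a $K_{2,2}$-type pattern in $H$. A K\H{o}v\'ari--S\'os--Tur\'an bound (or a Zarankiewicz-type variant tailored to the pattern) then gives $|E(H)|=\fO(\alpha^{1/2}\beta+\alpha+\beta)=\fO(n^{3/2})$.

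\emph{Main obstacle.} A bare $K_{2,2}$ in $H$ is not by itself enough to violate 3-separability---the four corner signatures $S_i\cap T_j$ can be arranged so that no two of the four 3-subsets of $\{p_{11},p_{12},p_{21},p_{22}\}$ have equal signature-union. Pinning down the correct forbidden pattern is the delicate part: it likely requires a supersaturation argument (many $K_{2,2}$'s among $m$ edges must force one with the ``generic'' signature configuration needed for a collision), or augmenting the $K_{2,2}$ with one or two auxiliary points outside it whose existence is guaranteed when $m$ is large. This is precisely where the $t\ge 3$ hypothesis gives something stronger than the $t=2$ case of \Cref{p:decomp-sep-upper-bound-two-dim}, which had to forbid a larger pattern and so obtained only $\fO(n^{5/3})$.
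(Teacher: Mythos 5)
Your inductive step is sound and is essentially the paper's own reduction (\Cref{p:upper-bound-gen}): slicing along the first coordinate, noting that each slab's subsystem inherits $t$-separability, and paying a factor $\fO(n)$ per dimension. The genuine gap is in the base case $d=2$, $t=3$, and you have in fact put your finger on it yourself: you do not identify the forbidden pattern, and the route you sketch cannot deliver $\fO(n^{\sfrac{3}{2}})$. First, the bipartite cell graph $H$ is not $K_{2,2}$-free in general --- $3$-separability only forces every induced rectangle to contain a corner of some test in its interior (the ``stabbing'' property), not to be absent --- so you cannot apply K\H{o}v\'ari--S\'os--Tur\'an to $H$ globally. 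Second, even if you correctly pass to a covering of $[4n]^2$ by test-corner-free pieces and apply KST inside each piece, the weight function $z(p\times q)\le p^{1/2}q+p$ is exactly what the paper uses for $t=2$, and optimizing the covering (by $k^2\times k$ blocks) yields only $\fO(n^{\sfrac{5}{3}})$; a supersaturation or augmented-$K_{2,2}$ argument would still be fighting against this bottleneck.

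The missing idea is to forbid a \emph{lighter} pattern whose extremal function is linear rather than of KST type. The paper uses the \emph{Z-shape}: four points $w,x,y,z$ with $w_1<x_1=y_1<z_1$ and $w_2=x_2<y_2=z_2$. Taking the two $3$-subsets $X_1=\{w,x,z\}$ and $X_2=\{w,y,z\}$, which differ only in $x$ versus $y$, $3$-separability forces a test to contain exactly one of $x,y$ while avoiding $w$ and $z$, and the shared coordinates then pin a corner of that test inside $\rect(\{w,x,y,z\})$. By F\"uredi--Hajnal, a Z-shape-free subset of a $p\times q$ grid has only $\fO(p+q)$ points, so covering $[4n]^2$ by $\Theta(n)$ squares of side $\Theta(\sqrt{n})$ (split at the $\fO(n)$ test corners) gives $\Theta(n)\cdot\fO(\sqrt{n})=\fO(n^{\sfrac{3}{2}})$. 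This is precisely what the hypothesis $t\ge 3$ buys over $t=2$: access to a pair of overlapping $3$-subsets supported on a degenerate four-point configuration, rather than the two diagonals of a full induced rectangle. Without this (or an equivalent linear-weight pattern), your outline stalls at $\fO(n^{\sfrac{5}{3}})$.
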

	
	In particular, together with \Cref{p:long-rects-lower-bounds}, we have the asymptotically tight result $\rS_3^2(n) \in \Theta( n^{\sfrac{3}{2}})$. For disjunctness we obtain:

	\begin{restatable}{theorem}{restateDecompDisjUpperBound}
		\label{p:decomp-disj-upper-bound}
		For $d, t \ge 2$, $\rD_t^d(n) \in \fO \left( n^{d-1+\frac{1}{\min(d,t)}} \right)$.
	\end{restatable}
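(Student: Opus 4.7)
The plan is to exploit a structural feature of $t$-disjunct rectangle systems that is unavailable merely from $t$-separability. For each $x \in X$, define the \emph{witness rectangle} $N_x := \bigcap_{R \in \fS,\ x \in R} R$, itself an axis-parallel $d$-rectangle. Any point $y \in N_x \cap X$ belongs to every test in $\fS[x]$, so $\fS[x] \subseteq \fS[y]$; hence if $N_x$ contained $t$ points $y_1, \dots, y_t$ of $X \setminus \{x\}$, one would obtain $\fS[x] \subseteq \fS[\{y_1, \dots, y_t\}]$, contradicting $t$-disjunctness. This gives the key bound $|N_x \cap X| \le t$.

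Each of the $2d$ sides of $N_x$ is tight, i.e., coincides with the corresponding side of some test in $\fS[x]$. Selecting such an extremal test for each of the $2d$ sides yields a map $\phi : X \to \fS^{2d}$ with $N_x = \bigcap \phi(x)$. Any two points sharing the same image under $\phi$ share the same witness rectangle, so each fiber of $\phi$ has size at most $t$, giving $|X| \le t \cdot |\phi(X)|$.

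The heart of the argument is to bound $|\phi(X)| \in \fO(n^{d-1+1/\min(d,t)})$. The plan is to set up an auxiliary bipartite incidence structure capturing the relation between points of $X$ and their extremal tests along one distinguished axis, and then apply a K\H{o}v\'ari--S\'os--Tur\'an extremal bound. Fixing the extremal tests along $d-1$ of the $d$ axes accounts for a factor of $n^{d-1}$, while the KST step along the remaining axis contributes an extra $n^{1/\min(d,t)}$ factor. The forbidden bipartite subgraph arises from the witness-rectangle bound: a $K_{\min(d,t)+1,s}$ configuration (for a suitable $s$) would force some $N_x$ to contain more than $t$ points of $X$, contradicting the first step.

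The main obstacle is this third step: identifying the right auxiliary bipartite structure and the correct forbidden pattern so that K\H{o}v\'ari--S\'os--Tur\'an delivers precisely the exponent $1/\min(d,t)$. The appearance of $\min(d,t)$ reflects a balance between the $2d$ rectangle-sides available as extremal-test slots and the $t$-point cap on each $N_x$, so I anticipate a case split into the regimes $d \le t$ and $d > t$, with the smaller parameter driving the KST bound in each case.
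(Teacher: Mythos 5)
There is a genuine gap, and it sits in two places. First, your ``key bound'' $|N_x \cap X| \le t$ is actually degenerate: if $y \in N_x \cap X$ with $y \neq x$, then $\fS[x] \subseteq \fS[y]$, and padding $\{y\}$ to any size-$t$ set avoiding $x$ already violates $t$-disjunctness. So under $t$-disjunctness (for every $t \ge 1$) one has $N_x \cap X = \{x\}$, independently of $t$. Consequently the witness-rectangle cap cannot be the source of the $\min(d,t)$-dependence, and the forbidden pattern you propose (``a $K_{\min(d,t)+1,s}$ would force $N_x$ to contain more than $t$ points'') has no mechanism behind it: the configuration that is actually forbidden is independent of $t$, so whatever K\H{o}v\'ari--S\'os--Tur\'an bound you extract cannot produce an exponent that varies with $t$. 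Second, and decisively, the step that would carry the entire quantitative content --- constructing the auxiliary bipartite structure and verifying the forbidden subgraph --- is not carried out; you name it as ``the main obstacle'' and stop. The preceding reductions ($|X| \le t\,|\phi(X)|$ with $\phi : X \to \fS^{2d}$) only give the trivial $|X| \le n^{2d}$ on their own, so nothing short of completing that step constitutes a proof.

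For comparison, the paper derives the $t$-dependence by an entirely different mechanism. The core case is $t = d$: $d$-disjunctness forces every \emph{$d$-star} in $P$ (a point together with $d$ axis-aligned neighbours) to have a test corner in the interior of its bounding box (\Cref{p:semi-stabbing}); a weighted hyperrectangle covering by cubes of side $n^{1-\sfrac{1}{d}}$, combined with the star-free bound of \Cref{p:semi-rect-free}, then yields $\rD_d^d(n) \in \fO(n^{d-1+\sfrac{1}{d}})$. The $\min(d,t)$ appears only afterwards: for $t < d$ one slices down to $t$ dimensions via \Cref{p:upper-bound-gen} at a cost of $n^{d-t}$, and for $t > d$ one uses monotonicity in $t$. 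If you want to salvage your route, you would need to locate a forbidden configuration whose size genuinely grows with $t$ (as the $d$-star does with $d$), rather than relying on the witness rectangles, which collapse to singletons already at $t = 1$.
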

	
	In particular, $\rD_2^2(n) \in \Theta( n^{\sfrac{3}{2}})$. In Appendix~\ref{appc} we summarize the best upper and lower bounds for $\rD_t^d(n)$ and $\rS_t^d(n)$, for small values of $t$ and $d$.\\

A remark is in order, regarding \emph{general position assumptions}, as in the proof of \Cref{p:rect-trivial-upper-bound}. We say that a configuration of points $X$ and $d$-rectangles $\fS$ is \emph{in general position} if, for each $i \in [d]$, the $i$-th coordinates defining the points in $X$ and the rectangles in $\fS$ are all distinct. In other words, each axis-parallel hyperplane contains at most one point and bounds at most one rectangle, not both at once.

In our proofs it is sometimes useful to assume $(X, \fS)$ to be in general position. We argue that this assumption is without loss of generality, as we can always perturb a configuration to obtain a combinatorially equivalent one that is in general position. 

Indeed, let $\varepsilon$ be the smallest positive difference between a pair of coordinates of points in $X$ and/or corners of rectangles in $\fS$. Extend each rectangle by $\varepsilon/3$ in each of the $2d$ axis-parallel directions. By the choice of $\varepsilon$, %
the point-rectangle containment relation is unchanged, and after the transformation
no point is aligned with a rectangle boundary. By further perturbing the coordinates of points and rectangle corners by less than $\varepsilon/6$, we achieve general position.

\subsection{Further related work}
To our knowledge, group testing with orthogonal range queries has not been considered before, although as natural measures of set system complexity, $t$-separability and $t$-disjunctness have been studied on their own. Equivalent concepts include \emph{superimposed codes}~\cite{KautzSingleton1964}, \emph{$t$-cover-free families} (duals of $t$-disjunct set systems) \cite{ErdoesFranklEtAl1985, Bshouty14}, and \emph{separating systems} (1-separable set systems)~\cite{Renyi1965, Katona1966,  AhlswedeWegener1987,LangiNaszodiEtAl2016,GerbnerToth2013}.

We focus on results for geometrically-defined set systems. As they refer to \emph{worst-case} placements of points, the bounds obtained in the referred works are not directly comparable with ours. Gerbner and T\'oth~\cite{GerbnerToth2013} study $1$-separable set systems induced by \emph{convex} sets over points \emph{in general position}. Translated into our setting their results imply that $n$ convex tests can handle $O(n \log{n})$ arbitrary points (with a single defective) if the general position assumption is made, and $O(n)$ points otherwise. 
Boland and Urrutia~\cite{BolandUrrutia} study set systems induced by hyperplanes in a similar setting, obtaining $\Theta(n)$ bounds when the dimension $d$ is constant.
Gledel and Parreau~\cite{GledelParreau2017} consider 
a similar problem where $\fS$ are disks; Harvey et al.~\cite{Harvey} and Cheraghchi et al.~\cite{CheraghchiEtAl2010} consider sets $\fS$ restricted to paths of a given graph.

L\'angi et al.~\cite{LangiNaszodiEtAl2016} study $1$-separation, and two generalizations: \emph{intersection-separation} and \emph{containment-separation}. The first notion corresponds exactly to the $t$-separation studied in our paper, requiring that for all pairs of size-$t$ sets, some test properly intersects one but not the other. The second notion, $t$-containment-separation is similar, with ``intersects'' replaced by ``contains''. (Notice that in the case $t=1$ the two notions are identical.) The bounds obtained by L\'angi et al.\ are linear, up to poly-logarithmic factors. Again, these results are not comparable to ours, as they refer to \emph{worst-case} placements of points in general position, with \emph{convex subsets} of $\mathbb{R}^d$ as tests, whereas we study \emph{best-case} placements of points with \emph{axis-parallel boxes} as tests.

Several other variants of the group testing problem have been considered (see \cite{DuHwang1993, AJS} and references therein). In \emph{probabilistic group testing} it is only required to recover the defective set with high probability. In \emph{noisy group testing} the tests might err with some probability~\cite{AtiaSaligrama2009}, or possibly in an adversarial way, i.e.\ testing with ``liars''~\cite[\S\,5]{DuHwang1993}, \cite{Pelc02}. In \emph{quantitative group testing} the test outcomes indicate the number of defective items in the test~\cite{Alaoui,Alaoui2}; also related are various \emph{coin-weighing} problems, see e.g.\ \cite{Bshouty09}.
	
	\section{Grids of axis-parallel lines}\label{sec:grid-lines}

In this section we consider a specific family of set systems defined by points and lines, and fully characterise its $t$-separability, respectively $t$-disjunctness. We then transform this set system to obtain a combinatorially equivalent set system whose sets are induced by $2$-rectangles.

Let $n \ge 2$, let $P_d = [n]^d$, and define $\fL_d$ to be the set of axis-parallel lines that intersect $P_d$. In other words, $\fL_d$ is the set of grid lines in the $d$-dimensional $n \times n \times \dots \times n$ hypergrid. Observe that $|P_d| = n^d$ and $|\fL_d| = d n^{d-1}$.
	
	We start with the two-dimensional case and argue that $(P_2, \fL_2)$ is $1$-disjunct. Suppose this is not the case. Then, there exist distinct points $x, y \in P$ such that $\fL_2[x] \subseteq \fL_2[y]$. As $x$ is contained in at least one line of $\fL_2$ that does not contain $y$, this is impossible. By \Cref{p:sep-basic-props}, $(P_2, \fL_2)$ is also $1$-separable. 

We argue next that $(P_2, \fL_2)$ is not $2$-separable. Indeed, as shown in \Cref{fig:grid-line}~(a), we can find two distinct sets of points in $P_2$, both of size two, such that both sets intersect the same four grid lines. (The two sets contain opposite corners of a grid cell.) By \Cref{p:sep-basic-props}, $(P_2, \fL_2)$ is not $t$-separable and not $t$-disjunct for all $t \geq 2$. This settles the two-dimensional case. We continue with bounds regarding disjunctness in higher dimensions.
	
	\begin{proposition}\label{p:grid-line-not-disjunct}
		Let $d \ge 2$. Then $(P_d, \fL_d)$ is not $d$-disjunct.
	\end{proposition}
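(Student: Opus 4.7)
The plan is to exhibit an explicit pair $(x, Y)$ with $x \in P_d$ and $Y \subseteq P_d \setminus \{x\}$, $|Y| = d$, such that every grid line through $x$ contains at least one point of $Y$, i.e.\ $\fL_d[x] \subseteq \fL_d[Y]$; by definition this witnesses the failure of $d$-disjunctness.

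The key observation is that a point $p \in P_d$ lies on exactly $d$ lines of $\fL_d$, namely one line in each coordinate direction (the line obtained by letting the $i$-th coordinate of $p$ vary while fixing the other $d-1$ coordinates). Hence $|\fL_d[x]| = d$ for every $x \in P_d$, and to cover $\fL_d[x]$ by $d$ other points it suffices to put one point on each of these $d$ lines.

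Concretely, I would take $x = (1,1,\ldots,1)$ and, for each $i \in [d]$, let $y_i \in P_d$ be the point whose $i$-th coordinate equals $2$ and whose remaining coordinates equal $1$. This uses the assumption $n \geq 2$. By construction $y_i$ lies on the $i$-th axis-parallel line through $x$, and $y_i \neq x$ since their $i$-th coordinates differ. The $d$ points $y_1,\ldots,y_d$ are pairwise distinct because they differ in the position of the entry $2$, so $Y = \{y_1,\ldots,y_d\}$ has size $d$ and does not contain $x$. Finally, each of the $d$ lines of $\fL_d[x]$ contains the corresponding $y_i$, so $\fL_d[x] \subseteq \fL_d[Y]$, proving $(P_d, \fL_d)$ is not $d$-disjunct.

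There is essentially no obstacle here: the proof is a direct construction exploiting the fact that a grid point lies on exactly $d$ lines, which matches the disjunctness parameter. The only thing to verify is that $n \geq 2$ is used (to have a second point on each line), which is part of the standing assumption on $n$.
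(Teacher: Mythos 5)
Your proof is correct and follows essentially the same approach as the paper: the paper also picks one additional point on each of the $d$ grid lines through a chosen point to witness the failure of $d$-disjunctness, merely without making the choice of points explicit. Your explicit choice and the verification that the $y_i$ are pairwise distinct are fine.
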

	\begin{proof}
		Consider a point $p \in P_d$ and the set $\fL_d[p]$ of $d$ grid lines containing $p$. Choose one more point on each line in $\fL_d[p]$, yielding a set $Q \subseteq P_d \setminus \{p\}$ of size $d$. Now $\fL_d[p] \subseteq \fL_d[Q]$, so $(P_d,\fL_d)$ is not $d$-disjunct.
	\end{proof}
	
	\begin{proposition}\label{p:grid-line-disjunct}
		Let $d \ge 2$. Then $(P_d, \fL_d)$ is $(d-1)$-disjunct.
	\end{proposition}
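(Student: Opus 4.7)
The plan is to unfold the definition of $(d-1)$-disjunctness and show that for every point $p \in P_d$ and every candidate ``blocker'' set $Q \subseteq P_d \setminus \{p\}$ with $|Q| = d-1$, some grid line through $p$ avoids $Q$ entirely. Concretely, I need to exhibit a line $\ell \in \fL_d[p]$ with $\ell \notin \fL_d[Q]$, which by definition refutes $\fL_d[p] \subseteq \fL_d[Q]$.

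First I would record the elementary fact that every point $p \in P_d$ lies on exactly $d$ members of $\fL_d$, namely one line per axis direction: for each $i \in [d]$, the line $\ell_i(p)$ through $p$ parallel to the $i$-th axis. So $|\fL_d[p]| = d$.

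Next I would observe the key blocking principle: any point $q \in P_d \setminus \{p\}$ can lie on at most one of the lines $\ell_1(p), \ldots, \ell_d(p)$. Indeed, if $q$ lies on some $\ell_i(p)$, then $q$ and $p$ agree in every coordinate except possibly the $i$-th, and since $q \neq p$, they must differ in that $i$-th coordinate. This forces the index $i$ to be unique, since $q$ cannot simultaneously agree with $p$ in coordinate $i$ and disagree with $p$ in coordinate $i$.

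Combining these, the $d-1$ points of $Q$ collectively cover at most $d-1$ of the $d$ lines in $\fL_d[p]$. By the pigeonhole principle, at least one axis direction $i^*$ yields a line $\ell_{i^*}(p)$ that contains $p$ but no point of $Q$; that is, $\ell_{i^*}(p) \in \fL_d[p] \setminus \fL_d[Q]$. Hence no choice of $p$ and $Q$ with $|Q| = d-1$ violates the disjunctness condition, establishing that $(P_d, \fL_d)$ is $(d-1)$-disjunct. There is no real obstacle here; the argument is a direct coordinate-counting / pigeonhole step once the blocking principle is made explicit.
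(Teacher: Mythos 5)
Your proof is correct and follows essentially the same route as the paper: the paper observes that any two lines through $a$ meet only at $a$, so each $b \in B$ hits at most one line of $\fL_d[a]$, forcing $|B| \ge d$; your coordinate-based ``blocking principle'' is exactly this fact, applied in contrapositive form with a pigeonhole finish.
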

	\begin{proof}
		Consider a point $a \in P_d$ and a set $B \subseteq P_d \setminus \{a\}$ such that $\fL_d[a] \subseteq \fL_d[B]$, i.e.\ the points in $B$ hit all lines that intersect $a$. Two lines in $\fL_d[a]$ intersect only in $a$, which means that $|\fL_d[a] \cap \fL_d[b]| \le 1$ for all $b \in B$, implying $|B| \ge |\fL_d[a]| = d > d - 1$.
	\end{proof}

	\begin{figure}
		\centering
		\begin{tikzpicture}[
			scale = {0.4},
			point/.style = {fill, circle, inner sep = 1.2pt},
			inA/.style = {blue},
			inB/.style = {red},
			inAB/.style = {green!70!black},%
			line/.style = {},
			gline/.style = {gray},
			aline/.style = {blue, line width=1pt},
			bline/.style = {red, line width=1pt}
		]
			\newcommand{\pa}[1]{\node[point, inA] at (#1) {};}
			\newcommand{\pb}[1]{\node[point, inB] at (#1) {};}
			\newcommand{\pab}[1]{\node[point, inAB] at (#1) {};}
			
			\begin{scope}
				\draw[line] (0,0) -- (0,2) -- (2,2) -- (2,0) -- (0,0);
				\pa{0,0}
				\pb{0,2}
				\pa{2,2}
				\pb{2,0}
				
				\node at (1, -1) {(a)};
			\end{scope}
			
			\begin{scope}[shift={(5,0)}]
				\draw[line] (0,0) -- (0,2) -- (2,2) -- (2,0) -- (0,0);
				\draw[line] (1,1) -- (1,3) -- (3,3) -- (3,1) -- (1,1);
				\draw[line] (0,0) -- (1,1);
				\draw[line] (0,2) -- (1,3);
				\draw[line] (2,0) -- (3,1);
				\draw[line] (2,2) -- (3,3);
				\pa{0,0}
				\pb{0,2}
				\pa{2,2}
				\pb{2,0}
				\pb{1,1}
				\pa{1,3}
				\pb{3,3}
				\pa{3,1}
				
				\node at (1.5, -1) {(b)};
			\end{scope}
			
			\begin{scope}[shift={(11,0)}]
				\draw[gline] (0,0) -- (0,2) -- (2,2) -- (2,0) -- (0,0);
				\draw[gline] (1,1) -- (1,3) -- (3,3) -- (3,1) -- (1,1);
				\draw[gline] (5,2) -- (5,4) -- (7,4) -- (7,2) -- (5,2);
				\draw[gline] (4,1) -- (4,3) -- (6,3) -- (6,1) -- (4,1);
				\draw[gline] (0,0) -- (1,1) -- (5,2) -- (4,1) -- (0,0);
				\draw[gline] (0,2) -- (1,3) -- (5,4) -- (4,3) -- (0,2);
				\draw[gline] (2,0) -- (3,1) -- (7,2) -- (6,1) -- (2,0);
				\draw[gline] (2,2) -- (3,3) -- (7,4) -- (6,3) -- (2,2);
				
				\draw[bline] (0,0) -- (1,1);
				\draw[bline] (0,0) -- (2,0);
				\draw[bline] (0,0) -- (4,1);
				
				\draw[aline] (0,2) -- (1,3);
				\draw[aline] (0,2) -- (2,2);
				\draw[aline] (0,2) -- (4,3);
				
				\pa{0,0}
				\pb{0,2}
				\pab{1,1}
				\pab{1,3}
				\pab{2,0}
				\pab{2,2}
				\pab{4,1}
				\pab{4,3}
				
				\node at (3.5,-1) {(c)};
			\end{scope}
			
			\begin{scope}[shift={(21,0)}]
				\draw[gline] (0,0) -- (0,2) -- (2,2) -- (2,0) -- (0,0);
				\draw[gline] (1,1) -- (1,3) -- (3,3) -- (3,1) -- (1,1);
				\draw[gline] (0,0) -- (1,1);
				\draw[gline] (0,2) -- (1,3);
				\draw[gline] (2,0) -- (3,1);
				\draw[gline] (2,2) -- (3,3);
				
				\draw[aline] (0,2) -- (2,2) -- (2,0) -- (3,1) -- (1,1) -- (1,3) -- (0,2);
				\pa{0,0}
				\pb{0,2}
				\pb{1,1}
				\pb{2,0}
				
				\node at (1.5, -1) {(d)};
			\end{scope}
			
			\begin{scope}[shift={(27,0)}]
				\draw[gline] (0,0) -- (0,2) -- (2,2) -- (2,0) -- (0,0);
				\draw[gline] (1,1) -- (1,3) -- (3,3) -- (3,1) -- (1,1);
				\draw[gline] (5,2) -- (5,4) -- (7,4) -- (7,2) -- (5,2);
				\draw[gline] (4,1) -- (4,3) -- (6,3) -- (6,1) -- (4,1);
				\draw[gline] (0,0) -- (1,1) -- (5,2) -- (4,1) -- (0,0);
				\draw[gline] (0,2) -- (1,3) -- (5,4) -- (4,3) -- (0,2);
				\draw[gline] (2,0) -- (3,1) -- (7,2) -- (6,1) -- (2,0);
				\draw[gline] (2,2) -- (3,3) -- (7,4) -- (6,3) -- (2,2);
				
				\draw[aline] (0,2) -- (1,3) -- (1,1);
				\draw[aline] (1,1) -- (3,1) -- (2,0);
				\draw[aline] (2,0) -- (2,2) -- (0,2);
				\draw[aline] (4,1) -- (5,2) -- (1,1);
				\draw[aline] (4,1) -- (6,1) -- (2,0);
				\draw[aline] (4,1) -- (4,3) -- (0,2);
				
				\pa{0,0}
				\pb{2,0}
				\pb{1,1}
				\pb{0,2}
				\pb{4,1}
				
				\node at (3.5,-1) {(e)};
			\end{scope}
		\end{tikzpicture}
		\caption{\emph{(a)} $(P_2, \fL_2)$ is not $2$-separable; red and blue points indicate sets $A,B \subseteq P_2$, where $\fL_2[A] = \fL_2[B]$. 
			\emph{(b)} $(P_3,\fL_3)$ is not $4$-separable.
			\emph{(c)} $(P_d,\fL_d)$ is not $(2d-1)$-separable (example with $d=4$); red, blue, and green points indicate points in $A \setminus B$, $B \setminus A$, resp.\ $A \cap B$.
			\emph{(d), (e)} First step in the proof of \Cref{p:grid-line-sep} for $d \in \{3,4\}$. The blue point is $a_0$ and the set of red points is $B'$. The blue lines $\fL'$ have to be hit by $A$.}
		\label{fig:grid-line}
	\end{figure}
	
	Determining $t$-separability of $(P_d,\fL_d)$ is somewhat more complicated. Here, $d = 3$ is also a special case. %
	
	\begin{proposition}
		$(P_3, \fL_3)$ is not $4$-separable.
	\end{proposition}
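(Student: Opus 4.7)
The plan is to exhibit two distinct $4$-element subsets $A, B \subseteq P_3$ with $\fL_3[A] = \fL_3[B]$, which by definition shows that $(P_3, \fL_3)$ is not $4$-separable. The construction mirrors the $2$-dimensional counterexample of Figure~\ref{fig:grid-line}(a), but applied to a $2 \times 2 \times 2$ subcube rather than a $2 \times 2$ square.

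Concretely, I would fix any corner $p \in [n-1]^3$ and consider the eight grid points $V = p + \{0,1\}^3 \subseteq P_3$. Split $V$ into the two parity classes of the cube, that is, let
\[
  A = \{ p + v \mid v \in \{0,1\}^3,\ v_1 + v_2 + v_3 \text{ even} \},\qquad
  B = \{ p + v \mid v \in \{0,1\}^3,\ v_1 + v_2 + v_3 \text{ odd} \}.
\]
Then $|A| = |B| = 4$ and $A \cap B = \emptyset$, so $A \ne B$. (This is exactly the ``two tetrahedra in a cube'' depicted in Figure~\ref{fig:grid-line}(b).)

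The remaining step is to verify that $\fL_3[A] = \fL_3[B]$. A line $\ell \in \fL_3$ is axis-parallel, so it is determined by a coordinate direction $i \in \{1,2,3\}$ and two fixed values in the other two coordinates. Two cases arise. If $\ell$ does not meet $V$, then it contributes to neither $\fL_3[A]$ nor $\fL_3[B]$. Otherwise, its two fixed coordinates agree with the corresponding coordinates of $p$ modulo the $\{0,1\}$-offset, and $\ell$ meets $V$ in exactly the two points $p+v$ and $p+v+\ve_i$ for some $v \in \{0,1\}^3$ with $v_i = 0$. These two points have coordinate sums differing by $1$, hence opposite parities, so exactly one of them lies in $A$ and the other in $B$. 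Thus every line intersecting $V$ is hit by both $A$ and $B$, proving $\fL_3[A] = \fL_3[B]$.

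There is no real obstacle beyond spotting the right construction; once one notices that the axis-parallel lines of a $2 \times 2 \times 2$ subcube correspond to edges of the $3$-cube graph, which is bipartite with equal-sized parts, the proof is immediate. A possible stylistic choice is whether to present it directly as above or as the $d=3$ case of the more general ``antipodal parity classes of a $2^d$ subcube'' construction, which the paper seems to develop further for other values of $d$ (cf.\ Figure~\ref{fig:grid-line}(c)); I would opt for the direct version since only $d=3$ is claimed here.
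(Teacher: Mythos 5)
Your proof is correct and follows essentially the same route as the paper: take the eight corners of a small cube in $P_3$, split them into the two parity classes (the coloring of Figure~\ref{fig:grid-line}~(b)), and observe that every axis-parallel line meeting the cube's corners contains exactly one point of each class, so $\fL_3[A]=\fL_3[B]$. Your write-up merely makes the line-by-line verification explicit where the paper states it in one sentence.
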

	\begin{proof}
		Consider a set of $8$ points in $P_3$ forming the corners of a $3$-rectangle (i.e.\ cube). Split these points into two sets $Y$ and $Z$, both of size $4$, as indicated by the coloring in \Cref{fig:grid-line}~(b). (The sets correspond to the odd, resp.\ even layers of the cube.) Then, $\fL_3[Y] = \fL_3[Z]$, as both $Y$ and $Z$ hit exactly the lines supporting the edges of the cube. Thus, $(P_3, \fL_3)$ is not $4$-separable.
	\end{proof}
	
	The constructions in \Cref{fig:grid-line}~(a) and (b) can be easily generalized to higher dimensions, showing that $(P_d, \fL_d)$ is not $2^{d-1}$-separable. The correct bounds are, however, much lower, as we show next. 
	
	\begin{proposition}
		Let $d \ge 4$. Then $(P_d, \fL_d)$ is not $(2d-1)$-separable.
	\end{proposition}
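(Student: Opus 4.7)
The plan is to exhibit two distinct sets $Y, Z \subseteq P_d$ of size $2d-1$ such that $\fL_d[Y] = \fL_d[Z]$. The idea is to pick two ``anchor'' points $a, b$ lying on a common grid line, so that $b$ differs from $a$ in exactly one coordinate (say, the first). I would then take $Y = \{a\} \cup S$ and $Z = \{b\} \cup S$ for a carefully chosen shared set $S$ of $2(d-1)$ points that covers all axis-parallel lines through $a$ and through $b$ in the remaining $d-1$ directions. An illustration of this construction already appears in \Cref{fig:grid-line}~(c) for $d = 4$, with $a$ red, $b$ blue, and $S$ green.

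Concretely, for each $i \in \{2, \ldots, d\}$, let $\ell_i^a$ and $\ell_i^b$ denote the grid lines through $a$, respectively $b$, in direction $i$, and choose arbitrary points $p_i \in \ell_i^a \setminus \{a\}$ and $q_i \in \ell_i^b \setminus \{b\}$ (these exist since $n \ge 2$). Set $S = \{p_2, \ldots, p_d, q_2, \ldots, q_d\}$. Since $a$ and $b$ differ only in the first coordinate, every $\ell_i^a$ with $i \ge 2$ lies in the hyperplane $\{x_1 = a_1\}$ while every $\ell_j^b$ with $j \ge 2$ lies in $\{x_1 = b_1\}$, so $\ell_i^a \cap \ell_j^b = \emptyset$ for all $i,j \ge 2$. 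Within each group, distinct lines through the common anchor meet only at that anchor. These two observations imply that $p_2, \ldots, p_d, q_2, \ldots, q_d$ are $2(d-1)$ pairwise distinct points, none equal to $a$ or $b$; consequently $|Y| = |Z| = 2d - 1$ and $Y \neq Z$.

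It remains to verify $\fL_d[Y] = \fL_d[Z]$. Writing $\fL_d[Y] = \fL_d[\{a\}] \cup \fL_d[S]$ and $\fL_d[Z] = \fL_d[\{b\}] \cup \fL_d[S]$, it suffices to show $\fL_d[\{a\}] \subseteq \fL_d[Z]$ and $\fL_d[\{b\}] \subseteq \fL_d[Y]$. The direction-$1$ line through $a$ equals the direction-$1$ line through $b$, so both sides hit it via the respective anchor; for each $i \ge 2$, the line $\ell_i^a$ is hit by $p_i \in S$ (hence lies in $\fL_d[Z]$), and $\ell_i^b$ is hit by $q_i \in S$ (hence lies in $\fL_d[Y]$). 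No genuine obstacle arises in this argument; the only point requiring care is the distinctness bookkeeping in the previous paragraph, which is precisely why $a$ and $b$ are chosen to differ in exactly one coordinate. I note in passing that the construction works verbatim for every $d \ge 2$, matching the known tighter bounds for $d \in \{2,3\}$ via \Cref{p:sep-basic-props}, and giving the new bound $2d-1$ in the regime $d \ge 4$ claimed in the statement.
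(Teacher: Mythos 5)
Your construction is exactly the one the paper uses: two anchors on a common grid line plus one extra point on each of the $2d-2$ remaining lines through them, giving two size-$(2d-1)$ sets hitting the same lines. The argument is correct, and your explicit distinctness bookkeeping only spells out a detail the paper leaves implicit.
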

	\begin{proof}
		We use the following construction, shown in \Cref{fig:grid-line}~(c). Let $x, y \in P_d$ be distinct points contained in the same grid line $L$. Both $x$ and $y$ are contained in $d-1$ lines apart from $L$, and these $2d-2$ lines are pairwise distinct. Consider an additional point on each of these lines, and call the resulting set of $2d-2$ points $Z$. Now we have $\fL_d[Z \cup \{x\}] = \fL_d[Z \cup \{y\}]$ as the two sets of points hit the same set of lines.
	\end{proof}
	
	\begin{proposition}\label{p:grid-line-sep}
		$(P_3,\fL_3)$ is 3-separable, and $(P_d, \fL_d)$ is $(2d-2)$-separable for $d \ge 4$.
	\end{proposition}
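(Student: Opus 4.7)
The plan is to reformulate $\fL_d[A] = \fL_d[B]$ as $\pi_i(A) = \pi_i(B)$ for every direction $i \in [d]$, where $\pi_i \colon [n]^d \to [n]^{d-1}$ drops the $i$-th coordinate. (Each direction-$i$ line is determined by its $d-1$ remaining coordinates, so lines in direction $i$ hit by a set are in bijection with its image under $\pi_i$.) Suppose for contradiction $|A| = |B| = t$ (with $t = 3$ for $d = 3$ and $t = 2d-2$ for $d \ge 4$), $A \neq B$, and $\pi_i(A) = \pi_i(B)$ for all $i$. Pick $a_0 \in A \setminus B$ and translate coordinates so that $a_0 = \vzero$. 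Since $\pi_i(a_0) = \vzero \in \pi_i(B)$ for each $i$, there is $b_i \in B$ agreeing with $a_0$ in every coordinate except $i$; as $a_0 \notin B$, one has $b_i = p_i \ve_i$ with $p_i \neq 0$. The $b_i$ are pairwise distinct and form $B' \subseteq B$.

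Next I extract ``demands'' on $A$: for each ordered pair $(i, j)$ with $i \neq j$, the relation $\pi_i(b_j) \in \pi_i(A)$ produces $a_{i,j} \in A$ with $a_{i,j}^{(j)} = p_j$ and $a_{i,j}^{(k)} = 0$ for $k \notin \{i, j\}$. A short classification shows that the only points covering more than one demand are $b_j$ (covering all $d-1$ demands $(\cdot, j)$) and the ``corner'' points $c_{i,j} := p_i \ve_i + p_j \ve_j$ (covering exactly $(i, j)$ and $(j, i)$). Letting $s = |\{j : b_j \in A\}|$, the minimum size of $A \setminus \{a_0\}$ compatible with covering all $d(d-1)$ demands is
\[
f(s) = s + s(d-s) + \binom{d-s}{2},
\]
minimized at $s = d$ with $f(d) = d$, and satisfying $f(s) \ge f(d-1) = 2d-2$ for $s < d$. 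For $d = 3$, $|A \setminus \{a_0\}| = 2 < 3 \le \min_s f(s)$, an immediate contradiction. For $d \ge 4$, $|A \setminus \{a_0\}| = 2d-3$, which forces $s = d$; hence $B' \subseteq A$, and we can write $A = \{a_0\} \cup B' \cup X$ with $|X| = d-3$.

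To finish the $d \ge 4$ case, pick $b_0 \in B \setminus A$ (nonempty since $|A| = |B|$ and $a_0 \in A \setminus B$). For each $i$, $\pi_i(b_0) \in \pi_i(A) = \pi_i(B') \cup \pi_i(X)$; if $\pi_i(b_0) = \pi_i(x)$ for the same $x \in X$ at two distinct directions $i$, then $b_0 = x \in A$, a contradiction. Hence the set $I := \{i : \pi_i(b_0) \in \pi_i(B')\}$ has size at least $d - |X| = 3$. For each $i \in I$, $\pi_i(b_0) = \pi_i(b_{j(i)})$ for some $j(i) \in [d]$, which forces $\mathrm{supp}(b_0) \subseteq \{i\}$ when $j(i) = i$, or $\mathrm{supp}(b_0) \subseteq \{i, j(i)\}$ with $b_0^{(j(i))} = p_{j(i)}$ when $j(i) \neq i$.

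The main obstacle is this concluding case analysis. Intersecting the support constraints over $|I| \ge 3$ distinct values of $i$ forces $|\mathrm{supp}(b_0)| \le 2$. A subcase analysis on $|\mathrm{supp}(b_0)| \in \{0, 1, 2\}$ then shows, by matching the witnessed projection values across the $\ge 3$ directions in $I$, that $b_0 = a_0$ or $b_0 = b_s$ for some $s \in [d]$. Either way $b_0 \in A$, contradicting $b_0 \notin A$ and completing the proof.
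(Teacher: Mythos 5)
Your proof is correct, and its first half is the paper's argument in different clothing: after fixing $a_0 \in A \setminus B$ and the $d$ points $b_i \in B$ on the axis-parallel lines through $a_0$, both proofs lower-bound the number of points of $A \setminus \{a_0\}$ needed to hit the $d(d-1)$ lines $L_i(b_j)$, $i \neq j$; your closed form $f(s) = s + s(d-s) + \binom{d-s}{2}$ packages the paper's case analysis on the number of $b_j$'s present in $A$ into a single concave minimization, and both conclude $B' \subseteq A$ when $|A| \le 2d-2$ (and dispose of $d=3$ immediately). The endgame is genuinely different. The paper applies the same counting lemma a second time, symmetrically, to $b_0 \in B \setminus A$: it obtains a set $A' \subseteq B$ of $d$ witnesses, forces $|A' \cap B'| \ge 3$ by cardinality, and extracts a contradiction from three points each differing from $a_0$ and from $b_0$ in exactly one coordinate. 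You instead exploit that $A = \{a_0\} \cup B' \cup X$ with $|X| = d-3$, so at least three of the $d$ projection-witnesses for $b_0$ must come from $B'$, and three such constraints collapse the support of $b_0$ until $b_0 \in \{a_0, b_1, \dots, b_d\} \subseteq A$. This avoids the second invocation of the key lemma and is arguably more economical; the cost is shifted to the final support case analysis, which you only sketch but which does check out. Two small points to make explicit in a write-up: first, the identity $\pi_i(A) = \pi_i(B') \cup \pi_i(X)$ silently uses $\pi_i(a_0) = \pi_i(b_i)$ --- without this observation a witness equal to $a_0$ would cost one extra direction and leave only $|I| \ge 2$, which does not suffice for the support-two subcase; second, the claim $f(s) \ge f(d-1)$ for all $s < d$ fails at $d = 3$ (where $f(0) = 3 < 4$), but you only invoke it for $d \ge 4$, so no harm is done.
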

	
	We first fix some notation and make some observations. For $i \in [d]$ and $x \in P_d$, let $L_i(x) \in \fL_d$ denote the line that contains $x$ and all points that differ from $x$ only in the $i$-th coordinate. In particular, $\fL_d[x] = \{L_i(x) \mid i \in [d]\}$.
	
	Let $A, B \subseteq P_d$ be distinct sets of points of the same cardinality that intersect the same set of lines, i.e.\ $\fL_d[A] = \fL_d[B]$. We have to show that the cardinalities of $A$ and $B$ are at least $2d-1$ (respectively $4$, if $d=3$). 

Let $a_0 \in A \setminus B$. Each line in $\fL_d[a_0]$ must intersect some point in $B$. %
For $i \in [d]$, let $b_i$ denote a point in $L_i(a_0) \cap B$. Note that these points are distinct and let $B' = \{b_1, b_2, \dots, b_d\}$. Consider now the set of lines $\fL' = \{ L_i(b_j) \mid i,j \in [d], i \neq j \}$. Notice that these lines contain some point in $B'$, but do not contain $a_0$. See \Cref{fig:grid-line}~(d) and (e) for illustration. Further observe that
	\begin{enumerate}[(i)]
		\itemsep0em
		\item The set of lines $L_1(b_j), L_2(b_j), \dots, L_d(b_j)$ intersect in $b_j$;\label{item:grid-line-bar-sep:b_j-lines}
		\item the lines $L_i(b_j)$ and $L_j(b_i)$ intersect in a unique point for all $i \neq j$; and\label{item:grid-line-bar-sep:L_i-b_j-lines}
		\item no other pair of lines in $\fL'$ intersects.\label{item:grid-line-bar-sep:no-more-lines}
	\end{enumerate}
	
Claim (\ref{item:grid-line-bar-sep:b_j-lines}) is immediate from the definition of $L_i(b_j)$. Claim (\ref{item:grid-line-bar-sep:L_i-b_j-lines}) follows, as $L_i(b_j) \cap L_j(b_i)$ is the unique point that has the $i$-coordinate of $b_i$, the $j$-coordinate of $b_j$, and agrees with $a_0$ in all other coordinates. Claim (\ref{item:grid-line-bar-sep:no-more-lines}) follows, as pairs of lines $L_i(b_j)$ and $L_k(b_\ell)$ are parallel if $i=k$, and not coplanar otherwise. %
	
	Recall that each line $L \in \fL'$ must be hit by some point $a \in A \setminus \{a_0\}$. If $d = 3$, then the lines of $\fL'$ form a $6$-cycle, as shown in \Cref{fig:grid-line} (d). Clearly, to hit these lines, we need at least $3$ points, so $|A \setminus \{a_0\}| \ge 3$, and thus $|A| \ge 4$. This settles the case $d = 3$.

For $d \ge 4$, we argue as follows. %
From the above claims it follows that points of $a \in A \setminus \{a_0\}$ can be of three types: (i) points of $B'$, (ii) intersection points $L_i(b_j) \cap L_j(b_i)$ for $i \neq j$, and (iii) other points.

Observe that points of type (i) hit at most $d-1$ lines of $\fL'$, points of type (ii) hit at most two lines of $\fL'$, and points of type (iii) hit at most one line of $\fL'$.
We claim that if $A$ is sufficiently small, then it must contain all possible points of type (i).

	\begin{lemma} \label{p:subd-K_d}
		If $|A| \le 2d - 2$, then $B' \subseteq A$.
	\end{lemma}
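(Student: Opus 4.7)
The plan is a double-counting argument by contradiction. I assume $m := |B' \setminus A| \ge 1$ and show that the remaining points of $A$ cannot hit every line of $\fL'$. Partitioning $A \setminus \{a_0\}$ into the three types~(i),~(ii),~(iii) from the preceding discussion with counts $t_1, t_2, t_3$, the identity $t_1 = |A \cap B'| = d - m$ combined with $|A| - 1 \le 2d - 3$ yields $t_2 + t_3 \le d - 3 + m$. Because each $b_j \in A \cap B'$ covers exactly the $d-1$ lines $\{L_i(b_j) \mid i \neq j\}$, and these sets are pairwise disjoint across $j$ by observation~(\ref{item:grid-line-bar-sep:no-more-lines}), the $m(d-1)$ lines with ``center'' in $B' \setminus A$ must be covered entirely by type-(ii) and type-(iii) points.

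Applying the crude per-point bounds ($\le 2$ and $\le 1$ hits of $\fL'$, respectively) gives $m(d-1) \le 2t_2 + t_3 \le 2(t_2 + t_3) \le 2(d - 3 + m)$, which rearranges to $m(d - 3) \le 2(d - 3)$, so $m \le 2$ when $d \ge 4$. The main obstacle, and the only real work left, is ruling out the two boundary cases $m \in \{1, 2\}$. For this I will refine the accounting by noting that a type-(ii) point $L_i(b_j) \cap L_j(b_i)$ contributes to the must-hit subset of $\fL'$ only through those of its two line-hits whose ``center'' $b_j$ or $b_i$ is missing from $A$.

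For $m = 1$, writing $B' \setminus A = \{b_k\}$, both line-hits of a type-(ii) point land in the must-hit set only if $i = j = k$, violating $i \neq j$; thus the type-(ii) bound degrades to $\le 1$ per point, so coverage is at most $t_2 + t_3 \le d - 2$, short of the required $d - 1$. For $m = 2$, writing $B' \setminus A = \{b_k, b_\ell\}$, both line-hits land in the must-hit set exactly when $\{i, j\} = \{k, \ell\}$, which singles out the unique point $L_k(b_\ell) \cap L_\ell(b_k)$; every other type-(ii) point contributes at most one. Hence coverage is at most $2 + (t_2 - 1) + t_3 \le d$, whereas the must-hit set has size $2(d - 1) = 2d - 2 > d$ for $d \ge 4$. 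Both boundary cases thus contradict the hypothesis, forcing $m = 0$ and therefore $B' \subseteq A$.
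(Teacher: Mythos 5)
Your proof is correct and follows essentially the same double-counting strategy as the paper: the paper's case analysis on $k_1 \in \{\le d-3,\, d-2,\, d-1\}$ corresponds exactly to your global inequality ruling out $m \ge 3$ followed by the refined per-line accounting for $m = 1$ and $m = 2$. The only cosmetic difference is that in the $m=2$ case the paper restricts attention to the $2(d-2)$ lines $L_i(b_j), L_i(b_k)$ with $i \notin \{j,k\}$ so that every point hits at most one of them, whereas you keep all $2(d-1)$ lines and explicitly account for the single type-(ii) point $L_k(b_\ell) \cap L_\ell(b_k)$ that can hit two.
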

	\begin{proof}

Let $k_1$, $k_2$, and $k_3$ denote, respectively the number of points of type (i), (ii), and (iii) in $A \setminus \{a_0\}$. Note that $k_1 \leq d$.

As $|\fL'| = (d-1)d$, to hit all lines, we must have $(d-1)k_1 + 2k_2 + k_3 \geq d(d-1)$.

Suppose $k_1 \leq d-3$. As $|A \setminus \{a_0\}| = k_1 + k_2 + k_3 \leq 2d-3$, we have $(d-1)k_1 + 2k_2 + k_3 \leq (d-1)k_1 + 2(2d-3-k_1) < d(d-1)$, a contradiction, as $d \ge 4$.

Suppose $k_1 = d-1$. Then, there is a point in $B' \setminus A$. Let this point be $b_j$. All $d-1$ lines $L_i(b_j)$ for $i \neq j$ must be hit by some point of type (ii) or (iii), moreover, no point can hit more than one of these lines, so $k_2+k_3 \geq d-1$ must hold. However, $k_2 + k_3 \leq 2d-3 - (d-1) = d-2$, a contradiction.   	

Suppose $k_1 = d-2$. Then there are two points in $B' \setminus A$. Let these points be $b_j$ and $b_k$. Consider the $2(d-2)$ lines $L_i(b_j)$ and $L_i(b_k)$ for $i \neq j$ and $i \neq k$. All these lines must be hit by some point of type (ii) or (iii), moreover, no point can hit more than one of these lines, so $k_2 + k_3 \geq 2(d-2)$ must hold. However, $k_2 + k_3 \leq 2d-3 - (d-2) = d-1$, a contradiction, as $d \ge 4$.

The only remaining possibility is $k_1 = d$, so $A$ must contain all $d$ points in $B'$.
	\end{proof}

It remains to show that $|A| \geq 2d - 1$. Suppose for contradiction that $|A| \le 2d - 2$. By \Cref{p:subd-K_d}, $B' \subseteq A$. 

Let $b_0 \subseteq B \setminus A$ (recall that $A \neq B$). By symmetry, for each $i \in [d]$, there is a point $a_i \in A$ that is contained in $L_i(b_0)$. Let $A' = \{a_1, a_2, \dots, a_d\}$. Again, applying \Cref{p:subd-K_d}, we have $A' \subseteq B$. 

Observe that, as $b_0 \notin A$ (by definition), we have $b_0 \notin A' \cup B'$ (Since $A',B' \subseteq A$). Thus, $A' \cup B' \subseteq B \setminus \{b_0\}$. As $|B| \leq 2d-2$, it follows that $|A' \cup B'| \leq 2d-3$ must hold. Recall that $|A'| = |B'| = d$, and therefore $|A' \cap B'| \geq 3$.

	Let $\{x_1,x_2,x_3\} \subseteq A' \cap B'$. %
	Observe that $x_1$ differs from $a_0$ in exactly one coordinate (say, coordinate $i$) and from $b_0$ in exactly one coordinate (say, coordinate $j$). Thus, as $a_0 \neq b_0$, $a_0$ and $b_0$ must differ precisely in coordinates $i$ and $j$. Now $x_2$ and $x_3$, to be different from $x_1$, both must differ only in $j$ from $a_0$ and in $i$ from $b_0$. This implies $x_2 = x_3$, a contradiction, concluding the proof of Proposition~\ref{p:grid-line-sep}.
	
	\subsection{Mapping into two dimensions}
	
	Note that, in terms of point inclusion, an axis-parallel line is equivalent to a (sufficiently long and thin) hyperrectangle. Recall that $|P_d| = n^d$ and $|\fL_d| = d n^{d-1}$, so we already have $\rD^d_{d-1}(n) \in \Omega( n^{1 + 1/(d-1)})$ from \Cref{p:grid-line-disjunct} and $\rS^3_3(n) \in \Omega(n^{\sfrac{3}{2}})$, $\rS^d_{2d-2}(n) \in~\Omega(n^{1 + 1/(d-1)})$ for $d \ge 4$ from \Cref{p:grid-line-sep}. 
	
	The first bound will be improved in subsequent sections. We now strengthen the second and third bounds by %
	showing that $(P_d, \fL_d)$ is isomorphic to a \emph{two-dimensional} rectangle-induced set system. This proves:
	\restateGridLineLowerBounds*
	
	We construct the necessary isomorphism directly, that is, we define functions $f \colon P_d \rightarrow~\R^2$ and $g \colon \fL_d \rightarrow \fR_2$, such that, for each $x \in P_d$ and $L \in \fL_d$, we have $x \in L$ if and only if $f(x) \in g(L)$.
	
	For a set $X \subseteq P_d$, we write $f(X) = \{ f(x) \mid x \in X \}$. Assuming that $f$ is injective, the above condition can be written as $f(P_d \cap L) = f(P_d) \cap g(L)$. We prove the two directions separately (for all $L \in \fL_d$):
	\begin{align*}
		& f(P_d \cap L) \subseteq f(P_d) \cap g(L), & \text{(i)} \\
		& f(P_d) \cap g(L) \subseteq f(P_d \cap L). & \text{(ii)}
	\end{align*}
	
	We start with the definition of $f$ and some observations. Let $\varepsilon = \frac{1}{n+1}$, and let
	\begin{align*}
		f( x_1, x_2, \dots, x_d ) = \left( \sum_{i=1}^d \varepsilon^{i-1} x_i, \sum_{i=1}^d \varepsilon^{d-i} x_i \right).
	\end{align*}
	
	Intuitively, as $i$ grows, $x_i$ contributes less to $f(x)_1$ and more to $f(x)_2$. Moreover, it is easy to show that $\sum_{j=i}^d \varepsilon^{j-1} n < \varepsilon^{i-2}$ for all $i \in [d]$. This means that changing $x_i$ by only one has a larger effect on $f(x)_1$ then all possible changes to the variables $x_j$ with $j > i$ combined (recall that $x_j \in [n]$). In particular, for $x, x' \in P_d$, we have $f(x)_1 < f(x')_1$ if and only if $x$ precedes $x'$ in lexicographic order, i.e.\ there is some $i$ such that $x_i < x'_i$ and $x_j = x'_j$ for all $j < i$. Similarly, $f(x)_2 < f(x')_2$ if and only if \emph{the reverse} of $x$ precedes the reverse of $x'$ in lexicographic order, i.e.\ there is some $i$ such that $x_i < x'_i$ and $x_j = x'_j$ for all $j > i$. This implies that $f$ is injective.
	
	We proceed with the definition of $g$. To satisfy (i), $g(L)$ must contain all points in $f(\fP_d \cap L)$. When this is true, increasing the size of $g(L)$ clearly does not help in satisfying (ii). Consequently, we let $g(L)$ be the inclusion-wise minimal axis-parallel rectangle containing all points in $f(P_d \cap L)$. Note that we consider a point or a line segment to be a degenerate rectangle with width and/or height 0.
	
	Now (i) holds by definition. The following lemma implies (ii).
	
	\begin{lemma}\label{p:proj-enclosing-rectangles}
		Let $x \in P_d$ and $L \in \fL_d$ such that $x \notin L$. Then $f(x) \notin g(L)$.
	\end{lemma}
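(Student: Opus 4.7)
Let $L = L_k(y)$ for some $k \in [d]$ and some $y \in P_d$, so that the points of $P_d \cap L$ are exactly the tuples $y^{(s)} = (y_1,\dots,y_{k-1},s,y_{k+1},\dots,y_d)$ for $s \in [n]$. My first step is to determine $g(L)$ explicitly. Observe that both coordinates of $f(y^{(s)})$ depend linearly and monotonically increasingly on $s$ (the first via the $\varepsilon^{k-1}$ term, the second via the $\varepsilon^{d-k}$ term), while all other terms in the sums are fixed. Hence, writing $y^{\min} = y^{(1)}$ and $y^{\max} = y^{(n)}$, the minimal axis-parallel rectangle enclosing $f(P_d \cap L)$ is
\begin{align*}
g(L) = [f(y^{\min})_1, f(y^{\max})_1] \times [f(y^{\min})_2, f(y^{\max})_2].
\end{align*}

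Now suppose, for contradiction, that $x \in P_d \setminus L$ but $f(x) \in g(L)$. Since $x \notin L$, there is at least one coordinate $j \neq k$ for which $x_j \neq y_j$. I will derive a contradiction by separately analyzing coordinates $j < k$ (using the first coordinate of $f$) and coordinates $j > k$ (using the second coordinate of $f$).

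For the first coordinate, recall the lexicographic characterization stated just before the lemma: $f(x)_1 < f(x')_1$ iff $x$ precedes $x'$ in lexicographic order. Let $j$ be the smallest index with $x_j \neq y_j$; assume $j < k$. Since $y^{\min}$ and $y^{\max}$ agree with $y$ in all coordinates except $k > j$, we have $x_i = y^{\min}_i = y^{\max}_i$ for all $i < j$, and at position $j$ the tuples $y^{\min}, y^{\max}$ both take the value $y_j$. If $x_j < y_j$ then $x$ precedes $y^{\min}$ lexicographically, so $f(x)_1 < f(y^{\min})_1$, contradicting the left inclusion in the rectangle. If $x_j > y_j$, a symmetric argument gives $f(x)_1 > f(y^{\max})_1$, again a contradiction. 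Therefore no such $j < k$ exists, i.e.\ $x_j = y_j$ for every $j < k$. An entirely analogous argument, using the reverse-lexicographic characterization of the second coordinate of $f$, shows that $x_j = y_j$ for every $j > k$.

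Combining the two, $x$ agrees with $y$ in every coordinate $j \neq k$, which means $x \in L$, contradicting the hypothesis. The only nontrivial point in the argument is extracting $g(L)$ as the axis-aligned box spanned by $f(y^{\min})$ and $f(y^{\max})$; this relies exactly on the monotonicity of $f$ along $L$, which in turn is built into the definition of $f$. Everything else is a routine manipulation of the two lex-order characterizations already established in the excerpt.
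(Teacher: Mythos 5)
Your proof is correct and follows essentially the same route as the paper's: both arguments rest on the lexicographic characterization of $f(\cdot)_1$ and the reverse-lexicographic characterization of $f(\cdot)_2$, splitting the disagreeing coordinates into those before and after the free coordinate of $L$. Your explicit identification of $g(L)$ as the box spanned by $f(y^{(1)})$ and $f(y^{(n)})$ is just a concrete restatement of the paper's observation that $U$ (resp.\ $V$) consists of the images of points lexicographically (resp.\ reverse-lexicographically) between points of $P_d \cap L$.
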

	\begin{proof}
		
		We start with an observation. As an axis-parallel rectangle, $g(L)$ is the product of two intervals, say $g(L) = U \times V$. In particular, $U$ is the smallest interval containing all first coordinates of points in $f(P_d \cap L)$, and $V$ is the smallest interval containing all second coordinates of points in $f(P_d \cap L)$. Thus, using our observation above, a point $y \in P_d$ satisfies $f(y)_1 \in U$ if and only if $y$ is lexicographically between two points in $P_d \cap L$. Similarly, $f(y)_2 \in V$ if and only if the reverse of $y$ is lexicographically between the reverses of two points in $P_d \cap L$.
		
		We now prove the lemma. Let $i$ be the one coordinate that is not fixed by $L$, i.e.\ $L = \{ y \mid \forall j \in [d] \setminus \{i\} : y_i = a_i \} \in \fL_d$ for some $i \in [d]$ and $a_j \in [n]$. As $x \notin L$, there must be some $j$ such that $x_j \neq a_j$. First, suppose that $j < i$. If $x_j < a_j$ then $x$ is lexicographically smaller than all points in $P_d \cap L$. Otherwise, $x$ is lexicographically greater than all points in $P_d \cap L$. In both cases, $f(x)_1 \notin U$, so $f(x) \notin g(L)$.
		
		If $j > i$, then the symmetric argument shows $f(x)_2 \notin V$. This concludes the proof.
	\end{proof}
	
	Having shown that $f$ and $g$ satisfy (i) and (ii), this concludes the proof of \Cref{p:grid-line-lower-bounds}.

	\section{Hyperplanes}\label{sec:hyperplanes}
	
		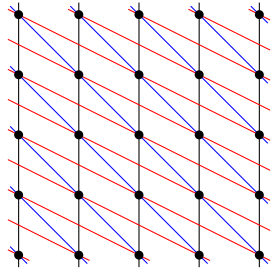
\begin{figure}
		\centering
		\begin{tikzpicture}[
			point/.style={circle, fill, inner sep=1.2pt},
			lineseg0/.style={},
			lineseg1/.style={blue},
			lineseg2/.style={red},
			scale = 0.8
		]
			\draw[lineseg0](0.0, -0.2) -- (0.0,4.2);
\draw[lineseg0](1.0, -0.2) -- (1.0,4.2);
\draw[lineseg0](2.0, -0.2) -- (2.0,4.2);
\draw[lineseg0](3.0, -0.2) -- (3.0,4.2);
\draw[lineseg0](4.0, -0.2) -- (4.0,4.2);
\draw[lineseg1](0.1414213562373095, -0.1414213562373095) -- (-0.1414213562373095,0.1414213562373095);
\draw[lineseg1](1.1414213562373094, -0.1414213562373095) -- (-0.1414213562373095,1.1414213562373094);
\draw[lineseg1](2.1414213562373097, -0.1414213562373095) -- (-0.1414213562373095,2.1414213562373097);
\draw[lineseg1](3.1414213562373097, -0.1414213562373095) -- (-0.1414213562373095,3.1414213562373097);
\draw[lineseg1](4.141421356237309, -0.1414213562373095) -- (-0.1414213562373095,4.141421356237309);
\draw[lineseg1](4.141421356237309, 0.8585786437626906) -- (0.8585786437626906,4.141421356237309);
\draw[lineseg1](4.141421356237309, 1.8585786437626906) -- (1.8585786437626906,4.141421356237309);
\draw[lineseg1](4.141421356237309, 2.8585786437626903) -- (2.8585786437626903,4.141421356237309);
\draw[lineseg1](4.141421356237309, 3.8585786437626903) -- (3.8585786437626903,4.141421356237309);
\draw[lineseg2](0.17888543819998318, -0.08944271909999159) -- (-0.17888543819998318,0.08944271909999159);
\draw[lineseg2](1.1788854381999831, -0.08944271909999159) -- (-0.17888543819998318,0.5894427190999916);
\draw[lineseg2](2.178885438199983, -0.08944271909999159) -- (-0.17888543819998318,1.0894427190999916);
\draw[lineseg2](3.178885438199983, -0.08944271909999159) -- (-0.17888543819998318,1.5894427190999916);
\draw[lineseg2](4.178885438199983, -0.08944271909999159) -- (-0.17888543819998318,2.0894427190999916);
\draw[lineseg2](4.178885438199983, 0.4105572809000084) -- (-0.17888543819998318,2.5894427190999916);
\draw[lineseg2](4.178885438199983, 0.9105572809000084) -- (-0.17888543819998318,3.0894427190999916);
\draw[lineseg2](4.178885438199983, 1.4105572809000084) -- (-0.17888543819998318,3.5894427190999916);
\draw[lineseg2](4.178885438199983, 1.9105572809000084) -- (-0.17888543819998318,4.089442719099992);
\draw[lineseg2](4.178885438199983, 2.4105572809000084) -- (0.8211145618000169,4.089442719099992);
\draw[lineseg2](4.178885438199983, 2.9105572809000084) -- (1.8211145618000169,4.089442719099992);
\draw[lineseg2](4.178885438199983, 3.4105572809000084) -- (2.821114561800017,4.089442719099992);
\draw[lineseg2](4.178885438199983, 3.9105572809000084) -- (3.821114561800017,4.089442719099992);
\node[point] at (0, 0) {};
\node[point] at (0, 1) {};
\node[point] at (0, 2) {};
\node[point] at (0, 3) {};
\node[point] at (0, 4) {};
\node[point] at (1, 0) {};
\node[point] at (1, 1) {};
\node[point] at (1, 2) {};
\node[point] at (1, 3) {};
\node[point] at (1, 4) {};
\node[point] at (2, 0) {};
\node[point] at (2, 1) {};
\node[point] at (2, 2) {};
\node[point] at (2, 3) {};
\node[point] at (2, 4) {};
\node[point] at (3, 0) {};
\node[point] at (3, 1) {};
\node[point] at (3, 2) {};
\node[point] at (3, 3) {};
\node[point] at (3, 4) {};
\node[point] at (4, 0) {};
\node[point] at (4, 1) {};
\node[point] at (4, 2) {};
\node[point] at (4, 3) {};
\node[point] at (4, 4) {};
		\end{tikzpicture}
		
		\caption{The line arrangement of \Cref{p:hyperplanes_on_grid} with $m = 5$, $k = 2$, and $t = 2$. We have $\ell = 3$ and $c_1 = (1,0), c_2 = (1,1), c_3 = (1,2)$.}
	\end{figure}
	
	In this section we consider a configuration of $d$-dimensional grid points with a set of arbitrary, not necessarily axis-parallel, \emph{hyperplanes}. We then construct %
	equivalent set systems induced by hyperrectangles. 
	
	For $k \in \N_{+}$, consider the vectors $c_i = (1, (i-1), (i-1)^2, \dots,(i-1)^{k-1})$ in $\R^k$, for $i \in [\ell]$, where $\ell = (k-1)t + 1$. Observe that every set of $k$ distinct vectors $c_i$ is linearly independent.
	Define the hyperplanes $H_{i,j} = \{ x \in \R^k \mid c_i \cdot x = j \}$ for all $i\in [\ell]$ and $j \in \N_{+}$.
Observe that $H_{i,j}$ and $H_{i',j'}$ are distinct, unless $i=i'$ and $j=j'$, and that $H_{i,j}$ and $H_{i,j'}$ are parallel for all $i,j,j'$. %

	\begin{lemma} \label{p:hyperplanes_on_grid}
			Let $m \in \N_{+}$ with $m > k$, and let $\fS = \{ H_{i,j} \mid i \in [\ell], j \in [\ell^k m] \}$.
		 The set system $([m]^k, \fS)$ is $t$-disjunct.
	\end{lemma}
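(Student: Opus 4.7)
The plan is a direct verification of $t$-disjunctness via a polynomial-roots argument exploiting the Vandermonde structure of the vectors $c_i$. Fix an arbitrary $t$-subset $Y \subseteq [m]^k$ and an arbitrary $x \in [m]^k \setminus Y$. The goal is to produce a single hyperplane $H \in \fS$ with $x \in H$ and $H \cap Y = \emptyset$; this suffices by the definition of $t$-disjunctness.

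For each $i \in [\ell]$, the hyperplane $H_{i, c_i \cdot x}$ passes through $x$ by construction. I first check that it actually lies in $\fS$, i.e.\ that $c_i \cdot x \in [\ell^k m]$. Since every $x_j \ge 1$, we have $c_i \cdot x \ge 1$, and since $x_j \le m$ and $(i-1)^{j-1} \le \ell^{k-1}$, we get $c_i \cdot x \le k m \ell^{k-1} \le \ell^k m$, using the easy inequality $k \le \ell = (k-1)t + 1$. So $H_{i, c_i \cdot x} \in \fS[x]$ for every $i \in [\ell]$.

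Next I count how many of these $\ell$ candidate hyperplanes can hit $Y$. For a given $y \in Y$, we have $y \in H_{i, c_i \cdot x}$ iff $c_i \cdot (y - x) = 0$. Writing $v = y - x$, which is a nonzero integer vector in $\R^k$, define
\begin{align*}
P_v(z) = \sum_{j=1}^k v_j\, z^{j-1}.
\end{align*}
Then $c_i \cdot v = P_v(i-1)$, so the set of bad indices $i \in [\ell]$ for this $y$ is exactly the set of $i$ with $P_v(i-1) = 0$. Since $P_v$ is a nonzero polynomial of degree at most $k-1$, it has at most $k-1$ roots, so at most $k-1$ indices $i \in [\ell]$ are bad for $y$.

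Taking a union over the $t$ elements of $Y$, the total number of bad indices is at most $t(k-1) = \ell - 1 < \ell$. Hence some good index $i^* \in [\ell]$ exists, and the hyperplane $H_{i^*, c_{i^*} \cdot x}$ lies in $\fS[x] \setminus \fS[Y]$, showing $\fS[x] \not\subseteq \fS[Y]$. Since $Y$ and $x$ were arbitrary, $([m]^k, \fS)$ is $t$-disjunct. The only subtle step is the Vandermonde-style bound on bad indices; the rest is bookkeeping to confirm that the chosen hyperplanes remain in the finite family $\fS$.
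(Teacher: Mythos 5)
Your proof is correct and follows essentially the same route as the paper: both arguments show that each $y \in Y$ can lie on at most $k-1$ of the $\ell$ hyperplanes through $x$ and then conclude by the union bound $t(k-1) < \ell$, with your polynomial-roots formulation being just another phrasing of the paper's full-rank Vandermonde argument. Your explicit check that $c_i \cdot x \in [\ell^k m]$ is a detail the paper leaves implicit, but it does not change the substance.
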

	\begin{proof}

		Let $x, y \in [m]^k$ be distinct points. We claim that there are at most $(k-1)$ hyperplanes in $\fS$ that contain both $x$ and $y$. Indeed, take $k$ hyperplanes in $\fS$ and consider their intersection. If two of them are parallel, their intersection is empty. Otherwise, their intersections are solutions of an equation system $Ax = b$, where $b$ is some vector and $A \in \N^{k \times k}$ is a matrix whose rows are distinct vectors $c_i$. This means that $A$ has full rank, so if $Ax = b$ has a solution, it is unique. This proves the claim.
		
		Now suppose that $([m]^k, \fS)$ is not $t$-disjunct. Then there is an $x \in [m]^k$ and a $Y \subseteq [m]^k$ with $|Y| = t$ and $x \notin Y$ such that $\fS[x] \subseteq \fS[Y]$. We will reach a contradiction by showing that 
		$|\fS[x]| > |\fS[x] \cap \fS[Y]|$.
		
		First, observe that each $x \in [m]^k$ is contained in exactly $\ell$ hyperplanes, so $|\fS[x]| = \ell > t(k-1)$.
		On the other hand, by our claim,
		\begin{align*}
			& |\fS[x] \cap \fS[Y]| \leq
			 \sum_{y \in Y} |\fS[x] \cap \fS[y]| \leq t(k-1),
		\end{align*}
		yielding the contradiction.
	\end{proof}

	We remark that \Cref{p:hyperplanes_on_grid} implies that for each fixed $t \ge 1$, there is a set $P \subseteq \R^d$ of points and a set $\fS$ of hyperplanes in $\R^d$ such that $(P, \fS)$ is $t$-disjunct and $|P| \in \Omega( |\fS|^d )$. As we show in \Cref{sec:rect-upper-bounds}, such a result is not possible for hyperrectangles.
	
	We now derive the lower bound on $\rD_t^d(n)$. Observe that in \Cref{p:hyperplanes_on_grid} we construct $\ell$ $(\ell^k m)$-partitions of the grid $[m]^k$, each consisting of parallel hyperplanes. The following lemma shows that with $d$-dimensional hyperrectangles, we can construct a set of $d$ arbitrary partitions. The idea is simply to use values of the $i$-th coordinate to encode the $i$-th partition. Each part of a partition corresponds to a $d$-rectangle.
	
	\begin{lemma} \label{p:rect_part_union}
		Let $\Pi_1, \Pi_2, \dots, \Pi_d$ a set of $q$-partitions of some set $X$, and let $\fS = \Pi_1 \cup \Pi_2 \cup \dots \cup \Pi_d \subseteq 2^X$. Then there are mappings $r \colon \fS \rightarrow \fR_d$ and $p \colon X \rightarrow \R^d$ such that for all $x \in X$ and $S \in \fS$, we have $x \in S \Leftrightarrow p(x) \in r(S)$.
	\end{lemma}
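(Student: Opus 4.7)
The plan is to use each of the $d$ coordinates of $\R^d$ to encode membership in the corresponding partition. Index the parts of $\Pi_i$ as $\Pi_i = \{S_{i,1}, S_{i,2}, \dots, S_{i,q}\}$ for each $i \in [d]$. For every element $x \in X$, define the point $p(x) \in \R^d$ by setting $p(x)_i = j$ iff $x \in S_{i,j}$. Because $\Pi_i$ is a partition of $X$, exactly one such $j$ exists for each coordinate $i$, so $p$ is a well-defined function.

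For each set $S \in \fS$, pick any pair $(i,j)$ with $S = S_{i,j}$ (if $S$ lies in several partitions, an arbitrary choice works), and define the $d$-rectangle $r(S) = I_1 \times I_2 \times \dots \times I_d$, where $I_i = [j,j]$ is the degenerate interval containing only the value $j$, and $I_k = [1,q]$ for all $k \neq i$. Then $p(x) \in r(S)$ holds iff $p(x)_i = j$ and $p(x)_k \in [1,q]$ for all $k \neq i$; the second condition is automatic since $p(x)_k \in \{1, \dots, q\}$ by construction. Hence $p(x) \in r(S) \iff p(x)_i = j \iff x \in S_{i,j} = S$, as required.

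There is no real obstacle: the construction is just a direct encoding of partition labels as Cartesian coordinates, and the defining properties of a partition (disjointness and covering) guarantee both that $p$ is well-defined and that the thin axis-parallel slab $I_i = [j,j]$ in the $i$-th coordinate selects precisely the points corresponding to $S_{i,j}$. If one prefers strictly non-degenerate intervals, replacing $[j,j]$ with $[j - \tfrac{1}{3}, j + \tfrac{1}{3}]$ yields the same containment pattern, since the coordinates of $p(x)$ are integers in $\{1, \dots, q\}$.
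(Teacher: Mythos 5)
Your proof is correct and follows essentially the same construction as the paper: encode the part of $\Pi_i$ containing $x$ as the $i$-th coordinate of $p(x)$, and realize each part $S_{i,j}$ as a thin axis-parallel slab fixing that coordinate to $j$. The only cosmetic difference is that the paper writes $R(S_{i,j}) = \R^{i-1}\times\{j\}\times\R^{d-i}$ and then remarks it can be replaced by a large thin rectangle, whereas you use the bounded box $[1,q]^{d}$ restricted in coordinate $i$ directly.
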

	\begin{proof}
		Let $\Pi_i = \{ S_{i,1}, S_{i,2}, \dots, S_{i,q} \}$ for each $i \in [d]$. Define $p$ such that $p(x)_i = j$ if $x \in S_{i,j}$ for $i \in [d], j \in [q]$. Observe that $p(x)_i$ is well-defined, as $\Pi_i$ is a partition. Further, define $R( S_{i,j} ) = \R^{i-1} \times \{j\} \times \R^{d-i}$. Now $x \in S_{i,j}$ if and only if $p(x) \in R(S_{i,j})$. Note that while $R(S_{i,j})$ is technically an axis-parallel hyperplane, it can be replaced by a sufficiently large (and thin) $d$-rectangle.
	\end{proof}
	
	From \Cref{p:hyperplanes_on_grid} (setting $\ell = d$) and \Cref{p:rect_part_union} it follows that for each $k$, $t$, $m$ and $d = (k-1)t+1$ there is a set $X \in \R^d$ of $m^k$ points and a set $\fS$ of $d^{k+1}m$ hyperrectangles, such that $(X, \fS)$ is $t$-disjunct. This implies $\rD_t^d(n) \ge (n / d^{k+1})^k$. After solving for $k$, we obtain:
	
	\restateHyperplaneRectLowerBounds*
	
	\section{Long rectangles}\label{sec:long-rectangles}
	
	We start by verbally describing the technique presented in this section in two dimensions. The main idea is to transform a $t$-disjunct arrangement into a $(t+1)$-disjunct arrangement, so that the ``ratio'' of rectangles to points increases only moderately. Repeating this step yields superlinear lower bounds for $\rD_t^d(n)$ for all $t \ge d$.
	
	An example for a single step in two dimensions is shown in \Cref{fig:long_rect_step}. Start with an arbitrary $t$-disjunct arrangement of $m$ points and $n$ rectangles. If multiple points are on a single horizontal line, perturb them without changing the combinatorial structure. Then make $k$ copies of the arrangement ($k$ is to be optimized later) and place them side by side, horizontally. Finally, for each point $p$ in the original arrangement, add one long and thin rectangle that covers all copies of $p$. This yields an arrangement with $m' = km$ points and $n' = kn+m$ rectangles. In the following, we show that the new arrangement is $(t+1)$-disjunct, and discuss the choice of $k$. We remark that if we start with a two-dimensional arrangement isomorphic to the grid-line arrangement $(P_d, \fL_d)$ from \Cref{sec:grid-lines}, we obtain in one step an arrangement isomorphic to $(P_{d+1}, \fL_{d+1})$.
	
	\begin{figure}[htbp]
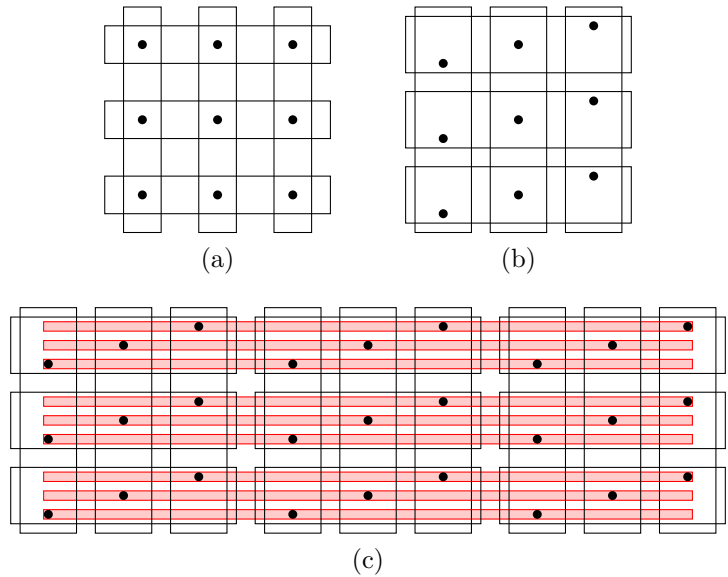

		\centering \small
		\begin{tikzpicture}[
			point/.style={circle, fill, inner sep=1.2pt},
			rect/.style={},
			long-rect/.style={rect, red, fill={red!20!white}},
			scale = 0.25
		]
		\begin{scope}[shift={(5,0)}]
			\input{fig_grid_2d.tex}
			\node at (6,-1.5) {(a)};
		\end{scope}
		
		\begin{scope}[shift={(21,0)}]
			\input{fig_grid_2d_pert.tex}
			\node at (6,-1.5) {(b)};
		\end{scope}
		
		\begin{scope}[shift={(0,-16)}]
			\draw[long-rect] (1.75, 0.75) rectangle (36.25,1.25);
\draw[long-rect] (1.75, 4.75) rectangle (36.25,5.25);
\draw[long-rect] (1.75, 8.75) rectangle (36.25,9.25);
\draw[long-rect] (1.75, 1.75) rectangle (36.25,2.25);
\draw[long-rect] (1.75, 5.75) rectangle (36.25,6.25);
\draw[long-rect] (1.75, 9.75) rectangle (36.25,10.25);
\draw[long-rect] (1.75, 2.75) rectangle (36.25,3.25);
\draw[long-rect] (1.75, 6.75) rectangle (36.25,7.25);
\draw[long-rect] (1.75, 10.75) rectangle (36.25,11.25);
\node[point] at (2, 1.0) {};
\node[point] at (2, 5.0) {};
\node[point] at (2, 9.0) {};
\node[point] at (6, 2.0) {};
\node[point] at (6, 6.0) {};
\node[point] at (6, 10.0) {};
\node[point] at (10, 3.0) {};
\node[point] at (10, 7.0) {};
\node[point] at (10, 11.0) {};
\draw[rect] (0, 0.5) rectangle (12,3.5);
\draw[rect] (0.5, 0) rectangle (3.5,12);
\draw[rect] (0, 4.5) rectangle (12,7.5);
\draw[rect] (4.5, 0) rectangle (7.5,12);
\draw[rect] (0, 8.5) rectangle (12,11.5);
\draw[rect] (8.5, 0) rectangle (11.5,12);
\node[point] at (15, 1.0) {};
\node[point] at (15, 5.0) {};
\node[point] at (15, 9.0) {};
\node[point] at (19, 2.0) {};
\node[point] at (19, 6.0) {};
\node[point] at (19, 10.0) {};
\node[point] at (23, 3.0) {};
\node[point] at (23, 7.0) {};
\node[point] at (23, 11.0) {};
\draw[rect] (13, 0.5) rectangle (25,3.5);
\draw[rect] (13.5, 0) rectangle (16.5,12);
\draw[rect] (13, 4.5) rectangle (25,7.5);
\draw[rect] (17.5, 0) rectangle (20.5,12);
\draw[rect] (13, 8.5) rectangle (25,11.5);
\draw[rect] (21.5, 0) rectangle (24.5,12);
\node[point] at (28, 1.0) {};
\node[point] at (28, 5.0) {};
\node[point] at (28, 9.0) {};
\node[point] at (32, 2.0) {};
\node[point] at (32, 6.0) {};
\node[point] at (32, 10.0) {};
\node[point] at (36, 3.0) {};
\node[point] at (36, 7.0) {};
\node[point] at (36, 11.0) {};
\draw[rect] (26, 0.5) rectangle (38,3.5);
\draw[rect] (26.5, 0) rectangle (29.5,12);
\draw[rect] (26, 4.5) rectangle (38,7.5);
\draw[rect] (30.5, 0) rectangle (33.5,12);
\draw[rect] (26, 8.5) rectangle (38,11.5);
\draw[rect] (34.5, 0) rectangle (37.5,12);
			\node at (19,-1.5) {(c)};
			\end{scope}
		\end{tikzpicture}
		\small 
		\caption{The first step of the long rectangle construction in two dimensions.
			\emph{(a)} Base case for $k=3$, with $2k$ rectangles and $k^2$ points (simplified from what \Cref{p:hyperplanes_on_grid} yields).
			\emph{(b)} Perturbation so that all points have distinct $y$-coordinates.
			\emph{(c)} Three copies of the perturbed configuration, arranged along $x$-coordinate with added long rectangles in red.}
		\label{fig:long_rect_step}
	\end{figure}
	
	\begin{lemma} \label{p:long_rects_constr}
		Let $X \in \R^d$ be a set of $m$ points and let $\fS$ be a set of $n$ $d$-rectangles such that $(X, \fS)$ is $t$-disjunct, and let $k \in \N_+$. Then there is a set $\fS'$ of $d$-rectangles and a set $X' \in \R^d$ of points such that $(X', \fS')$ is $(t+1)$-disjunct, $|\fS'| = k \cdot n + m$ and $|X'| = k \cdot m$.
	\end{lemma}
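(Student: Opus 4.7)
The plan is to formalize the construction sketched verbally before the lemma and illustrated in \Cref{fig:long_rect_step}. By the general-position perturbation argument discussed earlier in the paper, I may assume without loss of generality that the projection $\pi \colon X \to \R^{d-1}$ onto coordinates $2, \dots, d$ is injective, since such a perturbation does not change the set system induced by $\fS$. Let $W$ be an upper bound on the extent of $X \cup \bigcup_{S \in \fS} S$ along the first coordinate, pick $D > W$, and let $\tau_j \colon \R^d \to \R^d$ denote the translation $\tau_j(y) = y + (j-1) D \ve_1$. Define $X' = \bigcup_{j=1}^{k} \tau_j(X)$, which has size $km$ since the translates are pairwise disjoint by the choice of $D$, and let $\fS'_0 = \{\tau_j(S) \mid S \in \fS,\, j \in [k]\}$, of size $kn$. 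For each $p \in X$, let $L_p$ be a $d$-rectangle whose first-coordinate interval covers all of $\tau_1(p), \dots, \tau_k(p)$ and whose extent in the remaining coordinates is an axis-parallel box around $(p_2, \dots, p_d)$ chosen tight enough to exclude $\pi(q)$ for every $q \in X \setminus \{p\}$; injectivity of $\pi$ makes this possible, and the resulting $L_p$ satisfies $L_p \cap X' = \{\tau_j(p) \mid j \in [k]\}$. Setting $\fS' = \fS'_0 \cup \{L_p \mid p \in X\}$ yields $|\fS'| = kn + m$ and $|X'| = km$, matching the claim.

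For the $(t+1)$-disjunctness I argue by contradiction. Suppose there exist $x \in X'$ and $Y \subseteq X' \setminus \{x\}$ with $|Y| = t+1$ and $\fS'[x] \subseteq \fS'[Y]$. Write $x = \tau_i(p)$ for the unique $p \in X$ and $i \in [k]$ with $x \in \tau_i(X)$. Since $L_p \in \fS'[x]$ must be hit and $L_p \cap X' = \{\tau_j(p) \mid j \in [k]\}$, some $y \in Y$ must equal $\tau_j(p)$ for some $j \neq i$ (using $x \notin Y$). The remaining tests of $\fS'[x]$ are precisely $\{\tau_i(S) \mid S \in \fS[p]\}$: the other long rectangles $L_q$ with $q \neq p$ do not contain $x$ by construction, and for $j' \neq i$ the rectangle $\tau_{j'}(S)$ has its first-coordinate extent disjoint from that of $\tau_i(X)$ by the choice of $D$, so it cannot contain $x$. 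Because every $\tau_i(S)$ is contained in the $i$-th slab, any point of $Y$ witnessing $\tau_i(S) \in \fS'[Y]$ must belong to $Y_i := Y \cap \tau_i(X)$. Setting $Q = \tau_i^{-1}(Y_i) \subseteq X$ gives $|Q| = |Y_i| \le t$ (since $y \notin \tau_i(X)$) and $p \notin Q$ (since $\tau_i(p) = x \notin Y$).

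Translating the covering condition back to $(X, \fS)$, for every $S \in \fS[p]$ there is some $q \in Q$ with $q \in S$, so $\fS[p] \subseteq \fS[Q]$ with $p \notin Q$ and $|Q| \le t$; padding $Q$ to size exactly $t$ with arbitrary elements of $X \setminus (Q \cup \{p\})$ if necessary, which is possible since $|X| > t$, contradicts the $t$-disjunctness of $(X, \fS)$. The main technical care lies in ensuring that each $L_p$ picks out only the copies of $p$ and that each $\tau_i(S)$ remains confined to slab $i$; both are handled by the initial perturbation and the choice $D > W$. The conceptual heart of the reduction is the observation that one of the $t+1$ points of $Y$ is necessarily spent on the long rectangle $L_p$ outside slab $i$, leaving at most $t$ points inside slab $i$ to cover the translated copies $\tau_i(\fS[p])$, which is exactly what $t$-disjunctness of the base configuration forbids.
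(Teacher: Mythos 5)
Your proof is correct and follows essentially the same construction and argument as the paper: $k$ translated copies separated along the first coordinate, one long thin rectangle per original point picking out exactly its copies, and the key observation that the long rectangle through the center forces one of the $t+1$ points of $Y$ outside the relevant slab, leaving at most $t$ points to be handled by the $t$-disjunctness of the base configuration. The only differences are presentational (contrapositive rather than direct exhibition of a witnessing rectangle, and an explicit padding step for $|Q|<t$ that the paper leaves implicit).
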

	\begin{proof}
		Without loss of generality, assume that $X$ is in general position (i.e.\ no two points agree in a coordinate) and that the coordinates of all points in $X$ and corners of hyperrectangles in $\fS$ are integers in $[m]$.\footnote{We can assume $(X, \fS)$ to be in general position. Then, replacing each rectangle by the smallest combinatorially equivalent rectangle, we obtain $m$ distinct coordinate values in each dimension.}
		
		For each $i \in [k]$, let $X_i$ be a copy of $X$ and let $\fS_i$ be a copy of $\fS$, both shifted by $(i-1)m$ in the first coordinate. Furthermore, for each point $p \in X$, consider the hyperrectangle {$R_p = \{ q \mid 1 \le q_1 \le km, \forall i \in [d] \setminus \{1\}: q_i = p_i \}$} that contains all copies of $p$. Now let $X' = \bigcup_{i=1}^k X_i$ and $\fS' = \{ R_p \mid p \in X \} \cup \bigcup_{i=1}^k \fS_i$. Clearly, $|X'| = km,$ and $|\fS'| = kn+ m$. It remains to show that $(X, \fS)$ is $(t+1)$-disjunct.
		
		Let $p \in X'$ and $Y \subseteq X'$ with $|Y| = t+1$. We {claim} that there is some rectangle $R \in \fS'$ such that $p \in R$ and $Y \cap R = \emptyset$. This implies that $\fS[p] \not\subseteq \fS[Y]$.
		
		Recall that there is some $q \in X$ such that $p$ is a copy of $q$, i.e.\ $p \in R_q \cap X'$. First assume that $R_q \cap Y = \emptyset$. Then the claim is true simply for $R = R_q$. Otherwise, let $s \in R_q \cap Y$. Let $i \in [k]$ such that $p \in X_i$. By construction, $s \in R_q \setminus \{p\}$ implies that $s \notin X_i$. Let $Y' = Y \cap X_i \subseteq Y \setminus \{s\}$, and observe that $|Y'| \le t$. By assumption, $(X_i, \fS_i)$ is $t$-disjunct, which means that there is some rectangle $R \in \fS_i$ for which $p \in R$ and $Y' \cap R = \emptyset$. Moreover, $R \cap X \subseteq X_i$ by construction, so $Y \cap R = \emptyset$. This proves the claim.
	\end{proof}
	
	\begin{corollary} \label{p:long_rects_step}
		Let $c \ge 1$ and $\rD^d_t(n) \in \Omega(n^c)$. Then $\rD^d_{t+1}(n) \in \Omega(n^{2-\sfrac{1}{c}})$.
	\end{corollary}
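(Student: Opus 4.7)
The plan is to apply Lemma~\ref{p:long_rects_constr} with a carefully chosen value of $k$. By the hypothesis $\rD^d_t(n) \in \Omega(n^c)$, there exist constants $C > 0$ and $n_0$ such that for every $n \ge n_0$ some $t$-disjunct configuration exists with $n$ rectangles and $m \ge C n^c$ points. Lemma~\ref{p:long_rects_constr} then produces, for any $k \in \N_{+}$, a $(t+1)$-disjunct configuration with $|X'| = km$ points and $|\fS'| = kn + m$ rectangles.

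To optimize the ratio of points to rectangles, I would choose $k = \lceil m/n \rceil$, which balances the two summands contributing to $|\fS'|$. Since $c \ge 1$ and $m \ge C n^c$, the inequality $m \ge n$ holds for every sufficiently large $n$, so $|\fS'| \le 2m + n \le 3m$, while $|X'| \ge (m/n) \cdot m = m^2/n$. Substituting $1/n \ge C^{1/c} m^{-1/c}$ yields
\[
|X'| \ge m^2/n \ge C^{1/c} m^{2 - 1/c} \ge C^{1/c} (|\fS'|/3)^{2 - 1/c},
\]
which is $\Omega\bigl((|\fS'|)^{2 - 1/c}\bigr)$, as required.

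The remaining subtlety is that this delivers the bound only for the discrete sequence of rectangle counts $N = kn + m$ produced as $n$ varies over $\N_{+}$, not for every positive integer. This is patched by a standard monotonicity argument: $\rD^d_{t+1}(\cdot)$ is non-decreasing, since any rectangle disjoint from all points can be appended without affecting $(t+1)$-disjunctness, and the achievable $N$-values grow like $\Theta(n^c)$, whose consecutive ratios tend to $1$. Hence every sufficiently large target $N'$ lies within a constant factor of some achievable $N \le N'$, and the $\Omega\bigl((N')^{2 - 1/c}\bigr)$ bound is preserved. I do not foresee any substantive obstacle beyond this routine bookkeeping; the only real step is the balancing choice $k = \lceil m/n \rceil$.
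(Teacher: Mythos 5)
Your proposal is correct and follows essentially the same route as the paper: apply Lemma~\ref{p:long_rects_constr} with $k$ chosen to balance $kn$ against $m$ (your $k=\lceil m/n\rceil$ coincides, up to constants, with the paper's $k=\lceil n^{c-1}\rceil$ since $m=\Theta(n^c)$), and your explicit monotonicity patch for non-achievable rectangle counts only makes explicit a point the paper leaves implicit. The one nit is that $m\ge n$ does not follow from ``$c\ge 1$ and $m\ge Cn^c$'' when $c=1$ and $C<1$; it follows instead from the trivial bound $\rD^d_t(n)\ge n$.
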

	\begin{proof}
		For some constant $c'>0$ we have $\rD_t^d(n) \ge c'n^c$ for all large enough $n$. Then, for each such $n$, there is a set $\fS$ of $d$-rectangles and a set $X \subseteq \R^d$ of points for which $(X, \fS)$ is $t$-disjunct and $|\fS| = n$, and $|X| = \ceil{c'n^c}$. Choosing $k = \ceil{n^{c-1}}$, \Cref{p:long_rects_constr} yields a $(t+1)$-disjunct set system $(X', \fS')$ such that
		\begin{align*}
			& |\fS'| = k \cdot |\fS| + |X| \le n^c + n + c'n^c \le (2+c')n^c, \text{ and} \\
			& |X'| = k \cdot |X| \ge c'n^{2c-1}.
		\end{align*}
		As such, for some $c''>0$ and large enough $n$, $|X'| \ge c''|\fS'|^{(2c-1)/c} = c''n^{2-\sfrac{1}{c}}$.
	\end{proof}
	
	Using \Cref{p:hyperplane-rect-lower-bounds} with $t = d-1$ as induction base and \Cref{p:long_rects_step} as induction step, we obtain:
	
	\restateLongRectsLowerBounds*
	
	\section{Upper bounds} \label{sec:rect-upper-bounds}
	
	We now present a technique to improve the trivial $\fO(n^d)$ upper bound for all $t, d \ge 2$. Throughout this section, we always assume that in a hyperrectangle-induced set system $(P, \fS)$, all points and rectangle corners have integral coordinates in $[4n]$, where $n = |\fS|$, and no hyperplane supporting a rectangle facet contains a point.
	
	To show that this assumption is without loss of generality, we describe an explicit transformation for each dimension $i \in [d]$.
	Recall that we may assume $(P, \fS)$ to be in general position, and thus the rectangles in $\fS$ are defined by $2n$ distinct $i$-coordinates. These coordinates induce $2n-1$ ``strips''. Each point in $P$ must lie in the interior of such a strip, or outside all strips. If two points lie in the same strip (or if both lie in no strip), then move them to align in the $i$-th coordinate (keeping the other coordinates unchanged). Now the points are defined by at most $2n$ distinct $i$-coordinates. After doing this transformation for each $i \in [d]$, the coordinates of points and rectangle corners take at most $4n$ distinct values in each dimension. After another simple coordinate-wise transformation, all coordinates become integers in $[4n]$.
	
	We start with a lemma that allows generalizing bounds to higher dimensions.
	\begin{lemma}\label{p:upper-bound-gen}
		For all constants $d, t, k \in \N_+$, we have: 
		\begin{eqnarray*}
\rS^{d+k}_t(n) & \le & (4n)^k \cdot \rS^d_t(n)\mbox{, and}\\
		\rD^{d+k}_t(n) & \le & (4n)^k \cdot \rD^{d}_t(n).
		\end{eqnarray*}
	\end{lemma}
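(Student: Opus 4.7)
The plan is to use a slicing argument: partition $\R^{d+k}$ into $(4n)^k$ axis-aligned ``columns'' corresponding to unit cells along the last $k$ coordinates, and show that the points inside each column, together with the restricted test family, form a $t$-separable (resp.\ $t$-disjunct) configuration in essentially $d$ dimensions on which the inductive bound applies.

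Concretely, let $(X,\fS)$ be any $t$-separable configuration in $\R^{d+k}$ with $|\fS|=n$ realising $\rS^{d+k}_t(n)$. By the transformation described in the preamble to \Cref{p:upper-bound-gen}, we may assume that all coordinates of points in $X$ and of corners of rectangles in $\fS$ are integers in $[4n]$. For each $c \in [4n]^k$, let $X_c = \{ x \in X \mid (x_{d+1},\dots,x_{d+k}) = c \}$, and let $\fS_c$ be the family obtained by intersecting each $R \in \fS$ with the affine subspace $H_c = \R^d \times \{c\}$ and then projecting onto its first $d$ coordinates. Each nonempty such intersection is a $d$-rectangle, so $\fS_c \subseteq \fR_d$ and $|\fS_c| \le n$. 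Since the projection of $X_c$ onto the first $d$ coordinates lies in $\R^d$, we obtain a configuration $(\pi(X_c), \fS_c)$ with at most $n$ tests.

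The central claim is that $(\pi(X_c), \fS_c)$ is $t$-separable (respectively $t$-disjunct). This is immediate from the observation that for every point $x \in X_c$ and every rectangle $R \in \fS$, we have $x \in R$ if and only if $\pi(x) \in R \cap H_c$ (projected). Hence if two size-$t$ subsets $Y, Z \subseteq \pi(X_c)$ are not separated by $\fS_c$, then their preimages in $X_c$ are two distinct size-$t$ subsets of $X$ not separated by $\fS$, contradicting $t$-separability of $(X,\fS)$; the $t$-disjunct case is analogous. Therefore $|X_c| = |\pi(X_c)| \le \rS^d_t(n)$ (resp.\ $\rD^d_t(n)$).

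Summing over all $c \in [4n]^k$ gives
\begin{align*}
|X| \;=\; \sum_{c \in [4n]^k} |X_c| \;\le\; (4n)^k \cdot \rS^d_t(n),
\end{align*}
and likewise for $\rD^{d+k}_t$. I do not expect any serious obstacle here; the only subtle point is making sure the restriction to $H_c$ genuinely preserves $t$-separability and $t$-disjunctness as set-system properties, which amounts to checking that membership is determined entirely by the original rectangle once we have fixed the last $k$ coordinates of the candidate defective items.
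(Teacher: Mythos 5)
Your proposal is correct and matches the paper's own argument: the paper also slices the configuration by axis-parallel hyperplanes obtained by fixing a coordinate to one of the $4n$ integer values, observes that restriction preserves $t$-separability and $t$-disjunctness, and sums over the $4n$ slices (handling $k>1$ by induction rather than fixing all $k$ coordinates at once, which is an immaterial difference).
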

	\begin{proof}
		We only prove the lemma for $k=1$, as the case $k>1$ follows by induction.
		
		Let $(P, \fS)$ be a $(d+1)$-rectangle-induced set system, and let $n = |\fS|$. Recall that we assume $P \subseteq [4n]^{d+1}$. Let $H$ be an axis-parallel hyperplane in $[4n]^{d+1}$, obtained by fixing the first coordinate to a value in $[4n]$. Let $P' = P \cap H$ and $\fS' = \{ R \cap H \mid R \in \fS \}$.
		
		Our key observation is that if $(P, \fS)$ is $t$-separable ($t$-disjunct), then $(P', \fS')$ is $t$-separable ($t$-disjunct). This implies that $|P'| \le \rS^d_t(|\fS'|) \le \rS^d_t(n)$ (respectively $|P'| \le \rD^d_t(n)$). Finally, we can cover $[4n]^{d+1}$ with $4n$ such hyperplanes $H$, which implies the desired bounds.
	\end{proof}
	
	Observe that \Cref{p:upper-bound-gen} implies that a bound of $o(n^2)$ in two dimensions translates to bounds $o(n^d)$ in all dimensions $d>2$.
	In \Cref{sec:rect-decomp,sec:stab-rects}, we prove that $\fS^2_2(n) \in \fO( n^{\sfrac{5}{3}} )$, which implies that $\fS^d_t(n) \in \fO( n^{d - \sfrac{1}{3}} )$ for all $t, d \ge 2$. We strengthen this result in \Cref{sec:stab-Z-shape} for larger $t$, proving \Cref{p:decomp-sep-upper-bound}. In \Cref{sec:stab-stars}, we adapt our technique to the $t$-disjunct case, proving \Cref{p:decomp-disj-upper-bound}.
	
	Before proceeding with the technical proofs, we give a high-level description of the upper bound for $2$-separability in two dimensions. Let $(P, \fS)$ be a $2$-separable rectangle-induced set system, and let $n = |\fS|$. Assume that the coordinates of $P$ are positive integers in $[4n]$. We want to show that $|P| \in \fO( n^{\sfrac{5}{3}} )$.
	
	An \emph{induced rectangle} in $P$ is a set of four points in $P$ that form the corners of an axis-parallel rectangle (which is not necessarily in $\fS$). It is known that if $P$ contains no induced rectangle, then $|P| \in \fO(n^{\sfrac{3}{2}})$. (This is a special case of the well-known \emph{Zarankiewicz problem}~\cite{KovariSosEtAl1954}.) We can thus assume that an induced rectangle $I \subseteq P$ exists. Let $P_1, P_2 \subset P$ be the two sets of opposing corners of $I$. As $(P, \fS)$ is $2$-separable, there must be some rectangle $R \in \fS$ with $R \cap P_1 \neq \emptyset$ and $R \cap P_2 = \emptyset$, or vice versa. It is easy to see that this means that the rectangle formed by $I$ contains a corner of $R$ (see \Cref{fig:intersect-induced-rect}).
	
	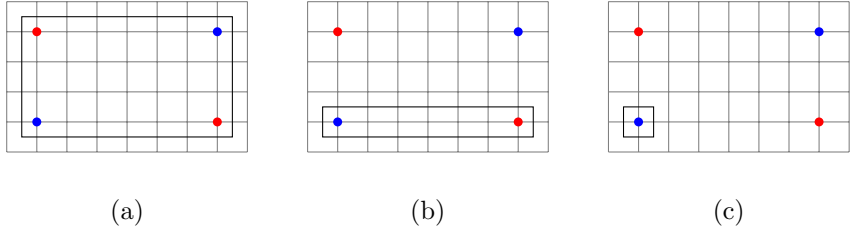
\begin{figure}[htbp]
		\centering \small
		\begin{tikzpicture}[
			scale = 0.4,
			point/.style = {circle, fill, inner sep = 1.2pt}
		]
		\newcommand{\padding}{0.5}
		\begin{scope}
			\draw[gray] (0,0) grid (8,5);
			\node[point, blue] at (1,1) {};
			\node[point, blue] at (7,4) {};
			\node[point, red] at (1,4) {};
			\node[point, red] at (7,1) {};
			\draw (1-\padding, 1-\padding) rectangle (7+\padding,4+\padding);
			\node at (4, -2) {(a)};
		\end{scope}
		\begin{scope}[shift = {(10, 0)}]
			\draw[gray] (0,0) grid (8,5);
			\node[point, blue] at (1,1) {};
			\node[point, blue] at (7,4) {};
			\node[point, red] at (1,4) {};
			\node[point, red] at (7,1) {};
			\draw (1-\padding, 1-\padding) rectangle (7+\padding,1+\padding);
			\node at (4, -2) {(b)};
		\end{scope}
		\begin{scope}[shift = {(20, 0)}]
			\draw[gray] (0,0) grid (8,5);
			\node[point, blue] at (1,1) {};
			\node[point, blue] at (7,4) {};
			\node[point, red] at (1,4) {};
			\node[point, red] at (7,1) {};
			\draw (1-\padding, 1-\padding) rectangle (1+\padding,1+\padding);
			\node at (4, -2) {(c)};
		\end{scope}
		\end{tikzpicture}
		\caption{The three ways a rectangle $R$ can intersect an induced rectangle $I$, up to rotation. One intersection of type (c) is necessary, as otherwise each rectangle that contains a blue point also contains a red point, and vice versa.}
		\label{fig:intersect-induced-rect}
	\end{figure}
	
	Let $V$ be the set of corners of the rectangles in $\fS$. Our observations imply that each induced rectangle in $P$ must contain a point of $V$; we say that $V$ \emph{stabs} $P$. It is now sufficient to prove that if $V$ stabs $P$, then $|P| \in \fO(|V|^{\sfrac{5}{3}})$ (observe that $|V| \le 4 n$).
	
	Towards this claim, we cover the square $[4n]^2$ (which contains both $P$ and $V$) with rectangles $\fQ$ such that no rectangle $Q \in \fQ$ contains a point from $V$ in its interior. This means that the points of $P$ that fall within $Q$ cannot induce a rectangle. Again invoking the Zarankiewicz problem, we can bound the number of points in $P \cap Q$. Choosing a good rectangle covering (in a non-trivial way) finally yields the desired bound.
	
	\subsection{Hyperrectangle coverings}\label{sec:rect-decomp}
	
	We start with the last part of the high-level description, generalized to higher dimensions. We consider the resulting extremal problem interesting on its own.
	
	A \emph{hyperrectangle covering} of the grid $G = [n]^d$ is a set $\fQ$ of $d$-rectangles (with integral corners) such that $G \subseteq \bigcup \fQ$. We say that $\fQ$ is \emph{valid} with respect to some set $V \subseteq G$ of points if $V \cap \interior(Q) = \emptyset$ for all $Q \in \fQ$. Given a set of $\fO(n)$ points $V$ and a \emph{weight function} $w \colon \fR_d \rightarrow \R$, we ask for the minimum total weight $W(\fQ) = \sum_{Q \in \fQ} w(Q)$ that a valid hyperrectangle covering $\fQ$ can have.
	
	In this section, we define a suitable weight function for the case $d = t = 2$.
	(In the following sections, we will use different weight functions.) For a rectangle $Q$, let the weight $z(Q)$ be the maximum number of integral points that we can fit in $Q$ without inducing an axis-parallel rectangle.
	The total weight of a rectangle covering $\fQ$ is denoted by $Z(\fQ)$. 

	In the following, we say that a rectangle (with integral corners) is a \emph{size $p \times q$} rectangle if it covers a $p \times q$ grid, i.e.\ has side lengths $p-1$ and $q-1$. The K\H{o}v\'{a}ri-S\'{o}s-Tur\'{a}n theorem implies the following:
	\begin{lemma}[\cite{KovariSosEtAl1954,Hylten-Cavallius1958}]\label{p:KovariSos}
		A $p \times q$ rectangle $Q$ where $p \ge q$ has weight  %
		$z(Q) \le p^{\sfrac{1}{2}} q + p$.
	\end{lemma}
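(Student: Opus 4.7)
The plan is to reduce the statement to a standard application of the Kővári–Sós–Turán theorem by modeling the integral points inside $Q$ as the edges of a bipartite graph. Without loss of generality, set $Q = [p] \times [q]$. A subset $S$ of integral points in $Q$ induces an axis-parallel rectangle iff there exist two distinct columns and two distinct rows whose four intersection points all lie in $S$. Encoding this as a bipartite graph $G_S$ on vertex set $[p] \sqcup [q]$ (columns and rows), with an edge for each point of $S$, the ``no induced rectangle'' condition is precisely that $G_S$ is $K_{2,2}$-free. Thus $z(Q)$ equals the maximum number of edges of a $K_{2,2}$-free bipartite graph with parts of size $p$ and $q$, the classical Zarankiewicz quantity.

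From here I would run the standard double-counting argument. Let $d_1, \dots, d_p$ denote the degrees on the column side; then $|E(G_S)| = \sum_c d_c$, and the number of pairs of edges sharing a column is $\sum_c \binom{d_c}{2}$. Since $G_S$ has no $K_{2,2}$, each unordered pair of rows is ``hit'' by at most one column, giving
\begin{equation*}
\sum_{c=1}^{p} \binom{d_c}{2} \;\le\; \binom{q}{2}.
\end{equation*}
By convexity of $x \mapsto \binom{x}{2}$, the left-hand side is at least $p \binom{|E|/p}{2}$, where $|E| = z(Q)$.

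Combining these bounds yields $|E|^2 - p\,|E| \le p\,q(q-1) < p q^2$, a quadratic inequality in $|E|$. Solving gives $|E| \le \tfrac{1}{2}\bigl(p + \sqrt{p^2 + 4pq^2}\bigr)$, and bounding $\sqrt{p^2 + 4pq^2} \le p + 2q\sqrt{p}$ (valid because $p \ge q$, and indeed for all positive $p,q$) produces $|E| \le p + p^{1/2} q$, which is exactly $z(Q) \le p^{1/2}q + p$.

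I do not anticipate a genuine obstacle: the main step is simply recognizing that ``no induced axis-parallel rectangle'' is the $K_{2,2}$-freeness condition, after which the bound is a direct instance of Kővári–Sós–Turán. The only small care needed is verifying that the square-root estimate is tight enough to give the stated form $p^{1/2} q + p$ rather than a weaker constant, which is handled by the elementary inequality above.
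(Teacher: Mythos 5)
Your proof is correct and is precisely the classical K\H{o}v\'ari--S\'os--Tur\'an double-counting argument: the paper does not prove this lemma but cites it to \cite{KovariSosEtAl1954,Hylten-Cavallius1958}, and your reduction of ``no induced axis-parallel rectangle'' to $K_{2,2}$-freeness followed by the convexity estimate and the bound $\sqrt{p^2+4pq^2}\le p+2q\sqrt{p}$ recovers exactly the stated inequality $z(Q)\le p^{\sfrac{1}{2}}q+p$. No gaps.
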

	
	Note that among rectangles with the same area, a $k^2 \times k$ rectangle is the ``lightest''. We use this observation in the following upper bound.
	
	\begin{figure}
		\centering
		\newcommand{\n}{27}
		\begin{tikzpicture}[
		scale = 6/(\n),
		decompline/.style={blue, line width = 1pt},
		point/.style = {circle, fill, inner sep = 1.2pt}
		]
		\draw[lightgray] (1,1) grid (\n, \n);
		\draw[decompline] (1, 1) rectangle (9,3);
\draw[decompline] (1, 4) rectangle (9,6);
\draw[decompline] (1, 7) rectangle (9,9);
\draw[decompline] (1, 10) rectangle (9,12);
\draw[decompline] (1, 13) rectangle (9,15);
\draw[decompline] (1, 16) rectangle (9,18);
\draw[decompline] (1, 19) rectangle (9,21);
\draw[decompline] (1, 22) rectangle (9,24);
\draw[decompline] (1, 25) rectangle (9,27);
\draw[decompline] (10, 1) rectangle (18,3);
\draw[decompline] (10, 4) rectangle (18,6);
\draw[decompline] (10, 7) rectangle (18,9);
\draw[decompline] (10, 10) rectangle (18,12);
\draw[decompline] (10, 13) rectangle (18,15);
\draw[decompline] (10, 16) rectangle (18,18);
\draw[decompline] (10, 19) rectangle (18,21);
\draw[decompline] (10, 22) rectangle (18,24);
\draw[decompline] (10, 25) rectangle (18,27);
\draw[decompline] (19, 1) rectangle (27,3);
\draw[decompline] (19, 4) rectangle (27,6);
\draw[decompline] (19, 7) rectangle (27,9);
\draw[decompline] (19, 10) rectangle (27,12);
\draw[decompline] (19, 13) rectangle (27,15);
\draw[decompline] (19, 16) rectangle (27,18);
\draw[decompline] (19, 19) rectangle (27,21);
\draw[decompline] (19, 22) rectangle (27,24);
\draw[decompline] (19, 25) rectangle (27,27);
\draw[decompline](3, 13) -- (3,15);
\draw[decompline](4, 1) -- (4,3);
\draw[decompline](6, 16) -- (6,18);
\draw[decompline](16, 16) -- (16,18);
\draw[decompline](17, 4) -- (17,6);
\draw[decompline](22, 19) -- (22,21);
\draw[decompline](23, 25) -- (23,27);
\draw[decompline](26, 22) -- (26,24);
\node[point] at (1, 13) {};
\node[point] at (2, 25) {};
\node[point] at (3, 14) {};
\node[point] at (4, 2) {};
\node[point] at (5, 9) {};
\node[point] at (6, 17) {};
\node[point] at (7, 16) {};
\node[point] at (8, 13) {};
\node[point] at (9, 26) {};
\node[point] at (10, 27) {};
\node[point] at (11, 10) {};
\node[point] at (12, 16) {};
\node[point] at (13, 12) {};
\node[point] at (14, 19) {};
\node[point] at (15, 7) {};
\node[point] at (16, 17) {};
\node[point] at (17, 5) {};
\node[point] at (18, 10) {};
\node[point] at (19, 5) {};
\node[point] at (20, 25) {};
\node[point] at (21, 4) {};
\node[point] at (22, 20) {};
\node[point] at (23, 26) {};
\node[point] at (24, 9) {};
\node[point] at (25, 18) {};
\node[point] at (26, 23) {};
\node[point] at (27, 26) {};
		\end{tikzpicture}
		\caption{A two-dimensional rectangle covering as in \Cref{p:rect-cover-Zar} with $k = 3$, where rectangles are always divided with vertical lines. Blue lines indicate rectangle borders.}
		\label{fig:rect-decomp}
	\end{figure}
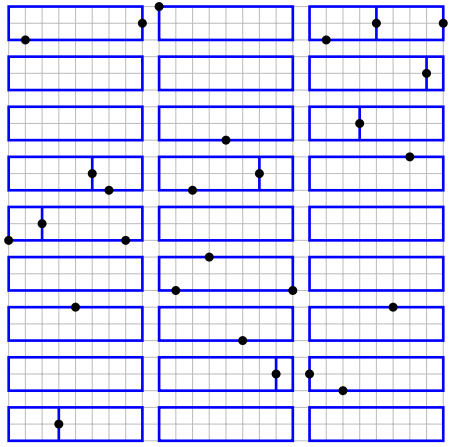
	
	\begin{lemma} \label{p:rect-cover-Zar}
		Let $n = k^3$ for some $k \in \N_+$, and let $V \subseteq [n]^2$ such that $|V| \le n$. Then there is a rectangle covering $\fQ$ with $Z(\fQ) \le 4 n^{\sfrac{5}{3}}$ that is valid with respect to $V$.
	\end{lemma}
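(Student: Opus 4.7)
The plan is to tile $[n]^2 = [k^3]^2$ by $n = k^3$ cells of shape $k^2 \times k$ (in the grid sense), and within each such cell make vertical cuts at each distinct $x$-coordinate of the $V$-points lying in its interior. For a cell $C$ containing $v_C$ interior $V$-points, this produces at most $v_C + 1$ sub-rectangles of height $k$ whose integer widths $w_{C,1}, \ldots, w_{C,v_C+1}$ sum to at most $k^2 + v_C$, since each cut adds one shared column to the total. The collection $\fQ$ of all these sub-rectangles, taken across all cells, covers $[n]^2$ and, by construction, has no $V$-point in the interior of any $Q \in \fQ$; thus $\fQ$ is valid.

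To bound $Z(\fQ)$, I would first observe that every sub-rectangle has width $w \in [1, k^2]$, which lets me squeeze \Cref{p:KovariSos} into the clean unified bound $z(w \times k) \le 2k\sqrt{w}$ by a short case analysis: when $w \ge k$, the KST estimate $\sqrt{w}k + w$ is bounded by $2k\sqrt{w}$ because $\sqrt{w} \le k$; when $1 \le w < k$, the KST estimate $\sqrt{k}w + k$ splits as $\sqrt{k}w \le k\sqrt{w}$ (using $\sqrt{w} \le \sqrt{k}$) and $k \le k\sqrt{w}$ (using $w \ge 1$).

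The total weight then collapses via two applications of Cauchy-Schwarz. Within each cell, $\sum_i \sqrt{w_{C,i}} \le \sqrt{(v_C+1)(k^2 + v_C)}$; summing over cells and applying Cauchy-Schwarz a second time yields
\begin{align*}
Z(\fQ) &\le 2k \sum_C \sqrt{(v_C+1)(k^2+v_C)} \\
&\le 2k \sqrt{\Bigl(\textstyle\sum_C (v_C+1)\Bigr)\Bigl(\textstyle\sum_C (k^2+v_C)\Bigr)} \\
&\le 2k \sqrt{2n \cdot 2nk^2} = 4nk^2 = 4n^{5/3},
\end{align*}
using $\sum_C (v_C + 1) \le n + |V| \le 2n$ and $\sum_C (k^2 + v_C) \le nk^2 + |V| \le 2nk^2$.

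The main subtlety is picking the cell aspect ratio to match the KST-optimal shape: the base weight of a $k^2 \times k$ cell is exactly $2k^2$, so the $n$ unsplit cells already account for $2n^{5/3}$ of the target $4n^{5/3}$, leaving the right amount of headroom for the splits to be absorbed by Cauchy-Schwarz without any tighter per-split accounting. Any other aspect ratio or a more naive bookkeeping (for instance, trying to estimate the cost of each individual split) would either leak a logarithmic factor or force the constant above $4$.
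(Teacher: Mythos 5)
Your construction is exactly the paper's: the same tiling of $[n]^2$ into $n$ cells of size $k^2 \times k$, the same splits through interior points of $V$, and the same appeal to \Cref{p:KovariSos}, so your argument is correct and essentially identical in approach. The paper's final accounting is simpler and in fact refutes your closing remark: since every piece is a sub-rectangle of some $k^2\times k$ cell, it has weight at most $2k^2$, and the splits add at most $|V|$ rectangles, so $Z(\fQ)\le (n+|V|)\cdot 2k^2\le 4n^{\sfrac{5}{3}}$ --- the ``naive'' per-split bookkeeping already yields the constant $4$ with no Cauchy--Schwarz needed.
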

	\begin{proof}
		We construct $\fQ$ as follows. Start with the regular decomposition into $n$ rectangles of size $k^2 \times k$, i.e.

		\begin{align*}
			\fQ_0 = \{ [(i-1) k^2+1, i k^2] \times [(j-1) k+1, j k] \mid i \in [k], j \in [k^2] \}.
		\end{align*}
		Observe that while these rectangles do not cover the \emph{continuous} square $[1,n]^2$, they do cover all integral points in $[n]^2$. Moreover, each rectangle in $\fQ_0$ has weight at most $2k^2$, so $Z(\fQ_0) = n \cdot 2 k^2 = 2 n^{\sfrac{5}{3}}$.
		
		The obtained $\fQ_0$ is not yet a valid rectangle covering, as its rectangles may contain points of $V$ in their interiors. To fix this, for each point $v \in V$ and rectangle $Q \in \fQ_0$ where $v \in \interior(Q)$, split $Q$ with an axis-parallel line through $v$ (see \Cref{fig:rect-decomp}, where we always split the rectangles with vertical lines). Observe that these splits increase the number of rectangles by at most $|V|$. Let $\fQ$ be the set of rectangles thus obtained. As the weight of a rectangle obtained by splitting can only be smaller than the weight of the original rectangle, we have $z(Q) \le 2k^2$ for all $Q \in \fQ$, and thus,
		\begin{align*}
			Z(\fQ) \le Z(\fQ_0) + |V| \cdot 2k^2 \le 4n^{\sfrac{5}{3}}. \tag*{\qedhere}
		\end{align*}
	\end{proof}
	
	We briefly argue that \Cref{p:rect-cover-Zar} is asymptotically tight. It is known that \Cref{p:KovariSos} is asymptotically tight (see e.g.\ \cite{Reiman1958}). In the following, we ignore constants and assume $z(q) = p^{\sfrac{1}{2}} q + p$. Given a rectangle with side lengths $p \ge q$, let the \emph{density} of $Q$ be defined as $\delta(Q) = z(Q) / (pq) = p^{-\sfrac{1}{2}} + q^{-1}$, and let the density of a rectangle covering $\fQ$ of $[n]^2$ be defined as $\Delta(\fQ) = Z(\fQ) / A(\fQ)$, where $A(\fQ)$ denotes the total area of all rectangles in $\fQ$. Observe that $A(\fQ) \geq n^2$, with equality if the rectangles do not overlap. As $\Delta(\fQ)$ is the weighted average of the densities of the individual rectangles, we have
	\begin{align*}
		\Delta(\fQ) \ge \min_{Q \in \fQ} \delta(Q).
	\end{align*}
	
	Let $n = k^3$ and consider the set $V \subseteq [n]^2$ of points shown in \Cref{fig:rect-cover-hard}. It consists of a large $k \times k$ grid, where each grid cell (of size $k^2 \times k^2$) contains a set of $k$ evenly spaced points on its diagonal. Let $Q$ be a $p \times q$ rectangle such that $\interior(Q) \cap V = \emptyset$. Observe that if $p > k^2$, then $q \le k$. As such, $\delta(Q) \ge \frac{1}{k}$. We now have $Z(Q) = A(\fQ) \cdot \Delta(\fQ) \geq n^2 \cdot \Delta(\fQ) \ge n^2 \cdot \frac{1}{k} = n^{\sfrac{5}{3}}$.
	
	\begin{figure}[htbp]
		\centering
		\small
		\begin{tikzpicture}[
			scale=5,
			point/.style = {circle, fill, inner sep = 1pt},
			bounds/.style={},
			lines/.style={dash pattern={on 2pt off 2pt}}
		]
			\input{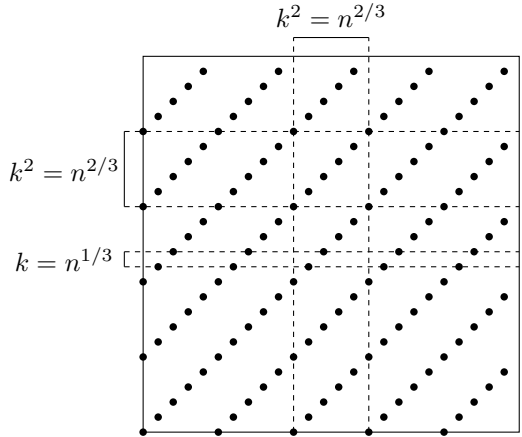}
		\end{tikzpicture}
		\caption{A set of $n=k^3$ points $V$ that requires a rectangle covering of weight $n^{\sfrac{5}{3}}$.}
		\label{fig:rect-cover-hard}
	\end{figure}
	
	\subsection{Stabbing induced rectangles}\label{sec:stab-rects}
	
	Let $V, P \subseteq [n]^2$ be point sets. We say that $P$ \emph{induces} a rectangle $R$ if all four corners of $R$ are in $P$, and we say that $V$ \emph{stabs} $P$ if every rectangle induced by $P$ contains a point from $V$ in its interior.
	
	\begin{lemma} \label{p:stabbing}
		Let $n = k^2$ for some $k \in \N$, and let $V, P \subseteq [n]^2$ such that $|V| \le n$ and $V$ stabs $P$. Then $|P| \le 4 n^{\sfrac{5}{3}}$.
	\end{lemma}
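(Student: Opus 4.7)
The plan is to combine the rectangle covering from \Cref{p:rect-cover-Zar} with the K\H{o}v\'{a}ri-S\'{o}s-Tur\'{a}n bound of \Cref{p:KovariSos}. The stabbing hypothesis will force the points of $P$ inside any single covering rectangle to avoid inducing a sub-rectangle, whereupon the rectangle's weight caps their count, and summing over the covering yields the claimed bound.

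First I would apply \Cref{p:rect-cover-Zar} (padding $n$ up to the next perfect cube, which affects only constants) to obtain a rectangle covering $\fQ$ of $[n]^2$ that is valid with respect to $V$ and satisfies $Z(\fQ) \le 4 n^{\sfrac{5}{3}}$. Next, for each $Q \in \fQ$, I would argue that $P \cap Q$ is \emph{rectangle-free}: if four points of $P \cap Q$ formed the corners of an axis-parallel rectangle $R$, then $R \subseteq Q$ and therefore $\interior(R) \subseteq \interior(Q)$; since $V$ stabs $P$, there would exist $v \in V \cap \interior(R) \subseteq V \cap \interior(Q)$, contradicting the validity of $\fQ$.

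Given this observation, the very definition of $z(Q)$ (the maximum number of points one can place in $Q$ without inducing an axis-parallel rectangle) yields $|P \cap Q| \le z(Q)$. Summing over $Q \in \fQ$ and using that $\fQ$ covers $[n]^2 \supseteq P$, we obtain
\begin{equation*}
|P| \le \sum_{Q \in \fQ} |P \cap Q| \le \sum_{Q \in \fQ} z(Q) = Z(\fQ) \le 4 n^{\sfrac{5}{3}},
\end{equation*}
as desired.

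The main obstacle I foresee is bookkeeping at the boundaries of the covering. A point of $P$ that lies on the shared face of two rectangles of $\fQ$ could be counted in both summands, but this can only inflate the right-hand side, so the inequality is preserved; alternatively, one can assign each point of $P$ to a unique $Q$ containing it. The containment $\interior(R) \subseteq \interior(Q)$ also requires $R$ to be a proper positive-area rectangle, which is automatic whenever the four corners span two distinct $x$-coordinates and two distinct $y$-coordinates. Beyond these small technicalities, the argument is a clean composition of the two auxiliary lemmas.
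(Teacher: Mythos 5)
Your proof is correct and takes essentially the same route as the paper: the paper also combines \Cref{p:rect-cover-Zar} with the definition of $z$, merely phrasing the counting step as a pigeonhole contradiction (if $|P| > Z(\fQ)$, some $Q$ holds more than $z(Q)$ points and hence an unstabbed induced rectangle) rather than your direct bound $|P \cap Q| \le z(Q)$ summed over $\fQ$. Your attention to the $n = k^2$ versus $n = k^3$ mismatch and to boundary double-counting is a harmless refinement of the same argument.
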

	\begin{proof}
		\Cref{p:rect-cover-Zar} implies that there is a rectangle covering $\fQ$ with $Z(\fQ) \le 4n^{\sfrac{5}{3}}$ that is valid with respect to $V$. Now assume that $|P| > Z(\fQ)$. By the pigeonhole principle, there is a rectangle $Q \in \fQ$ that contains more than $z(Q)$ points. By the definition of $z$, inside $Q$ there must be $4$ points of $P$ that induce a rectangle. The interior of this induced rectangle contains no point from $V$, contradicting the assumption that $V$ stabs $P$. Thus, $|P| \le Z(\fQ)$.
	\end{proof}

	Note that while \Cref{p:rect-cover-Zar} is tight, we do not know whether \Cref{p:stabbing} is also tight. The lower bound for rectangle coverings does not simply transfer, because if we construct $P$ by ``filling up'' all rectangles of the covering, then two pairs of points contained in different rectangles may still induce a rectangle that is not stabbed.
	
	We are now ready to prove \Cref{p:decomp-sep-upper-bound-two-dim}.
	
	\restateDecompSepUpperBoundTwo*
	\begin{proof}
		We only show the case $d = 2$, the other cases follow by \Cref{p:upper-bound-gen}.
		
		Let $(P, \fS)$ be a 2-separable rectangle-induced set system. Let $n = |\fS|$ and let $V$ be the set of $n' = 4n$ corners of rectangles in $\fS$. We assume that $n'$ is of the form $k^3$, that $P, V \subseteq [n']$, and that points $p \in P$ and $v \in V$ differ in all coordinates.
		
		Let $P' \subseteq P$ be a set of 4 points that induce a rectangle $R$. Let $P_1$ and $P_2$ be the pairs of opposite corners of $R$. We claim that there must be a rectangle $S \in \fS$ that contains only one point of $P$. Suppose not, then each rectangle $S \in \fS$ either contains no point of $P$ or at least one ``edge'' of $R$, and thus both a point from $P_1$ and a point from $P_2$ (see \Cref{fig:intersect-induced-rect}). This implies $\fS[P_1] = \fS[P_2]$, contradicting that $(P, \fS)$ is 2-separable.
		
		As such, $R$ contains a point of $V$, which is in the interior of $R$ by assumption. As a consequence, $V$ must stab $P$, and \Cref{p:stabbing} implies $|P| \le 4 (n')^{\sfrac{5}{3}} \in \fO( n^{\sfrac{5}{3}} )$.
	\end{proof}

	\subsection{Stabbing Z-shaped point sets}\label{sec:stab-Z-shape}
	
	We slightly modify the technique of the previous sections and obtain improved bounds in the case $t = 3$. Informally, instead of the avoidance of \emph{induced rectangles}, we argue about avoidance of a different pattern, called the \emph{Z-shape} (\Cref{zshape}). More formally, call a set of four distinct points $w,x,y,z \in \N^2$ a \emph{Z-shape} if $w_1 < x_1 = y_1 < z_1$ and $w_2 = x_2 < y_2 = z_2$.
	
	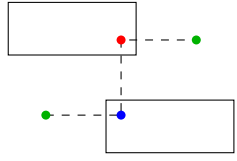
\begin{figure}[htbp]
		\begin{center}
			\begin{tikzpicture}[
				scale=1,
				point/.style = {circle, fill, inner sep = 1.2pt},
				inAB/.style = {green!70!black}
			]
				\draw[dashed] (0,0) -- (1,0) -- (1,1) -- (2,1);
				\node[point, inAB] at (0,0) {};
				\node[point, blue] at (1,0) {};
				\node[point, red] at (1,1) {};
				\node[point, inAB] at (2,1) {};
				\draw (0.8,0.2) rectangle (2.5,-0.5);
				\draw (1.2,0.8) rectangle (-0.5,1.5);
			\end{tikzpicture}
		\end{center}
		\caption{A Z-shaped point set. The coloring indicates the sets $X_1, X_2$ used in the proof of \Cref{p:decomp-sep-upper-bound}: red, blue, and green points are in $X_1 \setminus X_2$, $X_2 \setminus X_1$, resp. $X_1 \cap X_2$.\label{zshape}}
	\end{figure}
	
	We now adapt \Cref{p:rect-cover-Zar}. For a rectangle $Q$ of size $m \times n$, let $z'(Q)$ denote the maximum number of points that fit into $Q$ without creating a Z-shaped subset. F\"{u}redi and Hajnal~\cite[Cor.\ 2.4]{FuerediHajnal1992} showed (using different notation) that $z'(Q) \in \fO( m + n )$.
	For a rectangle covering $\fQ$, let $Z'(\fQ) = \sum_{Q \in \fQ} z'(Q)$.
	
	\begin{lemma} \label{p:rect-cover-Z-shape}
		Let $V \subseteq [n]^2$ such that $|V| \le n$. Then there is a rectangle covering $\fQ$ that is valid with respect to $V$, such that $Z'(\fQ) \in \fO( n^{\sfrac{3}{2}})$.
	\end{lemma}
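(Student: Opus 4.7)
The plan is to mirror the proof of \Cref{p:rect-cover-Zar} but with a \emph{square} initial tiling, reflecting the fact that the weight function $z'$ is symmetric: by the F\"uredi--Hajnal bound, a $p \times q$ rectangle $Q$ satisfies $z'(Q) \in \fO(p+q)$, so a $k \times k$ square has weight $\fO(k)$. Set $k = \lceil \sqrt{n}\, \rceil$ (for simplicity I will write as if $n = k^2$; the general case only changes constants). Initialize
\begin{align*}
    \fQ_0 = \{\, [(i-1)k+1, ik] \times [(j-1)k+1, jk] \mid i, j \in [k] \,\},
\end{align*}
the regular tiling of $[n]^2$ into $n$ axis-parallel squares of side $k$. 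This covers all integral points in $[n]^2$, and the F\"uredi--Hajnal bound gives $Z'(\fQ_0) \in \fO(n \cdot k) = \fO(n^{\sfrac{3}{2}})$.

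Next, process the points of $V$ one by one. For each $v \in V$ lying in the interior of some current rectangle $Q$, split $Q$ with an axis-parallel line through $v$. The key observation is that such a split increases $Z'$ by only $\fO(k)$: splitting a $p \times q$ rectangle vertically produces two rectangles of sizes $p_1 \times q$ and $p_2 \times q$ with $p_1 + p_2 \le p+1$, so the weight changes from $\fO(p+q)$ to $\fO(p_1 + q) + \fO(p_2 + q) = \fO(p+q) + \fO(q)$. Since every rectangle arising during the procedure is contained in some $k \times k$ square of $\fQ_0$, both of its side lengths are at most $k$, and hence the weight increase per split is at most $\fO(k) = \fO(\sqrt{n})$. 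Splitting along the shorter dimension of $Q$ suffices; the analogous horizontal case is symmetric.

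Summing over at most $|V| \le n$ such splits, the final covering $\fQ$ is valid with respect to $V$ (each $v \in V$ ends up on the boundary of its surrounding rectangles) and has total weight
\begin{align*}
    Z'(\fQ) \le Z'(\fQ_0) + |V| \cdot \fO(\sqrt{n}) \in \fO(n^{\sfrac{3}{2}}) + \fO(n \cdot \sqrt{n}) = \fO(n^{\sfrac{3}{2}}),
\end{align*}
as required.

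The main (minor) obstacle is to verify that the per-split weight increase remains $\fO(\sqrt{n})$ even for rectangles that have already been split many times; this is immediate, since side lengths are monotonically non-increasing under splitting and never exceed the initial value $k$. Beyond the F\"uredi--Hajnal bound on Z-shape-free point sets, no further combinatorial input is needed.
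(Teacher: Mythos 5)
Your proof is correct and follows essentially the same route as the paper: a regular tiling of $[n]^2$ into $n$ squares of side $k=\sqrt{n}$, each of weight $\fO(\sqrt{n})$ by the F\"uredi--Hajnal bound, followed by one axis-parallel split per point of $V$, giving total weight $\fO(n^{\sfrac{3}{2}})$. The only cosmetic difference is that you account for the cost of splits via the per-split weight increase, whereas the paper simply observes that the final covering has at most $2n$ rectangles each of weight $\fO(\sqrt{n})$; both accountings are valid and yield the same bound.
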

	\begin{proof}
		Assume that $n = k^2$ for some $k \in \N_+$. (We can ensure this by padding, without affecting the asymptotic bounds.) %
		
		Start with the regular decomposition $\fQ_0$ into $n$ squares of size $k \times k$. We have $|\fQ_0| = n$ and $z'(Q) \in \fO( k ) = \fO( \sqrt{n} )$ for each $Q \in \fQ$.
		
		Again, for each $v \in V$ in the interior of some rectangle $Q \in \fQ_0$, split $Q$ with an axis-parallel line through $v$. This yields a decomposition $\fQ$ with at most $2n$ rectangles of size at most $k \times k$. We thus have
		\begin{align*}
			Z'(\fQ) \in \fO( n \cdot \sqrt{n} ) = \fO( n^{\sfrac{3}{2}}). \tag*{\qedhere}
		\end{align*}
	\end{proof}
	
	We proceed with adapting our notion of stabbing. For a Z-shaped point set $X$, let $\rect(X)$ be the smallest axis-parallel rectangle that contains $X$. We say that a point set $V$ \emph{Z-stabs} another point set $P$ if for every Z-shaped subset $X \subseteq P$, we have $\interior(\rect(X)) \cap V \neq \emptyset$. We can now easily adapt \Cref{p:stabbing}, using \Cref{p:rect-cover-Z-shape}.
	
	\begin{lemma}\label{p:Z-stabbing}
		Let $V, P \in [n]^2$ such that $|V| \le n$ and $V$ Z-stabs $P$. Then $|P| \in \fO( n^{\sfrac{3}{2}})$.
	\end{lemma}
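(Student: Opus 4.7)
The plan is to mimic the proof of Lemma \ref{p:stabbing} almost verbatim, only substituting the Z-shape covering bound of Lemma \ref{p:rect-cover-Z-shape} for the Zarankiewicz-based covering bound of Lemma \ref{p:rect-cover-Zar}. First, I apply Lemma \ref{p:rect-cover-Z-shape} to the set $V$ to obtain a rectangle covering $\fQ$ of $[n]^2$ that is valid with respect to $V$ (so no $Q\in\fQ$ contains a point of $V$ in its interior) and that satisfies $Z'(\fQ)\in\fO(n^{3/2})$.

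Next, I argue by pigeonhole. Suppose for contradiction that $|P|>Z'(\fQ)$. Since $\fQ$ covers $[n]^2 \supseteq P$, there must exist some $Q\in\fQ$ such that $|P\cap Q|>z'(Q)$. By the definition of $z'(Q)$ as the maximum number of points fitting inside $Q$ without forming a Z-shape, the set $P\cap Q$ contains a Z-shaped subset $X\subseteq P$.

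Now I derive the contradiction. Because $X\subseteq Q$, the bounding rectangle $\rect(X)$ also lies in $Q$ (it is the smallest axis-parallel rectangle containing $X$, and $Q$ is an axis-parallel rectangle containing $X$). Consequently $\interior(\rect(X))\subseteq\interior(Q)$: any point with a neighbourhood inside $\rect(X)$ has a neighbourhood inside $Q$ as well. Combined with the validity of $\fQ$, this yields
\begin{align*}
\interior(\rect(X))\cap V \;\subseteq\; \interior(Q)\cap V \;=\; \emptyset,
\end{align*}
which contradicts the assumption that $V$ Z-stabs $P$. Hence $|P|\le Z'(\fQ)\in\fO(n^{3/2})$.

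I do not expect any real obstacle: all the work has been outsourced to Lemma \ref{p:rect-cover-Z-shape} (and, through it, to the F\"uredi–Hajnal bound $z'(Q)\in\fO(m+n)$ for $m\times n$ rectangles). The only minor point to get right is the inclusion $\interior(\rect(X))\subseteq\interior(Q)$, which is immediate from $\rect(X)\subseteq Q$ for axis-parallel rectangles.
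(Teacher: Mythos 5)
Your proof is correct and is exactly the adaptation the paper intends: the paper omits an explicit proof of this lemma, stating only that one "easily adapts" the argument of Lemma~\ref{p:stabbing} by substituting Lemma~\ref{p:rect-cover-Z-shape}, which is precisely the pigeonhole argument you carry out. The only detail you add beyond the paper's template is the (correct) observation that $\rect(X)\subseteq Q$ forces $\interior(\rect(X))\subseteq\interior(Q)$.
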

	
	We are now ready to prove the general upper bound:
	
	\restateDecompSepUpperBoundHigher*
	\begin{proof}
		We only need to show the case $d = 2$ and $t = 3$, as the other cases follow by \Cref{p:upper-bound-gen} and monotonicity (\Cref{p:monotone}).
		
		Let $(P, \fS)$ be a 3-separable rectangle-induced set system. Let $n = |\fS|$ and let $V$ be the set of $n' = 4n$ corners of rectangles in $\fS$. We assume that $P, V \subseteq [n']$ and that points $p \in P$ and $v \in V$ differ in all coordinates.
		
		Let $X = \{w,x,y,z\} \subseteq P$ be a Z-shaped subset such that $w_1 < x_1 = y_1 < z_1$ and $w_2 = x_2 < y_2 = z_2$. Let $X_1 = \{w,x,z\}$ and $X_2 = \{w,y,z\}$. As $(P, \fS)$ is 3-separable, there must be a rectangle $S \in \fS$ that intersects exactly one of the two sets $X_1$ and $X_2$. It is easy to see that this means that a corner of $S$ is in $\rect(X)$. See \Cref{zshape} for an illustration.
		
		It follows that $V$ Z-stabs $P$, and \Cref{p:Z-stabbing} yields the desired bound.
	\end{proof}

	\subsection{Stabbing stars}\label{sec:stab-stars}
	
	For the $d$-disjunct case, we can prove stronger bounds with essentially the same technique, but using a different weight function and hyperrectangle coverings in arbitrary dimensions.
	
	Let $T = \{ x, x_1, x_2, \dots, x_d \} \subseteq \N^d$ be a set of $d+1$ points such that for all $i \in [d]$, $x_i$ differs from $x$ only in the $i$-th coordinate. Then we call $T$ a $d$-\emph{star}, and refer to $x$ as the \emph{center} of $T$. (In other words, a $d$-star consists of a corner of a $d$-rectangle and all its adjacent corners.) Let $\rect(T)$ denote the unique rectangle that contains all points in $T$ as corners.
	
	Given a $d$-rectangle $Q$, let $a_d(Q)$ denote the maximum number of integral points that fit in $Q$ without inducing a $d$-star. Let $A_d(\fQ)$ denote the total weight of a hyperrectangle covering $\fQ$. We bound $a_d(Q)$ for hypercubes.
	
	\begin{lemma} \label{p:semi-rect-free}
		Let $P \subseteq [n]^d$ contain no $d$-star. Then $|P| \le d n^{d-1}$.
	\end{lemma}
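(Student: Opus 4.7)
The plan is to argue by a direction-assignment/double-counting argument. Given a point $x \in P$, observe that $x$ is the center of a $d$-star lying entirely in $P$ if and only if for every direction $i \in [d]$, the axis-parallel line $L_i(x)$ through $x$ in direction $i$ contains at least one other point of $P$ (namely, the required point $x_i$ differing from $x$ only in coordinate $i$). Since $P$ contains no $d$-star, for each $x \in P$ there exists at least one ``lonely'' direction $i(x) \in [d]$ such that $L_{i(x)}(x) \cap P = \{x\}$.

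Next, I would partition (or at least cover) $P$ according to the chosen lonely direction. Let $P_i = \{x \in P \mid i(x) = i\}$ for each $i \in [d]$. Then $P = \bigcup_{i=1}^d P_i$. For a fixed $i$, the points of $P_i$ all have the property that no two of them share a common line in direction $i$, because if $x, y \in P_i$ both lay on the same axis-parallel line in direction $i$, then $y$ would witness $L_i(x) \cap P \supsetneq \{x\}$, contradicting $i(x) = i$. Hence $P_i$ contains at most one point on each axis-parallel line in direction $i$.

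The number of such lines intersecting $[n]^d$ is exactly $n^{d-1}$, indexed by the choice of the remaining $d-1$ coordinates in $[n]$. Therefore $|P_i| \le n^{d-1}$ for every $i$, and summing over the $d$ directions gives
\begin{align*}
|P| \le \sum_{i=1}^d |P_i| \le d \, n^{d-1},
\end{align*}
as claimed. There is no real obstacle here: the only subtlety is verifying that the ``no $d$-star'' condition applied to an arbitrary center $x \in P$ yields a lonely direction in the strong sense used above, which follows immediately from the definition since every one of the $d$ potential partners $x_1, \ldots, x_d$ must be present simultaneously for $x$ to be a star center.
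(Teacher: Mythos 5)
Your proof is correct and follows essentially the same argument as the paper: both exploit that each point must have a ``lonely'' axis direction and then charge points to axis-parallel lines, of which there are $d\,n^{d-1}$. Your partition into the classes $P_i$ is just a slightly more explicit bookkeeping of the paper's injection from points to empty lines.
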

	\begin{proof}
		Consider a point $p \in P$. There must be some axis-parallel line through $P$ that contains no other points (otherwise $p$ would be the center of a star). Thus, $|P|$ is at most the number of axis-parallel lines that intersect $P$, which is $d n^{d-1}$.
	\end{proof}
	
	We now adapt \Cref{p:rect-cover-Zar} to the weight function $a_d$.
	
	\begin{lemma} \label{p:hyperrect-cover-star}
		Let $k, d \in \N$, let $n = k^d$ and let $V \subseteq [n]^d$ such that $|V| \le cn$ for some constant $c \ge 1$. Then there is a hyperrectangle covering $\fQ$ that is valid with respect to $V$ such that $A_d(\fQ) \le (1+c) d n^{d - 1 + \sfrac{1}{d}}$.
	\end{lemma}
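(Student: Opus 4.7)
The plan is to mimic the proof of \Cref{p:rect-cover-Zar}, with \Cref{p:semi-rect-free} playing the role of the K\H{o}v\'ari--S\'os--Tur\'an bound. The delicate point is choosing the right side length for the cubes in the initial decomposition, so I would start by fixing this parameter. Set $m = k^{d-1}$, and let $\fQ_0$ be the regular tiling of $[n]^d$ into hypercubes of side $m$. Since $n/m = k^d/k^{d-1} = k$, there are exactly $k^d = n$ such cubes, and by \Cref{p:semi-rect-free} each has weight
\[
a_d(Q) \le d m^{d-1} = d k^{(d-1)^2} = d n^{(d-1)^2/d}.
\]
Hence $A_d(\fQ_0) \le n \cdot d n^{(d-1)^2/d} = d n^{1 + (d-1)^2/d} = d n^{d - 1 + 1/d}$, using that $1 + (d-1)^2/d = d - 1 + 1/d$.

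Next, exactly as in \Cref{p:rect-cover-Zar}, I would make $\fQ_0$ valid by repeatedly splitting: while some $v \in V$ lies in the interior of some current hyperrectangle $Q$, replace $Q$ by the two pieces obtained by cutting along an axis-aligned hyperplane through $v$. Each split adds one rectangle, so after at most $|V| \le cn$ splits the resulting covering $\fQ$ satisfies $|\fQ| \le n + cn = (1+c)n$, and every $v \in V$ lies on the boundary of every $Q \in \fQ$ containing it. Since each $Q \in \fQ$ is a subset of some cube in $\fQ_0$, and the weight $a_d$ is monotone under inclusion (a $d$-star-free point set in $Q$ is $d$-star-free in the containing cube), every $Q \in \fQ$ still satisfies $a_d(Q) \le d n^{(d-1)^2/d}$. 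Summing,
\[
A_d(\fQ) \le (1+c) n \cdot d n^{(d-1)^2/d} = (1+c) d n^{d-1+1/d},
\]
which is the desired bound.

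The only real obstacle is the choice of $m$: the na\"ive pick $m = k = n^{1/d}$ produces $n^{d-1}$ cubes each of weight $d n^{(d-1)/d}$, for a total of $d n^{d - 1/d}$, which overshoots the target whenever $d \ge 3$. The balanced pick $m = k^{d-1}$ makes the initial covering and the $|V|$ new rectangles contribute equally, both at the right order $n^{d - 1 + 1/d}$, and this is what forces the exponent to come out exactly as stated.
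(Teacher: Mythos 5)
Your proof is correct and follows essentially the same route as the paper: the same regular tiling of $[n]^d$ into $n$ hypercubes of side $m = k^{d-1} = n^{1-\sfrac{1}{d}}$, the same application of \Cref{p:semi-rect-free} to bound each cube's weight by $d\,n^{d-2+\sfrac{1}{d}}$, and the same splitting step adding at most $|V| \le cn$ rectangles of no larger weight. The remark about why the balanced choice of side length is forced is a nice addition but does not change the argument.
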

	\begin{proof}
		For convenience, denote $m = n/k = n^{1-\sfrac{1}{d}}$. Start with the regular decomposition of $[n]^d$ into $n$ \emph{hypercubes}, i.e.
		\begin{align*}
			& \fQ_0 = \{ Q_{i_1, i_2, \dots, i_d} \mid 0 \le i_j < m-1 \text{ for } j \in [d] \}, \text{ where} \\
			&  Q_{i_1, i_2, \dots, i_d} = [i_1 m+1,(i_1+1)m] \times [i_2 m+1,(i_2+1)m] \times \dots \times [i_d m+1,(i_d+1)m].
		\end{align*}
		The weight of one such cube is $m^{d-1} \le d n^{d-2+\sfrac{1}{d}}$, so $A_d(\fQ_0) \le d n^{d-1+\sfrac{1}{d}}$.
		
		For each $v \in V$ that is not yet on a rectangle boundary, we split the containing rectangle with an axis-parallel hyperplane through $v$. This adds no more than $|V| \le cn$ rectangles of weight at most $m^{d-1}$.
	\end{proof}
	
	Now let $V, P \subseteq [n]^d$. We say that $V$ \emph{semi-stabs} $P$ if for every $d$-star $T \subseteq P$, the rectangle $\rect(T)$ contains a point of $V$ in its interior. Adapting \Cref{p:stabbing} and using \Cref{p:hyperrect-cover-star} instead of \Cref{p:rect-cover-Zar} immediately yields the following.
	
	\begin{lemma} \label{p:semi-stabbing}
		Let $n = k^d$ for some $k \in \N$, and let $V, P \subseteq [n]^d$ such that $V$ semi-stabs $P$ and $|V| \le c n$ for some constant $c \ge 1$. Then
		\begin{align*}
			|P| \le (1+c) d n^{d-1+\sfrac{1}{d}}.
		\end{align*}
	\end{lemma}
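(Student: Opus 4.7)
The plan is to mimic almost exactly the proof of Lemma \ref{p:stabbing}, with Lemma \ref{p:hyperrect-cover-star} playing the role of Lemma \ref{p:rect-cover-Zar} and the weight function $a_d$ replacing $z$. First I would invoke Lemma \ref{p:hyperrect-cover-star} applied to $V$, which (since $|V| \le cn$ and $n = k^d$) produces a hyperrectangle covering $\fQ$ of $[n]^d$ that is valid with respect to $V$ and satisfies $A_d(\fQ) \le (1+c) d n^{d-1+\sfrac{1}{d}}$. The goal is then to show $|P| \le A_d(\fQ)$.

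For this I would argue by contradiction. Suppose $|P| > A_d(\fQ) = \sum_{Q \in \fQ} a_d(Q)$. Since $\fQ$ covers all integral points of $[n]^d \supseteq P$, the pigeonhole principle gives some $Q \in \fQ$ with $|P \cap Q| > a_d(Q)$. By the definition of $a_d$, the set $P \cap Q$ must then contain a $d$-star $T$. The rectangle $\rect(T)$ is the unique axis-parallel hyperrectangle whose corners are $T$, and since all points of $T$ lie in $Q$ and $Q$ is axis-aligned and convex, we have $\rect(T) \subseteq Q$.

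The semi-stabbing hypothesis yields a point $v \in V$ with $v \in \interior(\rect(T))$. The key (and only slightly delicate) step is observing that $\interior(\rect(T)) \subseteq \interior(Q)$: any $v$ lying in the open set $\interior(\rect(T))$ has an open neighborhood contained in $\rect(T) \subseteq Q$, hence $v$ is an interior point of $Q$ as well. Therefore $v \in V \cap \interior(Q)$, contradicting the validity of $\fQ$ with respect to $V$. This forces $|P| \le A_d(\fQ) \le (1+c) d n^{d-1+\sfrac{1}{d}}$.

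I do not expect any real obstacle here: the entire argument is a direct transplant of the stabbing/covering scheme already established for the $2$-dimensional induced-rectangle case, and all the heavy lifting has been done in Lemma \ref{p:hyperrect-cover-star} (the covering bound) and Lemma \ref{p:semi-rect-free} (the $a_d$ bound implicit in the covering's weight). The only place demanding any care is the interior-versus-boundary point above, but that is resolved immediately by the openness of $\interior(\rect(T))$.
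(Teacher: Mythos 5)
Your proof is correct and is precisely the adaptation the paper intends: it states that Lemma~\ref{p:semi-stabbing} follows by ``adapting \Cref{p:stabbing} and using \Cref{p:hyperrect-cover-star} instead of \Cref{p:rect-cover-Zar}'', which is exactly your covering-plus-pigeonhole argument. The interior containment $\interior(\rect(T)) \subseteq \interior(Q)$ that you flag is indeed the only point needing care, and your openness argument settles it.
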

	
	Finally, observe that if $\fS \subseteq \fR_d$ and $(P, \fS)$ is $d$-disjunct, then each star $T \subseteq P$ must be stabbed by some corner of a hyperrectangle in $\fS$. In particular, there must be a rectangle that contains only the center of $T$.
	
	Let $n = 4|\fS|$. There are $2^d |\fS| = 2^{d-2} n$ hyperrectangle corners. Thus, \Cref{p:semi-stabbing} with $c = 2^{d-2} \in \fO(1)$ implies $\rD_d^d(n) \in \fO(n^{d-1+\sfrac{1}{d}})$. Applying \Cref{p:upper-bound-gen} yields $\rD_t^d(n) \in \fO( n^{d-1+\sfrac{1}{t}})$ for $t \le d$ and applying monotonicity of $\rD_t^d(n)$ in $t$ yields $\rD_t^d(n) \in \fO( n^{d-1+\sfrac{1}{d}})$ for $t \ge d$. Together, we obtain:
	
	\restateDecompDisjUpperBound*

	\section{Grids with axis-parallel affine subspaces}\label{sec:grid-subspaces}
	
	Finally, we consider another generalization of the grid-line construction in \Cref{sec:grid-lines}. For $k, d, m \in \N$, let $P_d = [m]^d$ and let $\fP_{k,d}$ be the set of $k$-dimensional axis-parallel affine subspaces of $\R^d$ that intersect some point in $P_d$. In other words, $\fP_{k,d}$ is the collection of subsets of $P_d = [m]^d$ that we obtain by fixing $d-k$ coordinates to some values in $[m]$. Note that $\fL_d = \fP_{1,d}$. We first prove $(d-k)$-disjunctness of $(P_d, \fP_{k,d})$, then show how to construct an equivalent $(d-1)$-rectangle-induced set system. This slightly improves \Cref{p:hyperplane-rect-lower-bounds} in some cases.
	
	In the following, for a set $I \subseteq [d]$ and a point $p \in \R^d$, let $H_I(p)$ be the axis-parallel affine subspace that extends in the coordinates $I$ and contains $p$, i.e.
	\begin{align*}
		H_I(p) = \{ x \in \R^d \mid \forall i \notin I : x_i = p_i \}.
	\end{align*}
	
	\begin{lemma}
		$(P_d, \fP_{k,d})$ is $(d-k)$-disjunct.
	\end{lemma}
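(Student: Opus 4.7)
The plan is to prove the lemma by contradiction, mirroring the structure of the $k=1$ case established in Proposition~\ref{p:grid-line-disjunct} but using a pigeonhole-style argument on coordinate subsets instead of a direct incidence count.

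First, I would introduce, for any candidate counterexample $a \in P_d$ and $B \subseteq P_d \setminus \{a\}$ with $\fP_{k,d}[a] \subseteq \fP_{k,d}[B]$ and $|B| \le d-k$, the coordinate sets
\[
D_b = \{ i \in [d] \mid a_i \neq b_i \} \quad \text{for } b \in B.
\]
Since $b \neq a$, each $D_b$ is nonempty. The crucial observation, immediate from the definition of $H_I(\cdot)$, is that for a $k$-subset $I \subseteq [d]$, a point $b \in B$ lies in $H_I(a)$ if and only if $D_b \subseteq I$. Thus the assumption $\fP_{k,d}[a] \subseteq \fP_{k,d}[B]$ reads: for every $k$-subset $I \subseteq [d]$, some $b \in B$ satisfies $D_b \subseteq I$.

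Second, I would negate this by constructing a single bad $I$. Pick any representative $x_b \in D_b$ for each $b \in B$. Because $|B| \le d-k$, the collection $\{x_b \mid b \in B\}$ has at most $d-k$ distinct elements, so it can be extended to a $(d-k)$-subset $J \subseteq [d]$. Set $I = [d] \setminus J$, which has size $k$. For any $b \in B$ we then have $x_b \in J$, hence $x_b \notin I$, hence $D_b \not\subseteq I$. Therefore no $b \in B$ lies in $H_I(a)$, contradicting $\fP_{k,d}[a] \subseteq \fP_{k,d}[B]$ and completing the proof.

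I do not anticipate a serious obstacle here: once the translation ``$b \in H_I(a) \iff D_b \subseteq I$'' is in hand, the argument is a clean transversal/pigeonhole step, and the count $d-k$ is exactly what forces the extension of $\{x_b\}$ to a $(d-k)$-subset to succeed. The only point to state carefully is that this counting is tight in both directions, so the bound $d-k$ in the statement cannot be improved by this method; a matching non-disjunctness example (generalizing Proposition~\ref{p:grid-line-not-disjunct}) would use $d-k+1$ witnesses covering $[d]$, confirming the threshold is sharp.
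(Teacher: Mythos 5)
Your proof is correct and follows essentially the same route as the paper: both hinge on the observation that $b \in H_I(a)$ iff the set of coordinates where $b$ differs from $a$ is contained in $I$, and then exhibit a $k$-subset $I$ avoided by every point of $B$. The paper gets there by first replacing each $b$ with a point differing from $a$ in a single coordinate and then counting, whereas you pick one witness coordinate per $b$ directly; this is a minor streamlining of the same counting argument.
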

		
	\begin{proof}
		Suppose $(P_d, \fP_{k,d})$ is not $(d-k)$-disjunct. Then there is some 
		 $a \in P_d$ and $B \subseteq P_d \setminus \{a\}$ such that $\fP_{k,d}[a] \subseteq \fP_{k,d}[B]$ and $|B| \leq d-k$.		
		
		Let $b \in B$ and let $J(b)$ be the set of coordinates in which $a$ and $b$ differ. 
Observe that $\fP_{k,d}[a] = \{ H_I(a) \mid I \subseteq [d], |I| = k \}$, and $b \in H_I(a)$ if and only if $J(b) \subseteq I$.

We first observe that we may assume that all points in $B$ differ from $a$ in exactly one coordinate. Indeed, suppose there is some $b \in B$ such that $|J(b)| > 1$. Let $b' \in B$ be a point that differs from $a$ in an arbitrary single coordinate $j \in J(b)$. Then $J(b') = \{j\} \subset J(b) \subseteq I$, so point $b'$ hits all the elements of $\fP_{k,d}[a]$ that $b$ hits, so we can replace $b$ by $b'$ in $B$. We can iteratively replace all points of $B$ in this way, until the assumption holds. (The size of $B$ may only go down via this process.)

		Consequently, there are $d - |B| \geq d-(d-k) = k$ coordinates on which all points of $B$ agree with $a$. Let $I$ be such a set of coordinates of size $k$. Then %
		$H_I(a) \in \fP_{k,d}[a]$ and $H_I(a) \notin \fP_{k,d}[B]$, a contradiction.
	\end{proof}
	
	Observe that $|P_d| = m^d$ and $|\fP_{k,d}| = \binom{d}{k} m^{d-k}$. As axis-parallel affine subspaces are equivalent to rectangles for our purposes, plugging in $k = d-t$ already yields $\rD_t^d(n) \in \Omega( n^{d/t} )$. This bound, however, is at most as good as \Cref{p:hyperplane-rect-lower-bounds} in all cases. We strengthen it by reducing the dimension.
	
	\begin{lemma}\label{p:subspace-proj}
		For each $k \le d-2$, there is a $(d-1)$-rectangle-induced set system $(Q, \fS)$ that is combinatorially equivalent to $(P_d, \fP_{k,d})$.
	\end{lemma}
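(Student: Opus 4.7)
The plan is to generalize the embedding of Section~\ref{sec:grid-lines}, which uses two lex orderings (forward and backward) to map $(P_d, \fL_d)$ into a $2$-rectangle system. Here I use $d-1$ carefully chosen lex orderings to map $(P_d, \fP_{k,d})$ into a $(d-1)$-rectangle system, exploiting the hypothesis $k \le d - 2$. For $i \in [d-1]$, construct a permutation $\pi_i$ of $[d]$, written as a sequence $(i, d, \dots)$ where the remaining $d-2$ positions are filled by the elements of $[d-1] \setminus \{i\}$ in arbitrary order. This family satisfies the following covering property: for every $I \subseteq [d]$ with $|I| \le d - 2$ and every $l \in [d] \setminus I$, some $\pi_i$ ranks $l$ before every element of $I$. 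Indeed, if $l \in [d-1]$ then $\pi_l$ starts with $l$; if $l = d$, pick any $i \in [d-1] \setminus I$, which exists since $|I| \le d-2$, and note that $\pi_i$ starts with $i \notin I$ followed by $d$, so $d$ precedes all of $I$.

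Set $\epsilon = 1/(m+1)$ and define $f \colon P_d \to \R^{d-1}$ by $f(x)_i = \sum_{k=1}^d \epsilon^{k-1} x_{\pi_i(k)}$ for $i \in [d-1]$, where $\pi_i(k)$ denotes the $k$-th entry of the sequence $\pi_i$. For each $H \in \fP_{k,d}$, let $g(H)$ be the smallest axis-parallel $(d-1)$-rectangle containing $f(H \cap P_d)$. As in Section~\ref{sec:grid-lines}, the choice of $\epsilon$ ensures that $f(\cdot)_i$ strictly preserves the $\pi_i$-lex order on $P_d$, so $f$ is injective. The main claim is $f(x) \in g(H) \iff x \in H$ for $H = H_I(p)$; the forward direction is immediate. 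For the contrapositive of the reverse, fix $x \notin H$ and let $D_x = \{j : x_j \ne p_j\}$, so $D_x \cap ([d] \setminus I) \neq \emptyset$. Apply the covering property to some $l \in D_x \cap ([d] \setminus I)$ to obtain $\pi_i$ with $l$ preceding every element of $I$, then replace $l$ with the $\pi_i$-earliest element of $D_x \cap ([d] \setminus I)$; this element still precedes $I$. At every position of $\pi_i$ strictly earlier than $l$'s, the coordinate lies neither in $I$ (since $l$ precedes $I$) nor in $D_x$ (by minimality of $l$), so $x$ and every $y \in H$ agree at those positions. At $l$'s position, $y_l = p_l$ for every $y \in H$ (as $l \notin I$), while $x_l \ne p_l$; hence $x$ is strictly $\pi_i$-lex-outside $H$, so $f(x)_i$ lies outside the $i$-th coordinate range of $g(H)$.

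The main difficulty lies in the combinatorial design of the permutation family: a na\"ive linear encoding that collapses two coordinates, such as $y_{d-1} = m \cdot x_{d-1} + x_d$, fails because when one of those coordinates is fixed and the other free, the minimal enclosing interval of the image captures points of $P_d$ outside the subspace. The lex construction above succeeds by using $d-1$ distinct ``directions,'' each exposing a different type of coordinate mismatch. The bound $k \le d - 2$ is essential here: handling $k = d - 1$ would require, for every $l \in [d]$, some $\pi_i$ with $l$ in position~$1$, which is impossible with only $d - 1$ permutations.
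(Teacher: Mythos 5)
Your proof is correct and is essentially the paper's argument in more general clothing: the paper uses the map $f(x)_i = x_i + \varepsilon x_d$, which is exactly the first two levels of your lexicographic orderings $\pi_i = (i, d, \dots)$, and its case analysis ($i \in [d-1]\setminus I$ versus $i = d$, using $k \le d-2$ to pick some $j \in [d-1]\setminus I$) is precisely your covering property. The tails of your permutations are in fact never used, since the $\pi_i$-earliest element of $D_x \cap ([d]\setminus I)$ guaranteed by your covering property always sits in position $1$ or $2$ of $\pi_i$.
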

	\begin{proof}
		Let $\varepsilon = \frac{1}{n+1}$ and let $f : P_d \rightarrow \R^{d-1}$ be defined as follows:
		\begin{align*}
			f(x_1, x_2, \dots, x_d) = (x_1 + \varepsilon x_d, x_2 + \varepsilon x_d, \dots, x_{d-1} + \varepsilon x_d).
		\end{align*}
		Observe that $f(x)_i \le f(x')_i$ if and only if $x_i < x'_i$ or $x_i = x'_i \wedge x_d \le x'_d$.
		
		Similarly to the proof of \Cref{p:grid-line-lower-bounds}, we set $Q = f(P_d)$ and let $\fS = \{ g(H) \mid H \in \fP_{k,d} \}$, where $g(H) = \rect( f(H \cap P_d) )$. We {claim} that $g(H) \cap Q \subseteq f(H \cap P_d)$. This implies equivalence of $(Q, \fS)$ and $(P_d, \fP_{k,d})$.
		
		To prove the claim, let $H_I(a) \in \fP_{k,d}$ and let $x \in P_d$ such that $f(x) \in g(H_I(a))$. We need to show that $x \in H_I(a)$, that is, $x_i = a_i$ for all $i \in [d] \setminus I$.
		
		First, consider some $i \in [d-1] \setminus I$. We know that there are $b, c \in H_I(a)$ such that $f(b)_i \le f(x)_i \le f(c)_i$. As $b_i = c_i = a_i$, our observation above directly implies $x_i = a_i$ (the only other possibility is $b_i < x_i < c_i$, a contradiction).
		
		It remains to show the claim for $i = d$, if $d \notin I$. We know that $|I| = k \le d - 2$, so there is some $j \in [d-1] \setminus I$. Consider some $b, c \in H_I(a)$ such that $f(b)_j \le f(x)_j \le f(c)_j$. Our observation implies that $x_j = a_j$ and, in particular, $x_d = a_d$. This concludes the proof of the claim.
	\end{proof}
	
	Using \Cref{p:subspace-proj}, we finally obtain $D_t^{d-1}(n) \in {\Omega( n^{d/t} )}$, and thus:
	
	\restateGridSubspacesLowerBound*
	
	\section{Conclusion}
	
	We have shown several bounds on the density of $t$-separable and $t$-disjunct set systems induced by axis-parallel boxes. In the case $t=1$, we described an optimal construction, showing that the upper bound cannot be asymptotically improved (Theorem~\ref{p:hyperplane-rect-lower-bounds}). For all non-trivial cases, namely if $t, d \ge 2$, our lower bounds improve upon the trivial $\Omega(n)$ and our upper bounds improve upon the trivial $\fO(n^d)$. Still, apart from special cases, there are large gaps between the current upper and lower bounds, which provides an interesting target for future research. As a particular challenge, consider the planar, $2$-separable case, where we have $c \cdot n^{\sfrac{3}{2}} \leq \rS_2^2(n) \leq c' \cdot n^{\sfrac{5}{3}}$, for some $c,c'>0$.
	
	Intuitively, a larger value of $t$ makes the group testing problem more difficult, so $\rS_{t+1}^d(n) \leq \rS_t^d(n)$, as stated in Proposition~\ref{p:monotone}. It seems likely that a stronger (asymptotic) separation holds.  %
	This question remains open even in the planar case. %

	\begin{openQuestion}
		Is $\rS_{t+1}^2(n) \in o(\rS_t^2(n))$ for all $t \geq 2$?
	\end{openQuestion}

Another intriguing question is the following.

	\begin{openQuestion}
 Is $\rD_{t}^d{(n)} \in o(\rS_t^d(n))$ for some $t,d \geq 2$?
	\end{openQuestion}
	
	As discussed in \Cref{sec:intro}, worst-case configurations of points admit only trivial bounds in our setting. It may be interesting to study the problem for \emph{random} configurations of points.

	\newpage 
	
	\bibliography{ggt}{}
	\bibliographystyle{plain}
	
	\newpage
	
	\section*{Appendix}
	\appendix
	
	\section{At most \texorpdfstring{$t$}{t} defectives}\label{appb}

The restriction to have \emph{exactly} $t$ defective items may seem artificial. Perhaps more natural is to let $t$ be an upper bound on the number of defectives. %
This variant of the problem is also well studied in the group testing literature. We summarize the main results that carry over to our geometric setting.

A set system $(X, \fS)$ is called \emph{$\bar{t}$-separable} if there are no two distinct $Y, Z \subseteq X$ such that $|Y| \leq t$ and $|Z| \leq t$ and $\fS[Y] = \fS[Z]$. Observe that the $\bar{t}$-separability differs from $t$-separability, by allowing the set of defectives to be \emph{smaller} than $t$. Thus $\bar{t}$-separability captures the problem of group testing with \emph{at most} $t$ defectives.

One can similarly modify the definition of $t$-disjunctness, and say that a set system $(X, \fS)$ is $\bar{t}$-disjunct if there is no $Y \subseteq X$ and $x \in X \setminus Y$ such that $|Y| \leq t$ and $\fS[x] \subseteq \fS[Y]$. 

The two notions of disjunctness turn out to be equivalent~\cite{DuHwang1993}. The two notions of separability, however, are different, satisfying the following relations:
\begin{lemma}[Du and Hwang \cite{DuHwang1993}]\label{lem13}
		For each set system $(X, \fS)$ and each $t \ge 1$,
		\begin{align*}
		& \overline{(t+1)}\text{-separable} \implies t\text{-disjunct} \implies \overline{t}\text{-separable}  \implies t\text{-separable}.
		\end{align*}
	\end{lemma}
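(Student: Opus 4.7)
My plan is to prove the three implications of Lemma~\ref{lem13} in reverse order, from weakest to strongest, since each proof is a short contrapositive or inclusion argument. In all three cases the strategy is to take a witness that the stronger property fails and directly manufacture from it a witness that the weaker property fails.

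First, for $\overline{t}\text{-separable} \implies t\text{-separable}$: I would simply observe that the condition defining $t$-separability is a restriction of the condition defining $\overline{t}$-separability to pairs $Y,Z$ with $|Y|=|Z|=t$, so the implication is immediate from the definitions.

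Next, for $t\text{-disjunct} \implies \overline{t}\text{-separable}$: I would argue by contrapositive. Assume $(X,\fS)$ is not $\overline{t}$-separable, i.e.\ there exist distinct $Y,Z \subseteq X$ with $|Y|,|Z| \le t$ and $\fS[Y]=\fS[Z]$. Since $Y \ne Z$, pick WLOG some $x \in Y \setminus Z$. Then $\fS[x] \subseteq \fS[Y] = \fS[Z]$, and since $x \notin Z$ with $|Z| \le t$, this exhibits a violation of $t$-disjunctness.

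Finally, for $\overline{(t+1)}\text{-separable} \implies t\text{-disjunct}$: again by contrapositive. Suppose $(X,\fS)$ is not $t$-disjunct, so some $Y \subseteq X$ with $|Y| \le t$ and some $x \in X \setminus Y$ satisfy $\fS[x] \subseteq \fS[Y]$. Set $Y' = Y \cup \{x\}$; then $|Y'| \le t+1$, $|Y| \le t \le t+1$, and $Y \ne Y'$ since $x \notin Y$. The inclusion $\fS[x] \subseteq \fS[Y]$ yields $\fS[Y'] = \fS[Y] \cup \fS[x] = \fS[Y]$, so $(Y,Y')$ witnesses a failure of $\overline{(t+1)}$-separability.

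There is no real obstacle here; the only subtlety is keeping track of which set is allowed to be smaller than the nominal size bound, and noting that the step $\overline{(t+1)} \Rightarrow t\text{-disjunct}$ essentially exploits the extra slack of one element provided by the $t+1$ bound to absorb the disjunctness-witness $x$ into a properly nested pair $Y \subsetneq Y'$ with identical test signatures.
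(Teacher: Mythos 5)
Your three implications are the standard ones, and since the paper states this lemma only as a citation to Du and Hwang without giving a proof, there is nothing internal to compare against; your argument is essentially the canonical one and is correct. The single point that needs a word is in the step $t$-disjunct $\implies \overline{t}$-separable: the paper's definition of $t$-disjunctness quantifies over sets $Y$ of size \emph{exactly} $t$, whereas your witness $Z$ only satisfies $|Z| \le t$. This is repaired by padding: since $|X| > t$, you can enlarge $Z$ to a set $Z' \supseteq Z$ with $|Z'| = t$ and $x \notin Z'$, and monotonicity gives $\fS[x] \subseteq \fS[Z] \subseteq \fS[Z']$, so $(Z', x)$ is a genuine violation of $t$-disjunctness. (This is precisely the observation, also recorded in the paper's appendix, that $t$-disjunctness and $\bar{t}$-disjunctness coincide.) With that one sentence added, the proof is complete.
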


	We define $\overline{\rS^d_t}(n)$ and $\overline{\rD^d_t}(n)$ to denote the maximum $m$ where there exist a set $X \subseteq \R^d$ of cardinality $m$ and a set $\fS \subseteq \fR_d$ such that $(X, \fS)$ is $\bar{t}$-separable, resp.\ $\bar{t}$-disjunct. \Cref{lem13} implies the following inequalities:
	\begin{align*}
	\overline{\rD^d_t}(n)~~  =  ~~ \rD^d_t(n)~~ \leq ~~ & \overline{\rS^d_t}(n) ~~  \leq  ~~  \rS^d_t(n), \text{ ~~~ and} \\		
		& \overline{\rS^d_t}(n)~~  \leq ~~ \rD^d_{t-1}(n).
\end{align*}

In words, the maximum number of items that $n$ rectangle-tests can handle in $d$ dimensions, with \emph{at most} $t$ defectives is sandwiched between the corresponding quantities with \emph{exactly} $t$ defectives, respectively, with the stronger, disjunctness condition. Alternatively, bounds for disjunctness also hold for the modified notion of separability, up to a possible shift of $t$ by one. 

\newpage
\section{Numerical bounds in special cases}\label{appc}

\begin{figure}[h]
\begin{center}
    \begin{tabular}{ | r | c | c | c | c | c |}
    \hline
     \backslashbox{$t$}{$d$}& $1$ & $2$ & $3$ & $4$ & $5$ \\ \hline
     $1$  & $1$ & $2$ & $3$ & $4$ & $5$ \\ \hline
     $2$  & $1$ & $[1.5,1.67]$ & $[2,2.67]$ & $[2.5,3.67]$ & $[3,4.67]$ \\ \hline
     $3$  & $1$ & $1.5$ & $[2,2.5]$ & $[2,3.5]$ & $[2,4.5]$ \\ \hline
     $4$  & $1$ & $[1.33,1.5]$ & $[1.5,2.5]$ & $[1.5,3.5]$ & $[2,4.5]$ \\ \hline
     $5$  & $1$ & $[1.33,1.5]$ & $[1.33,2.5]$ & $[1.33,3.5]$ & $[1.5,4.5]$ \\ \hline

    \end{tabular}
\end{center}

\begin{center}
    \begin{tabular}{ | r | c | c | c | c | c |}
    \hline
     \backslashbox{$t$}{$d$}& $1$ & $2$ & $3$ & $4$ & $5$ \\ \hline
     $1$  & $1$ & $2$ & $3$ & $4$ & $5$ \\ \hline
     $2$  & $1$ & $1.5$ & $[2,2.5]$ & $[2.5,3.5]$ & $[3,4.5]$ \\ \hline
     $3$  & $1$ & $[1.33,1.5]$ & $[1.5,2.34]$ & $[2,3.34]$ & $[2,4.34]$ \\ \hline
     $4$  & $1$ & $[1.25,1.5]$ & $[1.33,2.34]$ & $[1.5,3.25]$ & $[2,4.25]$ \\ \hline
     $5$  & $1$ & $[1.2,1.5]$ & $[1.2,2.34]$ & $[1.2,3.25]$ & $[1.5,4.2]$ \\ \hline

    \end{tabular}
\end{center}
\caption{Polynomial degree of $\rS_t^d(n)$ (above) and $\rD_t^d(n)$ (below) for small values of $t$ and $d$. Intervals indicate best lower and upper bounds, rounded to two decimal digits. A single number in a cell indicates a tight bound.}
	\end{figure}

\end{document}